\documentclass[12pt,reqno]{amsart}
\usepackage{float}
\usepackage{textcomp}
\usepackage{geometry} 
\usepackage{comment}

\usepackage{indentfirst}
\usepackage{amsmath}
\usepackage{amsfonts}
\usepackage{amssymb}
\usepackage{amsthm}

\usepackage{subcaption}
\usepackage{graphicx} 
\usepackage{wrapfig} 

\usepackage{mathtools}

\usepackage{booktabs}
\usepackage{enumerate}
\usepackage{enumitem}
\setlist[itemize]{noitemsep} 

\usepackage{xcolor}

\usepackage{hyperref} 
\usepackage{indentfirst}
\numberwithin{equation}{section}

\usepackage{amsthm}

\usepackage{amsthm}
\newtheorem{prop}{Proposition}[section]

\newtheorem{lem}[prop]{Lemma}

\newtheorem{conj}[prop]{Conjecture}
\newtheorem{cor}[prop]{Corollary}
\newtheorem{thm}[prop]{Theorem}

\theoremstyle{remark}
\newtheorem{rem}{Remark}

\newcommand{\bC}{\mathbb{C}}

\newcommand{\bZ}{\mathbb{Z}}
\newcommand{\bN}{\mathbb{N}}

\newcommand{\ii}{\infty}
\newcommand{\ap}{\alpha}
\newcommand{\be}{\beta}
\newcommand{\de}{\delta}

\newcommand{\ga}{\gamma}

\allowdisplaybreaks[4]

\begin{document}

\title[Proofs of some conjectures on line defect half-indices]{Proofs of some conjectures of Okazaki and Smith on line defect half-indices of ${\rm SU}(N)$ Chern--Simons theories}
\author{Liuquan Wang and Yiyang Yue}
\address[L.\ Wang]{School of Mathematics and Statistics, Wuhan University, Wuhan 430072, Hubei, People's Republic of China}
\email{wanglq@whu.edu.cn;mathlqwang@163.com}

\address[Y.\ Yue]{School of Mathematics and Statistics, Wuhan University, Wuhan 430072, Hubei, People's Republic of China}
\email{yiyangyue@whu.edu.cn}

\subjclass[2020]{05A30,  33D15, 33D60, 33D90, 11F03, 81T13, 81T45}

\keywords{Chern--Simons theories; Wilson lines; half-index; matrix integrals; Rogers--Ramanujan functions; antisymmetric formal Laurent series}

\begin{abstract}
Okazaki and Smith discovered many elegant formulas expressing some matrix integrals  as some celebrated $q$-series such as the Rogers--Ramanujan functions or Jacobi theta functions. These integrals arise as Wilson line defect half-indices of 3d $\mathcal{N}=2$ supersymmetric ${\rm SU}(N)$ Chern--Simons theories. We evaluate them by carefully calculating the constant terms of some infinite products. Along the way we use  some crucial facts about antisymmetric multivariate formal Laurent series. Consequently, we prove three general conjectures of Okazaki and Smith which provide explicit formulas for half indices of the ${\rm SU}(N)_{-N-k}$ ($k=0,1/2,1$) Chern-Simons theories. During the process, we extend these ${\rm SU}(N)$ formulas to include one additional parameter. Furthermore, we generalize the ${\rm SU}(N)_{-N-1/2}$ and ${\rm SU}(N)_{-N-1}$ conjectures by calculating the corresponding half-indices of Wilson lines of arbitrary charge. As a special instance of our generalizations, we also confirm the ${\rm SU}(3)_{-4}$ conjecture of Okazaki and Smith.
\end{abstract}

\maketitle
\tableofcontents

\section{Introduction and main results}\label{sec-intro}
In 3d $\mathcal{N}=2$ supersymmetric field theories with boundary, half-indices are one of the main tools for analyzing and verifying dualities of bulk-boundary systems, which only depend on boundary conditions of the given field and can be written as a specific supersymmetric partition function. For example, it can be used to examine the spectra of BPS local operators in the bulk-boundary system if the given field is with the BPS boundary conditions.

 As a generalization of half-indices, field theorists studied line defect half-indices, that is, the corresponding half-index is considered on the boundary with the endpoints of line defects. Then the corresponding line defect operator should be inserted into the half-index formula in this case. Similarly to  half-indices, line defect half-indices can characterize the degrees of freedom of the boundary and defect, and they are important tools to test the duality of ``bulk-boundary-line'' systems. By studying line defect half-indices, we can explore the spectra of line defects in 3d bulk theories and the structure of the induced boundary.

The theory of $q$-series played an important role in the evaluation of half-indices. For instance, one may find integral representations for half-indices which provide interpretations of some important identities related to root systems including the Askey--Wilson type $q$-beta integrals \cite{Askey-Wilson}. Recent works of Okazaki and Smith \cite{OS-23JHEP,OS24,OS-24JHEP} provide many such examples. In particular, they found that the line defect half-indices for the 3d $\mathcal{N}=2$ supersymmetric ${\rm SU}(N)$ Chern--Simons (CS) theories with Neumann boundary conditions for gauge fields can be elegantly represented by some well-known $q$-series functions. The half-indices are represented as multi-dimensional matrix integrals, and they discovered many interesting identities expressing  such integrals as elegant $q$-series or eta products.

We will focus on evaluations of half-indices of ${\rm SU}(N)$ CS theories in  the work of Okazaki and Smith \cite{OS-24JHEP}, and we refer the reader to \cite{OS-24JHEP} for details of their physical interpretations. Let $N_f$ and $N_a$ denote the number of fundamental and anti-fundamental 3d chiral multiplets with Neuman boundary conditions with $R$-charge 1, respectively, and let $M$ denote the number of fundamental 2d Fermi multiplets with $R$-charge 0.  To cancel the gauge anomaly, we take the CS level $-N-k$ where \cite[(2.2)]{OS-24JHEP}
\begin{equation}
    k=-\frac12(N_f+N_a)+M. \label{CS-level SUN-N 2}
\end{equation}
The initial three basic cases $k=0,1/2$ and $1$ are particularly interesting,  as we will soon see that there are elegant formulas for the half-indices defined by
\begin{align}
        &\mathbb{II}_{\mathcal{N}}^{{\rm SU}(N)_{-N}}(q) :=\frac{(q;q)_{\ii}^{N-1}}{N!} \oint \bigg( \prod_{i=1}^{N-1} \frac{ds_i}{2\pi is_i} \bigg) \prod_{i \ne j}^N (s_is_j^{-1};q)_{\ii}, \\
        & \mathbb{II}_{\mathcal{N}}^{{\rm SU}(N)_{-N-1/2}}(x;q) :=  \frac{(q;q)_{\ii}^{N-1}}{N!} \oint \prod_{i=1}^{N-1} \frac{ds_i}{2\pi is_i} \prod_{i \ne j}^N (s_is_j^{-1};q)_{\ii} \prod_{i=1}^N (q^{\frac12}s_ix;q)_{\ii},\\
       & \mathbb{II}_{\mathcal{N}}^{{\rm SU}(N)_{-N-1}}(x;q):=  \frac{(q;q)_{\ii}^{N-1}}{N!} \oint \bigg( \prod_{i=1}^{N-1} \frac{ds_i}{2\pi is_i} \bigg) \prod_{i \ne j}^N (s_is_j^{-1};q)_{\ii} \prod_{i=1}^N(q^{\frac12}s_i^{\pm} x^{\pm};q)_{\ii}.
\end{align}
Here the notation $(a^{\pm};q)_\infty$ stands for $(a,a^{-1};q)_\infty$ where we use the following standard $q$-series notation:
\begin{align}
(a_1,a_2,\dots,a_m;q)_n:=\prod\limits_{i=1}^m\prod\limits_{k=0}^{n-1}(1-a_iq^k), \quad |q|<1, \quad n\in \mathbb{N}\cup \{\infty\}.
\end{align}

By introducing BPS line operators that are perpendicular to the boundary, one can modify the half-index to give line defect correlators. In particular, if we introduce the Wilson line operator $W_\mathcal{R}$ associated with the representation $\mathcal{R}$ of the gauge group, such line defect correlator is calculated by inserting the associated character $\chi_\mathcal{R}$ into the matrix integral. In the case of ${\rm SU}(N)$ we take a basis for the set of Wilson lines $W_m$ labeled by the power symmetric functions of degree $m$
\begin{align}
p_m(s) = \sum_{i=1}^Ns_i^m
\end{align}
under the restriction  $\prod\limits_{i=1}^N s_i=1$. Following \cite{OS24,OS-24JHEP}, we call the Wilson line labeled by $p_m(s)$ the charge-$m$ Wilson line, and we denote the half-index in the presence of such a Wilson line by $\langle W_m\rangle$. In particular, the Wilson line corresponding to ${\rm SU}(N)_{-N-k}$ with $k\in \{0,1/2,1\}$ is given by
\begin{align}\label{intro-Wm}
   &\langle W_m\rangle^{{\rm SU}(N)_{-N-k}}(x;q):=\frac{(q;q)_{\ii}^{N-1}}{N!} \oint \bigg( \prod_{i=1}^{N-1} \frac{ds_i}{2\pi is_i} \bigg) \prod_{i\neq j}^N (s_is_j^{-1};q)_{\ii} \times \Big( \prod_{i=1}^N(q^{\frac12}s_i x;q)_{\ii} \Big)^{\epsilon_k}  \nonumber \\
    &\qquad \qquad \qquad \qquad \qquad \qquad  \times \Big(\prod_{i=1}^N(q^{\frac12}s_i^{-1} x^{-1};q)_{\ii}\Big)^{\delta_k} \times \sum_{i=1}^Ns_i^m
\end{align}
where $(\epsilon_k,\delta_k)=(0,0)$ for $k=0$, $(1,0)$ for $k=1/2$ and $(1,1)$ for $k=1$. It follows by definition that for $k\in \{0,1/2,1\}$,
\begin{align}
    \langle W_0\rangle^{{\rm SU}(N)_{-N-k}}=N  \mathbb{II}_{\mathcal{N}}^{{\rm SU}(N)_{-N-k}}.
\end{align}

An easy but useful observation is that evaluating half-indices amounts to calculating constant terms of the integrands. Using Euler's $q$-exponential identities (see e.g.~\cite[Corollary 2.2]{Andrews-book})
\begin{align}\label{euler1}
 \frac{1}{(z;q)_{\ii}} = \sum_{n=0}^{\ii} \frac{z^n}{(q)_n} \quad (|z|<1), \quad (-z;q)_{\ii} = \sum_{n=0}^{\infty}\frac{z^nq^{n(n-1)/2}}{(q)_n}
\end{align}
and the Jacobi triple product identity (see e.g.~\cite[Theorem 2.8]{Andrews-book})
\begin{align}\label{jtpi1}
     (q;q)_{\infty}(z^{-1};q)_{\infty}(qz;q)_{\infty} = \sum_{m=-\infty}^{\infty}(-1)^mq^{m(m+1)/2}z^m,
\end{align}
we can expand the integrand in \eqref{intro-Wm} as formal Laurent series in the variables $s_1^{\pm},\dots,s_{N-1}^{\pm}$.
Throughout this paper, for any formal power series $F(x)$  in $\bC[[x_1^{\pm},\dots,x_n^{\pm}]]$ and any $\ap = (\ap_1,\dots,\ap_n) \in \bZ^n$, we use $[x^\ap ]F$ to denote the coefficient of $x^{\ap} = x_1^{\ap_1}\cdots x_n^{\ap_n}$ in $F$. Then the half-index associated with the Wilson line can be expressed as a constant term of the form:
\begin{align}\label{eq-CT}
    \langle W_m\rangle^{{\rm SU}(N)_{-N-k}}(x;q)&=\frac{(q;q)_{\ii}^{N-1}}{N!}  [(s_1s_2\cdots s_{N-1})^0] \prod_{i\neq j}^N (s_is_j^{-1};q)_{\ii} \times \Big( \prod_{i=1}^N(q^{\frac12}s_i x;q)_{\ii} \Big)^{\epsilon_k}  \nonumber \\
    &\qquad \times \Big(\prod_{i=1}^N(q^{\frac12}s_i^{-1} x^{-1};q)_{\ii}\Big)^{\delta_k} \times \sum_{i=1}^Ns_i^m.
\end{align}

Okazaki and Smith \cite{OS-24JHEP} first considered the line defect half-indices for the 3d $\mathcal{N}=2$ supersymmetric $SU(2)$ CS theory with Wilson line operators and Neumann boundary conditions for the vector multiplet. Among other results, they obtained the following: \\
(1) For $k=-2$, we have \cite[Section 3.1]{OS-24JHEP}
    \begin{align}\label{eq-SU22}
  \langle W_{2m}\rangle^{{\rm SU}(2)_{-2}}(q)=(-1)^m\big(q^{m(m-1)/2}+q^{m(m+1)/2}\big), \quad  \langle W_{2m+1}\rangle^{{\rm SU}(2)_{-2}}(q)=0.
    \end{align}
(2) For $k=-5/2$ we have \cite[(3.16) and (3.27)]{OS-24JHEP}
    \begin{align}
        \mathbb{II}_{\mathcal{N}}^{{\rm SU}(2)_{-5/2}}(x;q)&=\sum_{n=0}^\infty \frac{q^{n^2}x^{2n}}{(q;q)_n}, \label{intro-SU252-Pi} \\
        \langle W_1\rangle^{{\rm SU}(2)_{-5/2}}(x;q)&=-q^{\frac{1}{2}}\sum_{n=0}^\infty \frac{q^{n(n+1)}x^{2n+1}}{(q;q)_n}. \label{intro-SU252-W1}
    \end{align}
The two functions on the right side are essentially the famous Rogers--Ramanujan functions. They only provided proof for the $x=1$ case. \\
(3) For $k=-3$ they gave explicit calculations for $\langle W_k \rangle^{{\rm SU}(2)_{-3}}(x;q)$ and found that the results depend on $k$ modulo 6. In particular, we have \cite[(3.66) and (3.71)]{OS-24JHEP}
    \begin{align}
        \mathbb{II}_{\mathcal{N}}^{{\rm SU}(2)_{-3}}(x;q)&=\frac{1}{(q;q)_\infty}\sum_{n\in \mathbb{Z}}q^{n^2}x^{2n}, \label{SU23-Pi} \\
        \langle W_1\rangle^{{\rm SU}(2)_{-3}}(x;q)&=-\frac{1}{(q;q)_\infty}\sum_{n\in \mathbb{Z}}q^{n^2+n+\frac{1}{2}}x^{2n+1}. \label{SU23-W1}
    \end{align}

They also studied line defect half-indices for the 3d $\mathcal{N}= 2$ ${\rm SU}(3)$ CS theory with Wilson lines and Neumann boundary conditions for the vector multiplet.\\
(1) For $k=-3$  it was proved that \cite[(4.11)]{OS-24JHEP}
    \begin{align}\label{SU33-Wk}
        \langle W_m\rangle^{{\rm SU}(3)_{-3}}(q)=\left\{\begin{array}{ll}
q^{\frac{m(m-3)}{9}}\Big(1+q^{\frac{m}{3}}+q^{\frac{2m}{3}}\Big), & m\equiv 0 \pmod{3}, \\
0, & \text{otherwise}.
        \end{array}\right.
    \end{align}
(2) For $k=-7/2$ it was found without proof that \cite[(4.17) and (4.19)]{OS-24JHEP}
    \begin{align}
        \mathbb{II}_{\mathcal{N}}^{{\rm SU}(3)_{-7/2}}(x;x^{-1};q)&=\sum_{n=0}^\infty \frac{(-1)^nq^{\frac{3}{2}n^2}x^{3n}}{(q;q)_n}, \\
        \langle W_1\rangle^{{\rm SU}(3)_{-7/2}}(x;x^{-1};q)&=\sum_{n=0}^\infty \frac{(-1)^nq^{\frac{3}{2}n^2+2n+1}x^{3n+2}}{(q;q)_n}.
    \end{align}
(3) For $k=-4$, it was found without proof that \cite[(4.23), (4.27) and (4.29)]{OS-24JHEP}
    \begin{align}
        \mathbb{II}_{\mathcal{N}}^{{\rm SU}(3)_{-4}}(x;q)&=\frac{1}{(q;q)_\infty}\sum_{n\in \mathbb{Z}}(-1)^nq^{\frac{3}{2}n^2}x^{3n}, \\
        \langle W_1\rangle^{{\rm SU}(3)_{-4}}(x;q)&=\frac{1}{(q;q)_\infty}\sum_{n\in \mathbb{Z}}(-1)^nq^{\frac{3}{2}n^2+2n+1}x^{3n+2}, \\
         \langle W_2\rangle^{{\rm SU}(3)_{-4}}(x;q)&=-\frac{1}{(q;q)_\infty}\sum_{n\in \mathbb{Z}}(-1)^nq^{\frac{3}{2}n^2+2n+1}x^{-3n-2}.  \label{SU34-W2}
    \end{align}
    They also found that $\langle W_m\rangle^{{\rm SU}(3)_{-4}}$ ($3\leq m\leq 12$) can be expressed by $\mathbb{II}_{\mathcal{N}}^{{\rm SU}(3)_{-4}}$, $\langle W_1\rangle^{{\rm SU}(3)_{-4}}$ and $\langle W_2\rangle^{{\rm SU}(3)_{-4}}$. Based on these formulas, they \cite[Section 4.3.1, p.~29]{OS-24JHEP} made the following conjecture.
\begin{conj}\label{conj-SU34}
The general case of $\langle W_m\rangle_{\mathcal{N},1,0}^{{\rm SU}(3)_{-4}}$ is given by the following:
\begin{itemize}
    \item A factor $\left\{\begin{array}{ll}
    \mathbb{II}_{\mathcal{N}}^{{\rm SU}(3)_{-4}}, &m\equiv 0 \pmod{3}, \\
    \langle W_1\rangle^{{\rm SU}(3)_{-4}}, & m\equiv 1\pmod{3}, \\
    \langle W_2\rangle^{{\rm SU}(3)_{-4}}, &m\equiv 2 \pmod{3},\end{array}\right.$
    \item A factor $(1+q^{m/4}+q^{m/2})$ for $m\equiv 0 \pmod{4}$ and $m\neq 0$,
    \item A factor of a single power of $q$ and a sign $\pm 1$.
\end{itemize}
\end{conj}

Okazaki and Smith \cite[Section 5]{OS-24JHEP} then proposed conjectures for half-indices for the general ${\rm SU}(N)$ gauge group and numerically verified them for some $N$. We will prove all the conjectures with CS level $-N-k$ ($k=0,1/2,1$) and restate them as the following theorems. For the case ${\rm SU}(N)_{-N}$ the following formula was conjectured in  \cite[(5.1)]{OS-24JHEP}.
\begin{thm}[The ${\rm SU}(N)_{-N}$ conjecture]\label{thm-SUN-N}
For any integer $m$ we have
    \begin{equation}
    \begin{aligned}
        &\langle W_m\rangle^{{\rm SU}(N)_{-N}}(q)
        = \begin{cases}
        (-1)^{\ell(N-1)} q^{(N-1)\ell(\ell-1)/2} \frac{1-q^{\ell N}}{1-q^\ell}, & m=\ell N, \\
        0, & m \ne \ell N,
    \end{cases} \quad \ell \in \bZ.
    \end{aligned}
    \end{equation}
\end{thm}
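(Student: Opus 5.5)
The plan is to turn the integral into a pairing of two antisymmetric Laurent series and read off the constant term combinatorially.

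\textbf{Step 1 (rewrite the integrand).} Split the product over ordered pairs:
\[
\prod_{i\ne j}^N (s_is_j^{-1};q)_\ii=\prod_{i<j}(1-s_j/s_i)\cdot\prod_{i<j}(s_i/s_j;q)_\ii\,(qs_j/s_i;q)_\ii .
\]
The second factor, multiplied by $(q;q)_\ii^{N-1}$, is the affine $A_{N-1}^{(1)}$ Weyl--Kac (Macdonald) denominator. Under the restriction $s_1\cdots s_N=1$ it expands as
\[
\sum_{\lambda\in\bZ^N,\ \sum\lambda_i=0}\ \sum_{w\in S_N}\operatorname{sgn}(w)\, q^{\frac N2|\lambda|^2+\langle\lambda,\rho\rangle}\, s^{\,w(\rho+N\lambda)} ,
\qquad \rho=(N-1,N-2,\dots,0),
\]
up to a normalizing monomial that I would fix by comparing lowest terms. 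If I prefer not to quote Macdonald, the same expansion follows from \eqref{jtpi1} applied to each pair together with antisymmetrization, which is the route suggested by the paper's emphasis on antisymmetric formal Laurent series.

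The first factor is, up to a monomial, the Vandermonde alternant $a_\rho(s^{-1})=\sum_{\sigma}\operatorname{sgn}(\sigma)s^{-\sigma\rho}$. Multiplying it by $p_m(s)$ and using the Murnaghan--Nakayama rule in its simplest form gives
\[
a_\rho(s)\,p_m(s)=\sum_{i=1}^N a_{\rho+me_i}(s).
\]
I will apply this on whichever side $p_m$ is naturally attached to; inverting all variables only replaces $m$ by $-m$.

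\textbf{Step 2 (pairing lemma).} The key fact about antisymmetric series is the following. For exponent vectors $\alpha,\beta$ with distinct entries, the constant term of $a_\alpha(s^{-1})\,a_\beta(s)$ in $s_1,\dots,s_{N-1}$, after substituting $s_N=(s_1\cdots s_{N-1})^{-1}$, equals $N!\operatorname{sgn}(\pi)$ if $\beta=\pi\alpha+c(1,\dots,1)$ for some $\pi\in S_N$ and $c\in\bZ$, and equals $0$ otherwise. The shift by $c(1,\dots,1)$ is exactly the effect of the ${\rm SU}(N)$ constraint. The factor $N!$ then cancels the prefactor $1/N!$.

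\textbf{Step 3 (combinatorics).} Now $\langle W_m\rangle$ becomes a signed sum over $\lambda$ and over $i\in\{1,\dots,N\}$, subject to the condition that $\rho+me_i$ is a permutation of $\rho+N\lambda$ up to a constant shift $c(1,\dots,1)$.

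First look at residues. The vector $\rho+N\lambda+c$ has entries in $N$ distinct residue classes mod $N$, one in each. The vector $\rho+me_i$ has this property only if $m\equiv 0\pmod N$. This gives the vanishing for $m\ne \ell N$.

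For $m=\ell N$, each $i$ gives exactly one admissible $(\lambda,c)$. Concretely, the entry $N-i$ of $\rho$ moves to $N-i+\ell N$, and the entries of $\rho+N\lambda$ (which are $\equiv N-j$ mod $N$) must match after the shift. Working mod $N$ this forces $c=\ell$, $\lambda_i=\ell-\ell=0$ for $i$ and $\lambda_j=-\ell$ for the others, up to the normalization $\sum\lambda_j=0$ absorbed by the shift. I would then compute the quadratic exponent $\frac N2|\lambda|^2+\langle\lambda,\rho\rangle$ and the sign of the sorting permutation. The expected outcome is that the $N$ contributions are $(-1)^{\ell(N-1)}q^{(N-1)\ell(\ell-1)/2}\cdot q^{\ell t}$ for $t=0,\dots,N-1$, whose sum is the geometric series $\frac{1-q^{\ell N}}{1-q^\ell}$.

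\textbf{Main obstacle.} The delicate step is the bookkeeping in Step 3 together with the normalization in Step 1. This means pinning down exactly which monomial prefactor and which $q$-exponent the denominator identity carries once $\prod s_i=1$ is imposed, and checking that the sign $(-1)^{\ell(N-1)}$ is the same for all $i$, rather than varying with $i$ as a cyclic-shift sign might. I would first settle it by checking the cases $N=2$ against \eqref{eq-SU22} and $N=3$ against \eqref{SU33-Wk}, and then run the general computation.
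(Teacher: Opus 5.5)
Your overall strategy is sound and genuinely different from the paper's. The paper never expands the theta factor. It proves the $z$-deformed Theorem~\ref{thm-CT-1} and writes the constant term as $n!\sum_i A_i$ with $A_i=[x^{\de}x_i^{-nm}]g_{m,n}$. It gets $A_{l+1}/A_l$ from the $q$-shift relation of Lemma~\ref{qx for gm,n} combined with Lemma~\ref{F xi k}. It computes $A_1$ (Theorem~\ref{A1}) by Stembridge's recursion: it specializes $z=q^{k-m}$ to reduce to the $q$-Dyson evaluation \eqref{g0,n} and lifts back via Lemma~\ref{lem-series}. The theorem is then the case $z=0$.

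You instead feed in the Macdonald identity for $A_{N-1}^{(1)}$, which is precisely the ``multivariable Jacobi triple product'' Okazaki and Smith asked for. After that, everything is a finite orthogonality computation for alternants. This is much shorter and explains the shape of the answer: one sign, one base power, and a geometric series coming from the $N$ choices of $i$. But it imports a deep identity and gives no $z$-deformation. It is also less clear how it would handle the one-sided factor $\prod_i(q^{1/2}s_ix;q)_\ii$ of the $k=1/2$ case, which the paper's antisymmetric-coefficient machinery treats uniformly.

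Your fallback of applying \eqref{jtpi1} pairwise and then antisymmetrizing does not by itself yield the Macdonald identity, so cite it (Macdonald, or Kac's denominator formula). On the normalization you worried about: with $\rho=(N-1,\dots,1,0)$,
\[
(q;q)_\ii^{N-1}\prod_{i<j}(s_i/s_j;q)_\ii(qs_j/s_i;q)_\ii=s^{\rho}\sum_{\lambda\in\bZ^N,\,|\lambda|=0}q^{Q(\lambda)}a_{\rho+N\lambda}(s^{-1}),\qquad Q(\lambda)=\tfrac N2|\lambda|^2+\langle\lambda,\rho\rangle,
\]
and $\prod_{i<j}(1-s_j/s_i)=s^{-\rho}a_\rho(s)$. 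So the integrand is exactly $\sum_\lambda q^{Q(\lambda)}a_{\rho+N\lambda}(s^{-1})\sum_i a_{\rho+me_i}(s)$, with no leftover monomial or sign.

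Step 3, however, is wrong as written, and it is the only substantive computation left. The condition is $\rho+\ell Ne_i=\pi(\rho+N\lambda)+c\mathbf{1}$. Comparing coordinate sums gives $c=\ell$. With $\pi=\mathrm{id}$ this would force $N\lambda_j=\ell N\delta_{ij}-\ell$, which is impossible unless $N\mid\ell$. Correspondingly, your $\lambda$ (with $\lambda_i=0$ and $\lambda_j=-\ell$ otherwise) has $|\lambda|=-(N-1)\ell$, and an integral constant shift normalizes it only when $N\mid\ell$.

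What the residues actually say is $\rho_j\equiv\rho_{\pi(j)}+\ell\pmod N$, i.e.\ $\pi(j)\equiv j+\ell\pmod N$. So $\pi$ is the $\ell$-th power of an $N$-cycle, determined by residues alone. Its sign is therefore $(-1)^{\ell(N-1)}$ for every $i$, which disposes of your concern that it might vary with $i$. Then $\lambda$ is forced. With $\kappa_j=\lfloor (j+\ell-1)/N\rfloor$, one has $\lambda_{\pi(j)}=-\kappa_j$ for $j\neq i$ and $\lambda_{\pi(i)}=\ell-\kappa_i$. For $N=2$, $\ell=1$ this gives $\lambda=(-1,1)$ for $i=1$ and $\lambda=0$ for $i=2$, with contributions $-q$ and $-1$, as in \eqref{eq-SU22}.

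Write $\ell=aN+b$ with $0\le b<N$, so that $\kappa_j=a$ for $j\le N-b$ and $\kappa_j=a+1$ otherwise. Then the $\kappa_i$-terms cancel and
\[
Q(\lambda)=\frac{(N-1)\ell(\ell-1)}{2}+\ell(N-i).
\]
Summing over $i$ gives the theorem (and $N$ when $\ell=0$). So the plan works, but only after Step 3 is redone with this rotation in place of your $\lambda$.
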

For the case ${\rm SU}(N)_{-N-1/2}$, Okazaki and Smith \cite[(5.18) and (5.21)]{OS-24JHEP} conjectured  the following formulas.
\begin{thm}[The ${\rm SU}(N)_{-N-1/2}$ conjecture]\label{thm-SUN-N-12}
We have
\begin{align}
  & \mathbb{II}_{\mathcal{N}}^{{\rm SU}(N)_{-N-1/2}}(x;q)= \sum_{n=0}^{\ii} \frac{(-1)^{Nn} q^{\frac{Nn^2}{2}}x^{Nn}}{(q;q)_n}, \label{eq-thm-SUN-N-12-1}\\
      &\langle W_1\rangle^{{\rm SU}(N)_{-N-1/2}}(x;q) = \sum_{n=0}^{\ii} \frac{(-1)^{Nn+(N-1)} q^{\frac{Nn^2}{2}+(N-1)n + \frac{N-1}{2}}x^{Nn+(N-1)}}{(q;q)_n}. \label{eq-thm-SUN-N-12-2}
\end{align}
\end{thm}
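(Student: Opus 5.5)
The approach is to reduce the matrix integral to a constant-term computation and then use the product formula of Theorem~\ref{thm-SUN-N}. First, expand the extra factor by Euler's identity \eqref{euler1}:
\begin{align*}
\prod_{i=1}^N (q^{1/2}s_ix;q)_\infty=\sum_{n_1,\dots,n_N\ge0}\prod_{i=1}^N\frac{(-1)^{n_i}q^{n_i^2/2}(s_ix)^{n_i}}{(q;q)_{n_i}}.
\end{align*}
The integrand lives on $\prod s_i=1$. Since $\prod_{i\neq j}(s_is_j^{-1};q)_\infty$ is invariant under $s\mapsto (ts_1,\dots,ts_N)$, only the monomials $s_1^{n_1}\cdots s_N^{n_N}$ matter. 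After the constraint, such a monomial depends only on the differences $n_i-n_N$. Thus the constant term is a sum over $(n_1,\dots,n_N)$ of weights times
\begin{align*}
c(\lambda):=\frac{(q;q)_\infty^{N-1}}{N!}\,\mathrm{CT}\Big[\prod_{i\ne j}(s_is_j^{-1};q)_\infty\, s^{\lambda}\Big],
\end{align*}
where $\lambda=(n_1,\dots,n_N)$ is taken modulo $(1,\dots,1)$.

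The key step is to compute $c(\lambda)$ for every $\lambda\in\bZ^N$ modulo shifts. For this I would use Macdonald/Weyl-denominator style reasoning, namely the antisymmetric Laurent series technique alluded to in the abstract. Write $\prod_{i\ne j}(s_is_j^{-1};q)_\infty=\prod_{i<j}(1-s_i/s_j)\cdot\prod_{i<j}(qs_j/s_i;q)_\infty(s_i/s_j;q)_\infty\cdot(\text{shift})$. Then apply the $A_{N-1}$ Macdonald identity, i.e. the affine Weyl denominator formula, which generalises the Jacobi triple product \eqref{jtpi1}. This expresses $(q;q)_\infty^{N-1}\prod_{i\ne j}(\cdots)$ as an alternating sum over $\sigma\in S_N$ and $\mu$ in the root lattice of $q^{\text{quadratic}(\mu)}\,s^{\sigma(\rho+N\mu)-\rho}$ times the Vandermonde-type factor.

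Taking the constant term against $s^\lambda$ then yields: $c(\lambda)$ is nonzero only when $\lambda$ is, up to a global shift, a permutation-image of $\rho+N\mu-\rho'$. In the symmetric setting this means $\lambda$ must be a "staircase" configuration. The consistency check is Theorem~\ref{thm-SUN-N}: taking $\lambda=m e_1$ must reproduce its formula. I would actually derive a general formula for $c(\lambda)$ of this shape, namely a signed power of $q$ when $\lambda+\rho$ has distinct entries modulo $N$ in an appropriate sense, and $0$ otherwise. Because the weight $\prod_i q^{n_i^2/2}/(q;q)_{n_i}$ is symmetric, we may symmetrize over $S_N$ acting on $(n_i)$. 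The constant term then picks out the antisymmetrization, so $c$ effectively becomes $\frac1{N!}\sum_\sigma \mathrm{sgn}(\sigma)\,\delta[\ldots]$.

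With $c(\lambda)$ known, the sum over $(n_1,\dots,n_N)$ collapses to tuples that form a staircase $n_i=n+\text{(distinct residues)}$. The weights then combine into a single sum over $n$. To see this cleanly, I would add an auxiliary parameter (the "one additional parameter" of the abstract). One can replace the rank-one Euler expansion by a $q$-binomial identity and then reduce the resulting sum via a finite identity such as $\sum_{\sigma}\mathrm{sgn}(\sigma)\prod_i 1/(q)_{n+\sigma(i)-i}$. That is a Jacobi--Trudi/determinant evaluation of $\det[1/(q)_{n+j-i}]$ giving $q^{\ast}/(q)_n\cdots$. This determinant evaluation I expect to be the main obstacle: showing that the alternating sum of $\prod q^{n_i^2/2}/(q)_{n_i}$ over the admissible configurations collapses to exactly $(-1)^{Nn}q^{Nn^2/2}x^{Nn}/(q;q)_n$. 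I would first try to handle it by induction on $N$, or by recognizing the sum as a specialization of a Schur function, $s_\lambda(1,q,q^2,\dots)$, with hook-content formula.

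For $\langle W_1\rangle$ the same argument applies with the extra $s_i$ shifting one $n_i$ by $1$. This changes the admissible staircase so that the total degree is $Nn+N-1$, and it produces the stated exponent $\frac{Nn^2}{2}+(N-1)n+\frac{N-1}2$.
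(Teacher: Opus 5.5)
\textbf{There is a genuine gap.} The central step is wrong: the claimed structure of $c(\lambda)$ and the ``collapse to staircases'' are false, and the identity you would actually need is neither correctly identified nor proved.

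The integrand $\prod_{i\ne j}(s_i/s_j;q)_\infty$ contains the full squared Vandermonde $\prod_{i<j}(1-s_i/s_j)(1-s_j/s_i)$. The Macdonald identity absorbs only one of the two alternants. So $c(\lambda)$ is a signed sum of up to $N!$ powers of $q$, with large support. Two checks:
\begin{itemize}
\item For $N=2$, $c(m,-m)=\tfrac12(-1)^m\big(q^{m(m-1)/2}+q^{m(m+1)/2}\big)\neq 0$ for every $m$; this is \eqref{eq-SU22}.
\item Your own consistency check, Theorem \ref{thm-SUN-N}, gives $c(\ell Ne_1)=\frac1N(-1)^{\ell(N-1)}q^{(N-1)\ell(\ell-1)/2}\frac{1-q^{\ell N}}{1-q^\ell}$, which is not a single signed power.
\end{itemize}

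The correct version of your symmetrization is this. Since $\prod_i(q^{1/2}s_ix;q)_\infty$ is symmetric, you may replace $\prod_{i\ne j}(s_i/s_j;q)_\infty$ by $N!\prod_{i<j}(s_j/s_i;q)_\infty(qs_i/s_j;q)_\infty$. Then each affine Weyl group element does contribute one signed $q$-power. But the sum still does not collapse. For each total degree $Nk$ you get an alternating sum over all tuples $(n_i)$ coming from affine Weyl group elements. For $N=2$ these are all $(k-m,k+m)$ with $|m|\le k$, and \eqref{eq-thm-SUN-N-12-1} reduces, for every $k\ge0$, to
\begin{align*}
\sum_{m=-k}^{k}\frac{(-1)^mq^{m(3m+1)/2}}{(q;q)_{k-m}(q;q)_{k+m}}=\frac{1}{(q;q)_k},
\end{align*}
a finite form of Euler's pentagonal number theorem. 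This is not the determinant $\det[1/(q;q)_{n+j-i}]$. That determinant equals $s_{(n^N)}(1,q,q^2,\dots)$, a product over all hooks of the rectangle, not $1/(q;q)_n$. The lattice part $\mu$ and its quadratic weight are essential and cannot be dropped. For general $N$ you would need an $A_{N-1}$ analogue of this identity, plus another one for $\langle W_1\rangle$. That is essentially the whole content of the theorem, and the proposal gives no argument for it.

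\textbf{How the paper avoids this.} The paper never expands the Euler factor termwise, so no multisum identity is needed. It groups by total degree, writing $[x^k]F^{(1)}_{m,n}=n![x^\de]h_k\sum_ix_i^m$ with $h_k$ antisymmetric. It then uses the $q$-difference equation of the whole integrand, Euler factor included, under $x_n\to qx_n$ (Lemma \ref{qx for gk}), together with Corollaries \ref{cor-k01} and \ref{F x -ap(r)}. At $z=0$ this gives the first-order recursion $[x^\de]h_k=\frac{(-yq^{1/2})^nq^{n(k-1)}}{1-q^k}[x^\de]h_{k-1}$. Stembridge's evaluation supplies the base case $[x^\de]h_0=(q;q)_\infty^{1-n}$. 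The factor $1/(q;q)_k$ then appears directly, and $\langle W_1\rangle$ follows from \eqref{add-eq-A}.

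\textbf{A repair in your spirit.} If you want to keep the idea of expanding the Euler factor, expand it in $q$-Whittaker polynomials rather than monomials. The dual Cauchy identity gives $\prod_i(q^{1/2}xs_i;q)_\infty=\sum_\lambda P_\lambda(s;q,0)P_{\lambda'}(-q^{1/2}x,-q^{3/2}x,\dots;0,q)$. Orthogonality for the weight $\prod_{i\ne j}(s_i/s_j;q)_\infty$ on ${\rm SU}(N)$ kills everything except $\lambda=(k^N)$. The Hall--Littlewood principal specialization then gives the $k$-th term $(-q^{1/2}x)^{Nk}q^{N\binom k2}/(q;q)_k$, which is the required summand.
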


Furthermore, we find that the general case $\langle W_m\rangle^{{\rm SU}(N)_{-N-1/2}}(x;q)$ can be evaluated as follows.
\begin{thm}\label{thm-SUN-N-12-general}
For any integer $m$ we have
\begin{align}
      \langle W_m\rangle^{{\rm SU}(N)_{-N-1/2}}(x;q)  =\sum_{k = \lceil m/N \rceil}^{\ii} \sum_{i=1}^N Y_{i,m,k}(x,q).  \label{eq-thm-SUN-N-12-Wn}
\end{align}
Here, $Y_{i,m,k}$ ($1\leq i \leq N$) is given as follows:
    \begin{align}
        Y_{i,0,k}(x,q) &= \begin{cases}
            \frac{(-x)^{Nk}q^{Nk^2/2}}{(q;q)_k }, & k \ge 0,\\
            0, & k < 0,
        \end{cases} \label{Y_i0k}
    \end{align}
for $1\leq m \leq N$ we have
\begin{align}
        Y_{i,m,k}(x,q) &= \begin{cases}
            \frac{(-1)^{Nk+1} x^{Nk-m} q^{(Nk^2+m)/2-(N+1-i)k}}{(q;q)_{k-1}}, & N-i+1 \le m \le N \text{ and } k > 0, \\
            0, & 0<m<N-i+1 \text{ or } k \le 0,
        \end{cases} \label{Y-imk}
    \end{align}
and for other $m$, $Y_{i,m,k}$ can be computed recursively:
    \begin{align}\label{rela-Y-12}
        Y_{i,m,k}(x,q) =\begin{cases}
        q^{\frac12}x^{-1} \big( Y_{i,m-1,k}(x,q) + (-1)^N q^{m+i-k-N-1} Y_{i,m-N-1,k-1}(x,q) \big), & m>N,  \\
        (-1)^{N-1} q^{k-i-m+1} \big( Y_{i,m+N,k+1}(x,q) - xq^{-\frac12} Y_{i,m+N+1,k+1}(x,q) \big), & m<0.
        \end{cases}
    \end{align}
\end{thm}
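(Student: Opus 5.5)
We will compute the constant term \eqref{eq-CT} with $(\epsilon,\delta)=(1,0)$ by turning it into a sum over lattice points. The Weyl-type factor $\prod_{i\ne j}(s_is_j^{-1};q)_\infty$ should be handled through the Jacobi triple product \eqref{jtpi1}. Alternatively, one can use the Macdonald identity for $A_{N-1}$, i.e.\ the standard expansion
\[
(q;q)_\infty^{N-1}\prod_{i\ne j}(s_is_j^{-1};q)_\infty=\sum_{\sigma\in S_N}\operatorname{sgn}(\sigma)\sum_{\substack{r\in \bZ^N\\ \sum r_i=0}} (\pm) q^{Q(r)}\prod_i s_{\sigma(i)}^{\,\text{linear}(r,i)} \cdot \prod_{i<j}(1-s_js_i^{-1})\ \text{type factor}.
\]
The cleaner route, and the one we will follow, uses the antisymmetric Laurent series facts announced in the abstract. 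Write $\prod_{i\ne j}(s_is_j^{-1};q)_\infty=\Delta(s)\overline{\Delta(s)}\cdot(\text{theta part})$, where $\Delta$ is the Vandermonde. Since $\prod_i(q^{1/2}s_ix;q)_\infty\, p_m(s)$ is symmetric, the constant term of a symmetric series times $\Delta\overline{\Delta}$ over $N!$ equals the constant term of the symmetric series times $\Delta\cdot\prod_{i<j}(s_j/s_i)$-type monomial, after antisymmetrizing once. This removes the $1/N!$ and one of the Vandermonde factors.

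Next we expand $\prod_i(q^{1/2}s_ix;q)_\infty=\prod_i\sum_{n_i\ge0}(-1)^{n_i}q^{n_i^2/2}x^{n_i}s_i^{n_i}/(q;q)_{n_i}$ via \eqref{euler1}. We impose $\prod s_i=1$ by noting that the constant term in $s_1,\dots,s_{N-1}$ with $s_N=(s_1\cdots s_{N-1})^{-1}$ picks out monomials $s^{\alpha}$ with $\alpha_1=\cdots=\alpha_N$. The theta part together with the remaining Vandermonde then contributes, by the Jacobi triple product, a sum over $\sigma\in S_N$ and shifts. The constraint forces $n_i+m\delta_{i,j}$ plus the shift vector to be constant. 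Because of the factor $1/(q;q)_{n_i}$, which vanishes for $n_i<0$, almost all terms die. The surviving terms should be indexed by a common value $k$ (essentially the level of the $n$'s) and by the position $i$ where the power $s_i^m$ is inserted. This yields $\langle W_m\rangle=\sum_k\sum_i Y_{i,m,k}$, with $Y_{i,m,k}$ an explicit finite signed sum of terms $q^{\cdots}x^{Nk-m}/\prod(q;q)_{n_j}$.

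The first check is $m=0$ and $1\le m\le N$, where we verify \eqref{Y_i0k} and \eqref{Y-imk}. For $m=0$ the sum collapses, by the $q$-binomial or $q$-Vandermonde theorem (or simply because only one permutation contributes), to $(-x)^{Nk}q^{Nk^2/2}/(q;q)_k$; summing over $i$ and dividing appropriately gives \eqref{eq-thm-SUN-N-12-1}. For $1\le m\le N$ the shift by $m$ in coordinate $i$ is admissible only when $m\ge N-i+1$. In that case one index $n$ becomes $k-1$, which produces the $(q;q)_{k-1}$ in \eqref{Y-imk}. Taking $m=1$ then recovers \eqref{eq-thm-SUN-N-12-2}.

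For general $m$ we do not seek a closed form but derive \eqref{rela-Y-12}. We define $Y_{i,m,k}$ directly as the contribution above for all $m$, and compare the coefficient extractions for $m$, $m-1$ and $m-N-1$. The key tool is the functional equation $(q^{1/2}s_ix;q)_\infty=(1-q^{1/2}s_ix)(q^{3/2}s_ix;q)_\infty$, i.e.\ a shift $s_i\mapsto qs_i$ in one variable. Combined with the quasi-periodicity of the theta factor under $s_i\mapsto qs_i$ (which produces $q^{\text{linear}}$ factors and signs $(-1)^N$), this relates the coefficient with $s_i^m$ inserted to those with $s_i^{m-1}$ and $s_i^{m-N-1}$. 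Solving the same relation in the other direction gives the $m<0$ case. Finally, the lower summation limit $k\ge\lceil m/N\rceil$ comes from the requirement that all $n_j\ge0$.

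The main obstacle will be the bookkeeping in the lattice-point step. We must show that after antisymmetrization and the constraint $\alpha_1=\cdots=\alpha_N$, exactly one permutation class survives for each $(i,k)$, and we must track the signs and quadratic exponents in $q$ precisely, so that the shift relation yields exactly the exponents $q^{m+i-k-N-1}$ and $q^{k-i-m+1}$ in \eqref{rela-Y-12}. We would first carry out the computation for $N=2,3$ against \eqref{intro-SU252-W1} to fix the conventions.
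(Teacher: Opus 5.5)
\paragraph{Review.} Your derivation of the recurrences \eqref{rela-Y-12} is right, and it is what the paper does. Lift to $N$ free variables, with $k$ recording the power of $x_1\cdots x_N$. Let $h_k=x^{-k}\prod_{i<j}(x_i-x_j)\prod_{i\ne j}(qx_i/x_j;q)_\infty\prod_i(q^{1/2}x_iy;q)_\infty$, where $y$ is the fugacity. The shift $x_\ell\mapsto qx_\ell$ then gives, for every $\beta\in\mathbb{Z}^N$,
\[
q^{\beta_\ell+k}\Big([x^\beta]h_k-q^{-1/2}y\,[x^\beta x_\ell^{-1}]h_k\Big)=(-1)^{N-1}[x^\beta x_\ell^{N}]h_{k-1}.
\]
Taking $\ell=i$ with $\beta=\delta+(1-m)e_i$ gives the first line of \eqref{rela-Y-12}. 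Taking $\beta=\delta-(m+N)e_i$ and $k\to k+1$ gives the second. For this to work, $Y_{i,m,k}$ must be defined as the full coefficient $(q;q)_\infty^{N-1}[x^\delta x_i^{-m}]h_k$, since the functional equation does not act on individual lattice contributions. Your lower limit $k\ge\lceil m/N\rceil$ is also fine.

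The gap is in the base cases \eqref{Y_i0k} and \eqref{Y-imk}. These carry all the content: the cases $m=0,1$ are the Okazaki--Smith conjecture itself. You claim that after expanding the Weyl factor and imposing $n_j\ge 0$, essentially one term survives for each $(i,k)$. This already fails for $N=2$, $m=0$: the coefficient of $x^{2k}$ in $\mathbb{II}_{\mathcal{N}}^{{\rm SU}(2)_{-5/2}}$ comes out as
\[
\sum_{r=-k}^{k}\frac{(-1)^r q^{k^2+r(3r-1)/2}}{(q;q)_{k-r}(q;q)_{k+r}}.
\]
All $2k+1$ terms are nonzero, and the sum equals $q^{k^2}/(q;q)_k$ only by a finite form of the pentagonal number theorem. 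For general $N$ you get an $(N-1)$-fold lattice sum if you use the $A_{N-1}$ Macdonald identity, or $\binom N2$ independent sums from the Jacobi triple product alone. Evaluating that multisum is exactly the obstacle the introduction describes, and the $q$-binomial or $q$-Vandermonde theorem does not do it.

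The missing idea is to obtain the base cases from the same functional equation, not from lattice sums. Write $B_{r,k}=[x^\delta(x_r\cdots x_N)^{-1}]h_k$ and take $\ell=N$, $\beta=\delta-(e_r+\cdots+e_{N-1})$.
\begin{itemize}
\item For $r<N$, the term $[x^\beta]h_k$ vanishes by antisymmetry, because $x_{N-1}$ and $x_N$ carry equal exponents. After a cyclic rearrangement the right side equals $B_{r+1,k}$. This gives $B_{r+1,k}=-yq^{k-1/2}B_{r,k}$.
\item For $r=N$ you get $[x^\delta]h_k=-yq^{k-1/2}(1-q^k)^{-1}B_{N,k}$.
\item Since $B_{1,k}=[x^\delta]h_{k-1}$, these combine to $[x^\delta]h_k=(-y)^Nq^{N(k-1/2)}(1-q^k)^{-1}[x^\delta]h_{k-1}$.
\end{itemize}
You still need one external input, which your sketch never supplies: $[x^\delta]h_0=[x^\delta]g_{0,N}(0,q)=(q;q)_\infty^{1-N}$. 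This is Stembridge's theorem, i.e.\ the $a\to\infty$ case of Macdonald's $A_{N-1}$ constant term. The cases $1\le m\le N$ then follow. Use antisymmetry to rewrite $x^\delta x_i^{-m}$ as $\pm x^\delta(x_i\cdots x_{N-1})^{-1}x_N^{N-i-m}$. Then apply the same relation with $\ell=N$; its $h_{k-1}$ term again has a repeated exponent and drops out. This reduces $Y_{i,m,k}$ to $B_{i,k}$.
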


While $\langle W_{m}\rangle^{{\rm SU}(N)_{-N-1/2}}(x;q)$ can be calculated explicitly using Theorem \ref{thm-SUN-N-12-general} for each $m$, the formula becomes more and more complicated as $|m|$ increases. Here we present formulas for the case $1\leq |m|\leq N+1$ as examples.
\begin{cor}\label{cor-SUN-N-12-general}
For $1 \le m \le N$, we have
    \begin{align}
        \langle W_m\rangle^{{\rm SU}(N)_{-N-1/2}}(x;q) =(-1)^{N+1}(xq^{\frac12})^{N-m} \sum_{k=0}^\infty \frac{(-x)^{Nk} q^{Nk^2/2+(N-m)k}\big(1-q^{m(k+1)} \big)}{(q;q)_k(1-q^{k+1})}. \label{intro-Wn-positive}
    \end{align}
For $1-N\leq m \leq -1$, we have
\begin{align}
        \langle W_m\rangle^{{\rm SU}(N)_{-N-1/2}}(x;q)
    = -(xq^{\frac12})^{-m} \sum_{k=0}^\infty \frac{(-x)^{Nk} q^{Nk^2/2-mk}}{(q;q)_k}. \label{intro-Wn-negative}
    \end{align}
For $m=-N$, we have
    \begin{align}
         &\langle W_{-N}\rangle^{{\rm SU}(N)_{-N-1/2}}(x;q) = (-1)^{N+1} \sum_{k=0}^\infty \frac{(-x)^{Nk}q^{Nk^2/2}}{(q;q)_k} \bigg( q^k \frac{1-q^N}{1-q} + (-xq^{\frac12})^N q^{Nk} \bigg). \label{intro-Wn-negative-n}
    \end{align}
For $|m|=N+1$, we have
\begin{align}
 \langle W_{N+1}\rangle^{{\rm SU}(N)_{-N-1/2}}(x;q)  &=  (-1)^N\sum_{k=0}^\infty \frac{(-y)^{Nk-1} q^{(Nk^2+1)/2}}{(q;q)_k} \bigg( \frac{1-q^{N(k+1)}}{1-q^{k+1}} - q^{-k} \frac{1-q^N}{1-q} \bigg), \label{Cm>0-2} \\
        \langle W_{-N-1}\rangle^{{\rm SU}(N)_{-N-1/2}}(x;q)  &= (-1)^{N-1} \sum_{k=0}^\infty \frac{(-x)^{Nk+1}q^{(Nk^2+1)/2+k}}{(q;q)_k} \bigg( q^{k+N} + \frac{1-q^N}{1-q} \bigg). \label{Cm<0-2}
\end{align}
\end{cor}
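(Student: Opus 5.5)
The plan is to derive Corollary \ref{cor-SUN-N-12-general} directly from Theorem \ref{thm-SUN-N-12-general}. In each range of $m$ I will compute $\sum_{i=1}^N Y_{i,m,k}$ explicitly and then re-index the outer sum over $k$ so that it starts at $0$.

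\textbf{Case $1\le m\le N$.} By \eqref{Y-imk}, only the indices with $i\ge N+1-m$ contribute, so $i$ runs over $m$ consecutive values. Since $\lceil m/N\rceil=1$, the sum over $k$ starts at $1$. Writing $j=N+1-i\in\{1,\dots,m\}$, the inner sum becomes a finite geometric series:
\begin{align*}
\sum_{j=1}^{m} q^{-jk}=q^{-mk}\,\frac{1-q^{mk}}{1-q^{k}}.
\end{align*}
Next I shift $k\mapsto k+1$ and use $(q;q)_{k}=(q;q)_{k-1}(1-q^k)$ to absorb the denominator $1-q^{k+1}$. It then remains to collect exponents:
\begin{align*}
\tfrac{N(k+1)^2+m}{2}-m(k+1)=\tfrac{Nk^2}{2}+(N-m)k+\tfrac{N-m}{2}.
\end{align*}
Together with the sign $(-1)^{N(k+1)+1}$ and the power $x^{N(k+1)-m}=x^{Nk}x^{N-m}$, this yields exactly \eqref{intro-Wn-positive}. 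The identity $(-1)^{N+1}(-x)^{Nk}=(-1)^{Nk+N+1}x^{Nk}$ confirms that the signs match.

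\textbf{Case $m=N+1$.} I apply the first line of \eqref{rela-Y-12} with $m=N+1$, which gives
\begin{align*}
Y_{i,N+1,k}=q^{1/2}x^{-1}\big(Y_{i,N,k}+(-1)^N q^{i-k}Y_{i,0,k-1}\big).
\end{align*}
For the first term I reuse the $m=N$ computation above. For the second term I use \eqref{Y_i0k}; summing $q^{i}$ over $i$ gives $q\frac{1-q^N}{1-q}$. Both pieces are nonzero only for $k\ge 1$ (consistent with $\lceil (N+1)/N\rceil=2$ up to vanishing terms), so I shift $k\mapsto k+1$ and combine them to obtain \eqref{Cm>0-2}. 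I read $y$ there as $x$; this is presumably a typo, which I would check against the computation.

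\textbf{Negative $m$.} Here I use the second line of \eqref{rela-Y-12}:
\begin{align*}
Y_{i,m,k}=(-1)^{N-1}q^{k-i-m+1}\big(Y_{i,m+N,k+1}-xq^{-1/2}Y_{i,m+N+1,k+1}\big).
\end{align*}
For $1-N\le m\le -1$, both $m+N$ and $m+N+1$ lie in $[1,N]$, so both terms are given by \eqref{Y-imk}. For a fixed $i$, the two terms are nonzero for the ranges $i\ge 1-m$ and $i\ge -m$ respectively. I expect them to telescope: for $i\ge 1-m$ the contributions should cancel pairwise, leaving only the single boundary index $i=-m$, which produces a one-term series. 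This matches the simple form of \eqref{intro-Wn-negative}.

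For $m=-N$, the arguments become $Y_{i,0,k+1}$ (from \eqref{Y_i0k}) and $Y_{i,1,k+1}$ (nonzero only for $i=N$). Summing $q^{-i}$ over $i$ gives the factor $\frac{1-q^N}{1-q}$ in \eqref{intro-Wn-negative-n}, and the $i=N$ term supplies the extra summand $(-xq^{1/2})^Nq^{Nk}$. For $m=-N-1$, I apply the recursion once more, landing at levels $-1$ and $0$, and reuse the results for $m=-1$ and $m=0$.

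\textbf{Main obstacle.} I expect the hardest part to be the telescoping verification in the range $1-N\le m\le -1$. I would first do the check for $N=2$ and $m=-1$, then carry out the general cancellation by comparing the exponents of the two terms at the same $i$. The lower limit $\lceil m/N\rceil$ also needs care, since some $k\le 0$ terms might contribute after the shift $k\mapsto k+1$. I would confirm that these terms vanish by using the condition $k\le 0$ in \eqref{Y-imk}.
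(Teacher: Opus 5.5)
Your proposal is correct and is essentially the paper's own argument. The paper works with $C_{i,m,k}=(q;q)_\infty^{1-N}Y_{i,m,k}$. For $1\le m\le N$ it substitutes \eqref{C-imk} (shifted $k\mapsto k+1$) into \eqref{Cm and Cimk}, and for the other cases it uses the recurrences \eqref{rela-C-12}. For $1-N\le m\le -1$ the two terms cancel at each fixed $i\ge 1-m$ (a cancellation at fixed $i$, not a telescoping in $i$), so only $i=-m$ survives, exactly as you predict.

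One small correction to your bookkeeping: for $m=-N$ and $m=-N-1$ the sum starts at $k=\lceil m/N\rceil=-1$, and the $k=-1$ term does not vanish. It comes from $Y_{i,0,0}=1$ (and, for $m=-N-1$, also from $Y_{1,-1,0}$) and becomes the $k=0$ summand after reindexing, so you must keep it rather than show it vanishes.
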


For the case ${\rm SU}(N)_{-N-1}$ Okazaki and Smith \cite[(5.25) and (5.32)]{OS-24JHEP} conjectured the following formulas.
\begin{thm}[The ${\rm SU}(N)_{-N-1}$ conjecture]\label{thm-SUN-N-1}
We have
\begin{align}
 & \mathbb{II}_{\mathcal{N}}^{{\rm SU}(N)_{-N-1}}(x;q)= \frac{1}{(q;q)_{\ii}} \sum_{n\in\bZ} (-1)^{Nn} q^{\frac{Nn^2}{2}}x^{Nn}, \label{Pi-SUN-N-1} \\
   &\langle W_1\rangle^{{\rm SU}(N)_{-N-1}}(x;q)
        = \frac{(-xq^{\frac12})^{N-1}}{(q;q)_{\ii}} \sum_{n\in\bZ} (-1)^{Nn} q^{\frac{Nn^2}{2}+(N-1)n}x^{Nn}. \label{W1-SUN-N-1}
\end{align}
\end{thm}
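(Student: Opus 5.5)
The plan is to compute the half-index as a constant term, following \eqref{eq-CT}, and to turn the two extra chiral products into a theta function via the Jacobi triple product. Taking $z^{-1}=q^{\frac12}s_ix$ in \eqref{jtpi1} gives $qz=q^{\frac12}s_i^{-1}x^{-1}$, so
\begin{align*}
(q^{\frac12}s_ix;q)_\ii(q^{\frac12}s_i^{-1}x^{-1};q)_\ii=\frac{1}{(q;q)_\ii}\sum_{n_i\in\bZ}(-1)^{n_i}q^{n_i^2/2}(s_ix)^{-n_i}.
\end{align*}
The integrand therefore becomes $(q;q)_\ii^{-N}$ times a product over $i$ of theta series, times $\prod_{i\neq j}(s_is_j^{-1};q)_\ii$, times the Wilson line factor ($1$ or $p_1(s)$).

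Next I would remove the constraint $\prod_i s_i=1$. Let $F(s_1,\dots,s_N)$ be the integrand, viewed as a formal Laurent series in $N$ free variables. A monomial $s^\alpha$ restricts to a constant on the torus $s_N=(s_1\cdots s_{N-1})^{-1}$ exactly when $\alpha=(a,\dots,a)$. Hence the constant term in \eqref{eq-CT} equals $\sum_{a\in\bZ}[s^{(a,\dots,a)}]F$, since each diagonal coefficient is itself a function of $x$ and $q$. The product $\prod_{i\neq j}(s_is_j^{-1};q)_\ii$ is homogeneous of degree $0$. Each theta monomial has degree $-\sum_i n_i$, so the diagonal coefficient with $a=-n$ collects exactly the terms with $\sum_i n_i=Nn$, weighted by $x^{-Nn}$. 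After replacing $n\to -n$, this is the source of the exponent $x^{Nn}$ in \eqref{Pi-SUN-N-1}.

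To evaluate the coefficients I would use antisymmetry. The theta product is symmetric in the $s_i$, so by the standard antisymmetrization argument the factor $\prod_{i\ne j}(1-s_i/s_j)$ may be replaced by $N!\prod_{i<j}(1-s_j/s_i)$, which cancels the $1/N!$. The remaining factor
\begin{align*}
(q;q)_\ii^{N-1}\prod_{i<j}(1-s_j/s_i)(qs_j/s_i;q)_\ii(qs_i/s_j;q)_\ii
\end{align*}
I would expand with the Macdonald identity for affine $A_{N-1}$, equivalently the $\mathrm{SU}(N)_{-N}$ computation behind Theorem \ref{thm-SUN-N}. This writes it as an alternating sum over $w\in S_N$ and over a coroot-lattice vector $\beta\in\bZ^N$ with $\sum_i\beta_i=0$ of monomials $s^{w(\rho+N\beta)-\rho}$ with Gaussian weights $q^{N|\beta|^2/2+\langle\beta,\rho\rangle}$. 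Pairing these monomials with the theta monomials and extracting the diagonal coefficient fixes $n_i$ in terms of $\beta$, $w$ and $n$. The sum over $w$ collapses by antisymmetry, because only $w=\mathrm{id}$ survives against a symmetric kernel. The two quadratic forms then combine by completing the square into $q^{Nn^2/2}$ times a free sum over the remaining lattice variables.

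The main obstacle is this last step: showing that the leftover sum over $\beta$ of products of $q^{\beta_i^2/2}$-type weights, after completing squares, gives exactly $(q;q)_\ii^{N-1}$. That would cancel all but one of the $(q;q)_\ii^{-N}$ factors and leave $1/(q;q)_\ii$ together with the sign $(-1)^{Nn}$. If the direct lattice computation becomes unwieldy, I would instead reverse the order: first compute the constant term for fixed $(n_1,\dots,n_N)$ as an $\mathrm{SU}(N)_{-N}$-type Wilson-line integral with monomial insertion $s^{-n}$, which is a generalization of Theorem \ref{thm-SUN-N} to monomial insertions, and then sum. For $\langle W_1\rangle$ the insertion $p_1(s)$ is symmetric, so the same argument applies with each $n_i$ shifted by $1$ in one coordinate. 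This shifts the diagonal condition to $\sum_i n_i=Nn+(N-1)$, and completing the square then gives the prefactor $(-xq^{\frac12})^{N-1}$ and the linear term $(N-1)n$ in \eqref{W1-SUN-N-1}.
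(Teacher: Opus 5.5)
Your first steps are sound: the triple-product expansion, the reduction to $\sum_a[s^{a\mathbf 1}]F$ with $\mathbf 1=(1,\dots,1)$, and replacing $\prod_{i\ne j}(1-s_i/s_j)$ by $N!\prod_{i<j}(1-s_j/s_i)$. The step that fails is the claim that the sum over $w$ collapses to $w=\mathrm{id}$.

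Take $\rho=(N-1,\dots,1,0)$ and $\lambda=\rho+N\beta$. The diagonal coefficient of $s^{w\lambda-\rho}\Theta$ is the theta coefficient at $\gamma=a\mathbf 1+\rho-w\lambda$, namely $(-1)^{Na}x^{Na}q^{\|\gamma\|^2/2}$ with $\|\gamma\|^2=Na^2+\|\rho\|^2+\|\lambda\|^2-2\langle\rho,w\lambda\rangle$. Because $\rho$ is not permuted, every $w$ contributes, and the symmetry of $\Theta$ cancels nothing.

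Already for $N=2$, with $\beta=(b,-b)$, the term $w=\mathrm{id}$ contributes $q^{a^2}x^{2a}\sum_b q^{6b^2+b}$. The transposition contributes $-q^{a^2}x^{2a}\sum_b q^{6b^2+5b+1}$. Only together do they give $q^{a^2}x^{2a}(q;q)_\infty$, by Euler's pentagonal number theorem.

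In general the $w=\mathrm{id}$ part alone is a positive definite theta series with nonnegative coefficients. It can never equal $(q;q)_\infty^{N-1}$, whose coefficient of $q$ is $-(N-1)$. So the identity you single out as the main obstacle is false in the form you set it up, and you give no argument for the correct version.

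Your fallback via monomial insertions only reorders the same double sum and leads back to the same identity; Theorem~\ref{thm-SUN-N} covers only $p_m$ insertions. The $W_1$ sketch inherits the gap. Its bookkeeping is also off: with your convention $(s_ix)^{-n_i}$, the diagonal condition forces $\sum_i n_i\equiv 1\pmod N$, not $\equiv N-1$.

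The missing idea is that the $w$-sum is an alternant, $\sum_w\mathrm{sgn}(w)q^{-\langle\rho,w\lambda\rangle}=\prod_{i<j}(q^{-\lambda_i}-q^{-\lambda_j})$. The required leftover is then
\[
q^{\|\rho\|^2}\sum_{\beta}q^{(N+1)(N\|\beta\|^2/2+\langle\beta,\rho\rangle)}\prod_{i<j}\big(q^{-\lambda_i}-q^{-\lambda_j}\big),
\]
which is the sum side of the same affine $A_{N-1}$ Macdonald identity with $q\to q^{N+1}$ and $s_i=q^{i-N}$. Its product side is $(q^{N+1};q^{N+1})_\infty^{N-1}\prod_{h=1}^{N-1}(q^h,q^{N+1-h};q^{N+1})_\infty^{N-h}=(q;q)_\infty^{N-1}$, since each nonzero residue modulo $N+1$ occurs with total multiplicity $N-1$.

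For $W_1$, only the summand $s_N$ of $p_1$ survives, because the exponent vector $\rho-e_r$ has a repeated entry for $r<N$. One then needs the same identity at $s=(q^{1-N},\dots,q^{-1},q)$, whose product side is again $(q;q)_\infty^{N-1}$. Putting $a=n+1$ gives \eqref{W1-SUN-N-1}. With these inputs your route becomes a short, valid alternative.

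The paper avoids lattice sums altogether. Setting $y=x_{n+1}^{-1}q^{1/2}$ turns the two chiral products into the missing root factors of an $(n+1)$-variable vector-multiplet product. The index is then expressed through the coefficients $A_{1,k+1,n+1}$ of Theorem~\ref{A1}, plus Lemma~\ref{F x1 k} for $W_1$. That approach also yields the $z$-deformation of Theorem~\ref{thm-CT-SU-N1} and the formulas for arbitrary charge.
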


Furthermore, we find elegant formulas for $\langle W_m\rangle^{{\rm SU}(N)_{-N-1}}(x;q)$ for an arbitrary charge $m$ and therefore give a generalization to Theorem \ref{thm-SUN-N-1}.
\begin{thm}\label{thm-SUN-N1-general}
    Let $m$ be an integer and $t=\lfloor m/(N+1) \rfloor$. We have
    \begin{align}\label{eq-thm-SUN-N1-general}
      & \langle W_m\rangle^{{\rm SU}(N)_{-N-1}}(x;q)   \\
     & = \begin{cases}
            \frac{1}{(q;q)_{\ii}} x^{-m} q^{(m-N)t+m/2} \frac{1-q^{Nt}}{1-q^t} \sum_{k\in\bZ} (-1)^{Nk} x^{Nk} q^{Nk^2/2 -mk}, & m \equiv 0 \pmod{N+1}, \\
            -\frac{1}{(q;q)_{\ii}} x^{-m} q^{mt+m/2} \sum_{k\in\bZ} (-1)^{Nk} x^{Nk} q^{Nk^2/2 -mk}, & m \not\equiv 0 \pmod{N+1}. \nonumber
        \end{cases}
    \end{align}
\end{thm}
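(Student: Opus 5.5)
The plan is to evaluate the constant term \eqref{eq-CT} with $(\epsilon_k,\delta_k)=(1,1)$ by turning the integrand into an antisymmetric formal Laurent series paired with a Weyl-type denominator. Everything then reduces to a lattice-point count on an affine Weyl orbit of level $N+1$, and this is where the period $N+1$ in $m$ comes from.

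\textbf{Step 1 (theta factors).} For each $i$, the Jacobi triple product \eqref{jtpi1} (with $z\mapsto -q^{-1/2}s_ix$) gives
\[
(q;q)_\ii(q^{\frac12}s_ix;q)_\ii(q^{\frac12}s_i^{-1}x^{-1};q)_\ii=\sum_{n\in\bZ}(-1)^nq^{n^2/2}(s_ix)^n .
\]
So the matter part is $(q;q)_\ii^{-N}\Theta(s)$, where $\Theta(s)=\prod_{i=1}^N\sum_{n_i}(-1)^{n_i}q^{n_i^2/2}(s_ix)^{n_i}$. Both $\Theta$ and $p_m(s)$ are symmetric in $s_1,\dots,s_N$.

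\textbf{Step 2 (vector part).} Split the vector multiplet part as
\[
\prod_{i\ne j}(s_is_j^{-1};q)_\ii=D(s)\,\overline{D}(s),\qquad D(s)=\prod_{i<j}(s_i/s_j;q)_\ii(qs_j/s_i;q)_\ii .
\]
Here $\overline{D}$ is the same product with $s\mapsto s^{-1}$. Both factors are antisymmetric up to the monomial $s^{\pm\rho}$.

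Apply the Macdonald identity for $A_{N-1}^{(1)}$, restricted to $\prod_i s_i=1$. It writes $(q;q)_\ii^{N-1}s^{\rho}D(s)$ as an alternating sum over $w\in S_N$ and $\lambda$ in the root lattice $Q$ of the monomials $s^{w(\rho+N\lambda)}$, weighted by $q$-powers quadratic in $\lambda$.

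Now use the fact (stated in the abstract) that for antisymmetric formal Laurent series $A,B$, the constant term of $AB$ equals $N!$ times the constant term of $B$ multiplied by a single monomial of $A$. Apply this with $A=s^{\rho}D$ expanded via Macdonald, and $B=s^{-\rho}\overline{D}\,\Theta\, p_m$, which is antisymmetric. The $1/N!$ cancels and one Weyl sum disappears. Expanding $\overline{D}$ by Macdonald as well, the quantity becomes
\[
(q;q)_\ii^{-N}\cdot\bigl(\text{alternating sum over } S_N\ltimes NQ\bigr)\cdot[\,\cdot\,]\bigl(\Theta\, p_m\bigr).
\]

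\textbf{Step 3 (combining the sums).} Combine the Macdonald sum with the theta sum, absorbing the Gaussian weights. The shift by $N\lambda$ together with $q^{n_i^2/2}$ rescales to an affine Weyl action at level $N+1$, which is the source of the modulus $N+1$. The constraint $\prod_i s_i=1$ leaves one free direction along $(1,\dots,1)$. Summing over it, with $n_i=k+(\text{bounded})$, produces the theta series $\sum_k(-1)^{Nk}x^{Nk}q^{Nk^2/2-mk}$. The $p_m$ insertion shifts the $i$-th exponent by $m$, which accounts for the $-mk$ and the prefactor $x^{-m}$.

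What remains is a finite signed count. For each $i$, ask whether $\rho+me_i$ (modulo the diagonal) lies in the level-$(N+1)$ affine Weyl orbit of $\rho$, and if so record its sign and $q$-power.

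\textbf{Step 4 (the obstacle).} This last step is the main difficulty. Write $m=(N+1)t+r$ with $0\le r\le N$.
\begin{itemize}
\item For $r\neq0$, I expect exactly one net surviving term, with sign $-1$ and $q$-power $q^{mt+m/2}$. The other $i$ should give vectors with a repeated coordinate modulo $N+1$, which vanish by antisymmetry, or should cancel in pairs.
\item For $r=0$, every $i$ should survive with the same sign and $q$-powers $q^{(m-N)t+m/2+jt}$ for $j=0,\dots,N-1$, summing to $\frac{1-q^{Nt}}{1-q^t}$.
\end{itemize}
To organize this, I would first verify the cases $m=0$ and $m=1$, which are Theorem \ref{thm-SUN-N-1}. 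Then I would establish a recursion $m\mapsto m+N+1$ via the affine reflection. This recursion should multiply the result by $x^{-(N+1)}$ times an explicit $q$-power after reindexing $k$, and it reduces the general statement to the residues $0\le r\le N$. For those residues the computation is a direct check of sorting $\rho+re_i$.
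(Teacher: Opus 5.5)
There is a genuine gap in Steps 2--3. First, your factorization is wrong. Since $(s_j/s_i;q)_\ii=(1-s_j/s_i)(qs_j/s_i;q)_\ii$,
\[
\prod_{i\ne j}(s_is_j^{-1};q)_\ii=D(s)\prod_{i<j}\Big(1-\frac{s_j}{s_i}\Big)=\prod_{i<j}(s_i-s_j)(s_i^{-1}-s_j^{-1})\prod_{i\ne j}(qs_is_j^{-1};q)_\ii ,
\]
whereas $D\overline{D}$ contains an extra factor $\prod_{i\ne j}(qs_i/s_j;q)_\ii$. So there is only one Macdonald sum. What collapses the $S_N$-sum is pairing with the single finite alternant $\prod_{i<j}(s_i^{-1}-s_j^{-1})$ (this is \eqref{pnz} at $z=1$). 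Your ``single monomial of $A$'' fact holds only for such a single alternant.

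The serious problem is Step 3. Count powers of $(q;q)_\ii$: the prefactor gives $(q;q)_\ii^{N-1}$, the Macdonald identity $(q;q)_\ii^{-(N-1)}$, and the $N$ triple products $(q;q)_\ii^{-N}$. This leaves the $(q;q)_\ii^{-N}$ you wrote, but the target has $(q;q)_\ii^{-1}$. Fix the power $x^{Nk-m}$ and the index $i$ in $p_m$. Then the theta indices $n\in\bZ^N$ are determined by $(w,\lambda)$, so the coefficient is an infinite alternating sum over $S_N$ and the whole rank-$(N-1)$ lattice $Q$. Summed over $i$, it must equal $(q;q)_\ii^{N-1}$ times the polynomial predicted by the theorem. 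Already for $N=2$, $i=1$ this coefficient is $(-1)^mq^{((k-m)^2+k^2)/2}\sum_{j\in\bZ}(-1)^jq^{3j^2/2+(m+\frac12)j}$. It vanishes by a reflection $j\mapsto c-j$ when $m\equiv1\pmod 3$; otherwise it is a shifted pentagonal series, i.e.\ $\pm q^{c'}(q;q)_\ii$ by Euler's theorem. So no sign-reversing or orbit-membership argument can leave ``exactly one surviving term''. Your ``rescaling to level $N+1$'' is an unproved product identity rather than a reindexing, and Step 4 only records the expected answer.

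The missing idea is to see level $N+1$ in the products before expanding anything. Put $s_{N+1}:=q^{\frac12}x^{-1}$. Then $(q^{\frac12}s_ix,q^{\frac12}s_i^{-1}x^{-1};q)_\ii=(qs_i/s_{N+1},s_{N+1}/s_i;q)_\ii$, which are exactly the missing root factors of the $U(N+1)$ vector multiplet. Equivalently, your $\widehat{A}_{N-1}$ Macdonald sum times the $N$ theta series equals $(q;q)_\ii^{N-1}(s_1\cdots s_N)^{-1}$ times the $\widehat{A}_{N}$ Macdonald sum in $s_1,\dots,s_{N+1}$. That is where the missing $(q;q)_\ii^{N-1}$ comes from.

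This is the paper's substitution $y=x_{n+1}^{-1}q^{\frac12}$, which gives \eqref{Cm-r-n+1} with $n=N$. The half-index becomes a sum over $k$ of coefficients $[x^\de x_{n+1}^{m-(n+1)k}x_i^{-m}]g_{k,n+1}(0,q)$ in $n+1$ variables. From there your Step 4 picture becomes correct. By the $\widehat{A}_n$ Macdonald identity, each such coefficient is $(q;q)_\ii^{-n}$ times a single signed $q$-power, or zero. It is nonzero exactly when the exponent vector is a complete residue system mod $n+1$, i.e.\ $m\equiv 0$ or $m\equiv n+1-i\pmod{n+1}$. That would be a valid alternative finish.

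The paper instead stays with Stembridge's method. Lemma~\ref{qx for gm,n} at $z=0$ (the shift $x_{n+1}\to qx_{n+1}$) moves $g_{k,n+1}$ to $g_{t,n+1}$, as in \eqref{gk,n+1-gt,n+1}. Antisymmetry leaves only those two residue cases, and \eqref{Cm-Al,n+1} with Theorem~\ref{A1} supplies the $q$-powers. (Minor: in Step 1 the substitution in \eqref{jtpi1} is $z=q^{-1/2}s_ix$, without the minus sign; your displayed result is correct.)
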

In particular, when $N=3$, we get an explicit formula for $\langle W_m\rangle^{{\rm SU}(3)_{-4}}$ and confirm Conjecture \ref{conj-SU34}.
\begin{cor}[The ${\rm SU}(3)_{-4}$ conjecture]\label{cor-SU34-conj}
We have
\begin{align}\label{SU34-W}
&\langle W_m\rangle^{{\rm SU}(3)_{-4}}(x;q)  \\
&= \begin{cases}
    x^{2m} q^{\frac34m^2} (q^{-\frac{m}{4}} + 1 + q^{\frac{m}{4}}) (q^{\frac32-2m}x^{-3},q^{\frac32+2m}x^3,q^3;q^3)_\infty/(q;q)_\infty, & m \equiv 0 \pmod 4,\\
    x^{2m} q^{\frac34m^2 \pm \frac{m}{4}} (q^{\frac32-2m}x^{-3},q^{\frac32+2m}x^3,q^3;q^3)_\infty/(q;q)_\infty, & m \equiv \pm1 \pmod 4,\\
    -x^{2m} q^{\frac34m^2} (q^{\frac32-2m}x^{-3},q^{\frac32+2m}x^3,q^3;q^3)_\infty/(q;q)_\infty,  & m \equiv 2 \pmod 4.
\end{cases} \nonumber
\end{align}
As a consequence, Conjecture \ref{conj-SU34} holds.
\end{cor}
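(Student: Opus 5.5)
Corollary \ref{cor-SU34-conj} is the case $N=3$ of Theorem \ref{thm-SUN-N1-general}, so I would start by specializing that theorem. With $N=3$ we have $t=\lfloor m/4\rfloor$, and the common theta series becomes
\[
S_m(x,q):=\sum_{k\in\bZ}(-1)^{k}x^{3k}q^{3k^2/2-mk}.
\]
The first task is to put $S_m$ into product form with the Jacobi triple product identity \eqref{jtpi1}, taken with $q\to q^3$. To do this I complete the square in the exponent, $\tfrac32k^2-mk$, by shifting to $k=n+j$ with $j$ chosen to fit the target form. The target factor $(q^{\frac32-2m}x^{-3},q^{\frac32+2m}x^3,q^3;q^3)_\infty$ corresponds to the series $\sum_n(-1)^n q^{3n^2/2}(x^3q^{2m})^n$. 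Matching this against the original exponent suggests the shift $k=n-m$, since then
\[
\tfrac32(n-m)^2-m(n-m)=\tfrac32n^2-4mn+\tfrac52m^2 .
\]
This does not match, so the plan is to check the exponents carefully rather than trust this guess. The cleaner route is to write $x^{3k}q^{3k^2/2-mk}=q^{3k^2/2}(x^3q^{-m})^k$ and apply the triple product directly. That gives
\[
S_m=(q^{3/2}x^3q^{-m},\,q^{3/2}x^{-3}q^{m},\,q^3;q^3)_\infty .
\]
To reach the argument $q^{\frac32+2m}x^3$ appearing in the corollary, I then shift the index by $-m$ and use the quasi-periodicity
\[
\theta(zq^{3j})=(-1)^j z^{-j}q^{-3j(j-1)/2}\,\theta(z)
\]
of $\theta(z)=(q^{3/2}z,q^{3/2}z^{-1},q^3;q^3)_\infty$, with $j=m$. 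This step produces the prefactor $x^{3m}$ times a power of $q$ times a sign. Combined with the factor $x^{-m}$ already present in the theorem, it gives $x^{2m}$, which agrees with the corollary.

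Next I collect the powers of $q$. Writing $m=4t+r$ with $r\in\{0,1,2,3\}$, I need to verify that the total exponent equals $\tfrac34m^2$ in the case $r=0$ (where the factor $q^{-m/4}+1+q^{m/4}$ is absorbed), $\tfrac34m^2+\tfrac m4$ for $r=1$, $\tfrac34m^2$ with a minus sign for $r=2$, and $\tfrac34m^2-\tfrac m4$ for $r=3$. For $r\neq 0$ the theorem contributes the factor $-q^{mt+m/2}$, and $t=(m-r)/4$. For $r=0$ it contributes the prefactor $q^{(m-3)t+m/2}$ times $(1-q^{3t})/(1-q^t)=1+q^t+q^{2t}$, which has the form $q^{t}\,(q^{-t}+1+q^{t})$ with $t=m/4$.

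The sign also has to be tracked. The quasi-periodicity contributes $(-1)^m$, and this combines with the $-1$ from the theorem for $r\neq 0$. Treating each residue class mod 4 separately, I must recover the pattern in the corollary: a $+$ sign for $r=0,1,3$ and a $-$ sign for $r=2$.

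I expect this exponent and sign bookkeeping to be the main obstacle. The shift and the quasi-periodicity produce quadratic terms in $m$ that must cancel exactly with the terms $mt$, $m/2$ and $(m-3)t$. So I would organize the computation by substituting $m=4t+r$ and expanding everything as a polynomial in $t$.

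Finally, the conjecture itself follows from this formula. Conjecture \ref{conj-SU34} compares $\langle W_m\rangle$ with $\mathbb{II}$, $\langle W_1\rangle$ and $\langle W_2\rangle$ for $m\equiv0,1,2\pmod 3$. I would use $\theta(zq^{3j})\propto\theta(z)$ once more: the product $(q^{\frac32-2m}x^{-3},q^{\frac32+2m}x^3,q^3;q^3)_\infty$ depends on $m$ modulo $3$ only up to a monomial and sign, because $2m$ shifts by multiples of $3$. Hence $\langle W_m\rangle$ equals a signed monomial times the $m=0,1,2$ base case in the same residue class, and the extra factor $1+q^{m/4}+q^{m/2}$ appears exactly when $4\mid m$, $m\ne0$. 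The remaining step is to compare this with the given formulas for $\mathbb{II}$, $\langle W_1\rangle$ and $\langle W_2\rangle$, which are themselves the cases $m=0,1,2$ of the same formula.
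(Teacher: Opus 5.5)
\textbf{The step that fails.} Your overall route matches the paper's. You specialize Theorem \ref{thm-SUN-N1-general} to $N=3$, shift the summation index, and apply \eqref{jtpi1} with $q\to q^3$. However, the one formula you commit to for the key step is wrong, and the bookkeeping you postpone is exactly where it breaks.

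For $\theta(z)=(q^{3/2}z,q^{3/2}z^{-1},q^3;q^3)_\infty=\sum_k(-1)^kq^{3k^2/2}z^k$ the quasi-periodicity law is $\theta(zq^{3j})=(-1)^jz^{-j}q^{-3j^2/2}\theta(z)$. Already for $j=1$ one has $\theta(zq^3)=-z^{-1}q^{-3/2}\theta(z)$. Your exponent $-3j(j-1)/2$ is the law for the other normalization $(z,q^3/z,q^3;q^3)_\infty$.

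With $z=x^3q^{-m}$ and $j=m$, your version leaves a spurious factor $q^{-3m/2}$, so none of the four cases comes out right. For example, at $m=2$ it gives $-x^4\theta(x^3q^4)/(q;q)_\infty$ instead of $-x^4q^3\theta(x^3q^4)/(q;q)_\infty$; only the latter agrees with \eqref{SU34-W2}.

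Your discarded trial shift $k=n-m$ simply had the wrong sign. The paper replaces $k$ by $k+m$, and then
\[
\tfrac32(k+m)^2-m(k+m)=\tfrac32k^2+2mk+\tfrac12m^2 .
\]
This gives directly $S_m=(-1)^mx^{3m}q^{m^2/2}(q^{\frac32-2m}x^{-3},q^{\frac32+2m}x^3,q^3;q^3)_\infty$, with no quasi-periodicity needed.

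\textbf{The corrected bookkeeping} is short. For $4\nmid m$ the theorem gives
\[
\langle W_m\rangle=(-1)^{m+1}x^{2m}q^{mt+\frac{m}{2}+\frac{m^2}{2}}\,\theta(x^3q^{2m})/(q;q)_\infty .
\]
Substituting $t=(m-r)/4$, the exponent becomes $\frac{3}{4}m^2+\frac{(2-r)m}{4}$. This is $+\frac{m}{4}$, $0$ and $-\frac{m}{4}$ for $r=1,2,3$ respectively, with the minus sign occurring only for $r=2$.

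For $r=0$, the exponent $(m-3)t+\frac{m}{2}+\frac{m^2}{2}$ equals $\frac{3}{4}m^2-\frac{m}{4}$. Multiplied by $1+q^{m/4}+q^{m/2}$, this gives $q^{\frac34 m^2}(q^{-m/4}+1+q^{m/4})$, and the sign is $(-1)^m=+1$.

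Your sign analysis is correct. Your deduction of Conjecture \ref{conj-SU34} from the fact that $\theta(x^3q^{2m})$ depends on $m \bmod 3$ only up to a signed monomial is also correct.
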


It should be emphasized that the proofs of the above theorems are much more difficult than the special cases discussed in \eqref{eq-SU22}--\eqref{SU33-Wk}. We briefly discuss the method in \cite{OS-24JHEP} to prove \eqref{eq-SU22}--\eqref{SU33-Wk}.   Okazaki and Smith \cite{OS-24JHEP} first expand the infinite products using the Jacobi triple product identity \eqref{jtpi1}, and then we can express the constant term in \eqref{eq-CT} as  some multi-sums. For some small $N$, these multi-sums can be further simplified by delicate analysis.  However, this method requires a case-by-case analysis and cannot be applied to the general case. If we expand the infinite products in the integrals of Theorems \ref{thm-SUN-N}--\ref{thm-SUN-N-12-general}, \ref{thm-SUN-N-1} and \ref{thm-SUN-N1-general} using the Jacobi triple product identity (for Theorem \ref{thm-SUN-N-12} we may also need Euler's $q$-exponential identity), we will get too many sums whose simplifications seem far from reach. Okazaki and Smith further commented that \cite[Section 1.2]{OS-24JHEP}
\begin{quote}
    It would be nice to give analytic proofs of our formulas for arbitrary ${\rm SU}(N)$ gauge group. For higher rank gauge groups, they will be obtained from a certain multivariable
extension of the Jacobi triple product identity.
\end{quote}
It is not clear to us whether such multivariable version of \eqref{jtpi1} exists or not. Nevertheless, we find the clues to prove these conjectures using a different approach inspired by Stembridge's work \cite{Stembridge1988}. In that work  Stembridge presented a short proof of Macdonald's conjecture \cite[(3.1)]{Macdonald-1982} for the root system $A_{n-1}$:
\begin{align}\label{eq-speical-q-Dyson}
 [x^0]\prod\limits_{1\leq i<j\leq n} (qx_ix_j^{-1};q)_{a}(x_jx_i^{-1};q)_a=\frac{(q;q)_{an}}{(q;q)_a^n}.
\end{align}
This is also the equal parameter case of Andrews' $q$-Dyson conjecture \cite{Andrews1975} proved by Zeilberger and Bressoud \cite{Zeilberger-Bressoud}.
Motivated by Stembridge's proof of \eqref{eq-speical-q-Dyson}, we carefully analyze certain ways of collecting coefficients in antisymmetric  formal Laurent series with $n$ variables and establish formulas to simplify them. Then for some specific series related to the infinite products in the integrands of Theorems \ref{thm-SUN-N}--\ref{thm-SUN-N-12-general} and \ref{thm-SUN-N-1}--\ref{thm-SUN-N1-general}, we are able to evaluate their constant terms and thus prove these ${\rm SU}(N)$ conjectures and evaluate the half-indices for Wilson lines with arbitrary charge. Furthermore, we will also generalize these ${\rm SU}(N)$ conjectures to involve an additional parameter (see Theorems \ref{thm-CT-1}, \ref{thm-CT-SU-N12} and \ref{thm-CT-SU-N1}).

The remainder of this paper is organized as follows. In Section \ref{sec-pre} we collect and establish some useful results about antisymmetric formal Laurent series in several variables.  We present proofs and generalizations of Theorems \ref{thm-SUN-N}--\ref{thm-SUN-N-12-general} and \ref{thm-SUN-N-1}--\ref{thm-SUN-N1-general} and Corollaries \ref{cor-SUN-N-12-general} and \ref{cor-SU34-conj} in Section \ref{sec-general}. Finally, we discuss some constant term identities as byproducts of the main results and their proofs and give several remarks in Section \ref{sec-rem}.

\section{Preliminaries}\label{sec-pre}

In this section, we collect and establish some important facts about antisymmetric polynomials with several variables.

\subsection{Some notations and basic facts}
We follow \cite{Stembridge1988} to introduce some basic facts about extracting certain terms of formal Laurent series. Let $S_n$ be the symmetric group of $[n] := \{1,2,\dots,n\}$. We say that a formal power series $F_n=\sum_{\alpha\in \mathbb{Z}^n} c_\alpha x^\alpha \in \mathbb{C}[[x_1^{\pm 1},\dots,x_n^{\pm 1}]]$ is symmetric if $\omega F_n = F_n$ for any permutation $\omega\in S_n$, and it is antisymmetric if $\omega F_n = (\text{sgn } \omega) F_n$, where we define $\omega (x^{\ap}) := x^{\omega \ap}=x_1^{\alpha_{\omega(1)}}\cdots x_n^{\alpha_{\omega(n)}}$.

Throughout this section, we will use the following notations:
\begin{itemize}
    \item $\de = \de(n):=(n-1,n-2,\dots,1,0) \in \bZ^n$,
    \item $F_n$: an antisymmetric formal Laurent series in $\bC[[x_1^{\pm},\dots,x_n^{\pm}]]$,
    \item $e_i=(0,\dots,0,1,0,\dots,0)$: the $i$-th standard basis of $\bZ^n$,
    \item $\be_i$: the $i$-th coordinate of $\be=(\beta_1,\dots,\beta_n) \in \bZ^n$,
    \item $\omega_t := (n-t+1,\dots,n)^{-1} \in S_n$ is a $t$-cycle and clearly $\text{sgn }\omega_t = (-1)^{t-1}$,
    \item $x:=x_1\cdots x_n$.
\end{itemize}

Now we regard $[x^{\ap}]F$ as a bilinear operation. More precisely, for $F=\sum_{\alpha} c_\alpha x^\alpha$ a polynomial in $\bC[x_1^{\pm},\dots,x_n^{\pm}]$ and $G$ a formal power series in $\bC[[x_1^{\pm},\dots,x_n^{\pm}]]$, we define
\[
[F]G = \big[\sum\limits_{\alpha} c_{\ap} x^{\ap}\big] G = \sum\limits_{\alpha} c_{\ap} [x^{\ap}] G.
\]
It is easy to see that the multiplication by $H \in \bC[x_1^{\pm},\dots,x_n^{\pm}]$ has the effect
\[
[FH]G = [F]G(x) H(x^{-1})
\]
with $H(x^{-1}) = H(x_1^{-1},\dots, x_n^{-1})$.

Given any antisymmetric formal series $F_n$,  clearly we have
\begin{align}\label{anti-sign}
[x^{\omega(\ga)}]F_{n} = (\text{sgn } \omega) [x^{\ga}] F_{n}, \quad  \forall \ga \in \bZ^{n}.
\end{align}
We denote the space of antisymmetric polynomials in $\bC[x_1^\pm,\dots,x_n^\pm]$ by  $\bC[x_1^\pm,\dots,x_n^\pm]^{\mathrm{alt}}$. We  define a linear mapping from $\bC[x_1^\pm,\dots,x_n^\pm]^{\mathrm{alt}}$  to $\mathbb{C}$ that acts as
\begin{equation}\label{phi-defn}
\varphi: \sum_{\omega \in S_n} (\text{sgn } \omega) \omega x^{\ap} \mapsto [x^{\ap}] F_n.
\end{equation}
It is easy to see that $\varphi$ is well-defined.

Recall the following fact from Macdonald's book \cite[\uppercase\expandafter{\romannumeral 3}, (1.4)]{Macdonald-book}:
\begin{align}
\sum_{\omega \in S_n} \omega \Big( \prod_{i<j}^n \frac{x_i-zx_j}{x_i-x_j} \Big) = \prod_{i=1}^n \frac{1-z^i}{1-z} =: P_n(z).
\end{align}
This can be rewritten as
\begin{align}\label{sgn-id}
\sum_{\omega \in S_n} (\text{sgn } \omega) \omega \Big( \prod_{i<j}^n (x_i-zx_j) \Big) = P_n(z) \prod_{i<j}^n (x_i-x_j).
\end{align}
In particular, if we set $z=0$, we obtain
\begin{align}\label{id-x-prod}
\sum_{\omega \in S_n} (\text{sgn } \omega) \omega(x^{\de}) = \prod_{i<j}^n (x_i-x_j).
\end{align}
 Applying $\varphi$ to both sides of \eqref{id-x-prod} multiplied by $P_n(z)$, and using \eqref{phi-defn} and \eqref{sgn-id}, we obtain
\begin{align}
\Big[\prod_{i<j}^n (x_i-zx_j)\Big] F_n = P_n(z) [x^{\de}] F_n. \label{pnz}
\end{align}

Moreover, we state a useful lemma in the work of Stembridge \cite[Lemma 3.2]{Stembridge1988}.
\begin{lem}\label{lem-series}
Let $F(z,q)$ and $G(z,q)$ be formal power series. If $F(q^k,q) = G(q^k,q)$ for infinitely many integers $k \ge 0$, then $F=G$.
\end{lem}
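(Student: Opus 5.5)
The plan is to reduce to showing that $H:=F-G$ vanishes, using a lowest-order-term argument in $q$. I will read ``formal power series'' as in Stembridge's setting: $H(z,q)=\sum_{a\ge 0}\sum_{b\ge 0} c_{a,b} z^a q^b$, so that the substitution $z=q^k$ ($k\ge 0$) gives a well-defined element of $\bC[[q]]$. The same argument works verbatim if $z$ is allowed Laurent exponents bounded below. Since substitution is linear, $H(q^k,q)=0$ for infinitely many $k\ge 0$, and it suffices to prove $H=0$.

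Suppose instead that $H\neq 0$, and write $H=\sum_{a\ge 0} h_a(q) z^a$ with $h_a\in\bC[[q]]$. Let $a_0$ be the smallest index with $h_{a_0}\neq 0$, and let $b_0=\operatorname{ord}_q h_{a_0}$. Then $c_{a_0,b_0}\neq 0$ and $h_{a_0}(q)=c_{a_0,b_0}q^{b_0}+O(q^{b_0+1})$. Substituting $z=q^k$ gives
\[
H(q^k,q)=h_{a_0}(q)\,q^{ka_0}+\sum_{a>a_0} h_a(q)\,q^{ka}.
\]

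We compare $q$-orders. The first term has order exactly $b_0+ka_0$, with leading coefficient $c_{a_0,b_0}$. For $a\ge a_0+1$, each $h_a(q)q^{ka}$ has order at least $ka\ge ka_0+k$, since $h_a\in\bC[[q]]$ has nonnegative order. Hence once $k>b_0$, every term with $a>a_0$ lies in $q^{b_0+ka_0+1}\bC[[q]]$. This sum is a legitimate formal sum, because only finitely many $a$ contribute to each fixed power of $q$ (order at least $ka$).

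It follows that for every $k>b_0$, the coefficient of $q^{b_0+ka_0}$ in $H(q^k,q)$ equals $c_{a_0,b_0}\neq 0$. But by hypothesis there are infinitely many $k\ge 0$ with $H(q^k,q)=0$, so in particular some such $k$ exceeds $b_0$. This is a contradiction, so $H=0$ and $F=G$.

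The only delicate point is well-definedness and the order estimate for the tail $\sum_{a>a_0}$. This is exactly where boundedness below of the $z$-exponents and nonnegativity of the $q$-exponents are used, so the proof should state these assumptions explicitly, as Stembridge does. There is no other real obstacle.
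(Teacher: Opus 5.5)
Your proof is correct. It is essentially the standard lowest-order-term argument behind Stembridge's Lemma 3.2, which the paper cites rather than reproves: take the least $z$-degree $a_0$ with $h_{a_0}\neq 0$, and note that the coefficient $c_{a_0,b_0}$ of $q^{b_0+ka_0}$ in $H(q^k,q)$ survives for every $k>b_0$. One small correction: the substitution $z=q^0=1$ is not well defined for a general formal power series, so your claim about $k\ge 0$ is too broad; this is harmless, since you only use $k>b_0\ge 0$, and hence $k\ge 1$.
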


\subsection{Extracting terms from antisymmetric formal Laurent series }\label{sec-antisymmetric}
In this subsection, we will focus on explicit evaluations of the following way of gathering coefficients in any antisymmetric formal Laurent series $F_n \in \mathbb{C}[[x_1^\pm,x_2^\pm,\dots,x_n^\pm]]$:
\begin{align}
 W(\alpha;y):= \Big[ x^{\alpha} \prod_{i=1}^{n-1} \Big( 1-y \frac{x_n}{x_i} \Big) \Big] F_n(z,q).
\end{align}
We will evaluate $W(\alpha,y)$ for several families of exponents $\alpha$. These results are fundamental to Section \ref{sec-general} where we prove Theorems \ref{thm-SUN-N}--\ref{thm-SUN-N-1}. We have
\begin{align}\label{lem-1-expan}
    x^{\alpha} \prod_{i=1}^{n-1} \Big( 1-y \frac{x_n}{x_i} \Big) = \sum_{T \subseteq [n-1]} (-y)^t x^{\alpha}x_n^t  x^{-T} = \sum_{T \subseteq [n-1]} (-y)^t x^{\alpha+te_n} x^{-T}.
    \end{align}
Here $t:=|T|$, and when $T=\{i_1,i_2,\dots,i_t\}$ ($i_1<i_2<\cdots<i_t$) we also understand $T$ as the vector $e_{i_1}+e_{i_2}+\cdots +e_{i_t}$ so that $x^{-T} := \prod_{i \in T}x_i^{-1}$. A key fact in our proof is that the coefficient $[x^{\gamma}]F_{n}$ vanishes unless $\gamma \in\bZ^{n}$ has distinct components due to the antisymmetry of $F_n$. Thus, we will focus on finding all possible subsets $T\subseteq [n-1]$ so that the exponent vector $\alpha+te_n-T$ has distinct components.

The following lemma evaluates $W(\alpha,y)$ when $\alpha=\delta+ke_l$ where $k\in \mathbb{Z}$ and $1\leq l\leq n$.
\begin{lem}
\label{F xi k}
    Let $y$ be an indeterminate. For any $k \in \bZ$ we have
    \begin{align}
        & \Big[ x^{\de} x_l^k \prod_{i=1}^{n-1} \Big( 1-y \frac{x_n}{x_i} \Big) \Big] F_n(z,q) \notag \\
        &= \frac{1-y^{n-l}}{1-y} \big[ x^{\de} x_l^k \big] F_{n}(z,q) + \frac{y^{n-l}-y^{n}}{1-y} \big[ x^{\de} x_{l+1}^k \big] F_{n}(z,q), \quad 1 \le l \le n-1, \label{F xl k} \\
        & \Big[ x^{\de} x_{n}^k \prod_{i=1}^{n-1} \Big( 1-y \frac{x_{n}}{x_i} \Big) \big] F_{n}(z,q) = \sum_{t=0}^{n-1} y^t \big[ x^{\de} x_{n-t}^k \big] F_{n}(z,q). \label{F xn k}
    \end{align}
\end{lem}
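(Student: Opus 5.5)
The plan is to expand the product using \eqref{lem-1-expan}:
\[
\Big[ x^{\de} x_l^k \prod_{i=1}^{n-1}\Big(1-y\frac{x_n}{x_i}\Big)\Big]F_n=\sum_{T\subseteq[n-1]}(-y)^t\,[x^{\ga_T}]F_n,\qquad \ga_T:=\de+ke_l+te_n-T .
\]
We then decide, for each $T$, whether $[x^{\ga_T}]F_n$ vanishes. If it does not, we identify $\ga_T$ as $\omega(\de+ke_{l'})$ for an explicit permutation $\omega$, and use \eqref{anti-sign} to rewrite the coefficient as $\pm[x^{\de}x_{l'}^k]F_n$. By antisymmetry, only $T$ for which $\ga_T$ has distinct components contribute. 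Since $\de_i-1=\de_{i+1}$, the main constraint is local. If $i\in T$ and $i+1\notin T$ with $i+1\le n-1$, then coordinates $i$ and $i+1$ of $\ga_T$ both equal $n-i-1$, unless one of them is the shifted coordinate $l$. Hence away from $l$, the set $T$ must be a union of runs of consecutive indices, each of which ends at $n-1$ or is cut by the index $l$.

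For the expected contributions, take the tail sets $T=\{n-t,\dots,n-1\}$ with $0\le t\le n-1$. Here coordinates $n-t,\dots,n-1$ drop by one and coordinate $n$ becomes $t$, the value freed at position $n-t$. So $\ga_T$ is obtained from $\de+ke_l$ by the cyclic shift $\omega_{t+1}$ on the last $t+1$ positions, which has sign $(-1)^t$; this cancels the $(-1)^t$ in $(-y)^t$. There are two cases.
\begin{itemize}
\item If $n-t>l$, i.e.\ $t\le n-l-1$, position $l$ is untouched. The term equals $y^t[x^{\de}x_l^k]F_n$, and summing over these $t$ gives $\frac{1-y^{n-l}}{1-y}[x^{\de}x_l^k]F_n$.
\item If $n-t\le l$, i.e.\ $n-l\le t\le n-1$, the extra $k$ sits at position $l$ with value $\de_l-1+k=\de_{l+1}+k$. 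After undoing the cycle, $\ga_T$ becomes $\de+ke_{l+1}$ with the same sign. These terms give $\frac{y^{n-l}-y^n}{1-y}[x^{\de}x_{l+1}^k]F_n$.
\end{itemize}
This is \eqref{F xl k}. When $l=n$, the $k$ sits in the last coordinate, and the tail cycle carries it to position $n-t$. This gives $\sum_t y^t[x^{\de}x_{n-t}^k]F_n$, which is \eqref{F xn k}.

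The main obstacle is showing that no other $T$ contributes. Non-tail sets can avoid the local collision when the run structure is broken exactly at $l$, or when $l-1\in T$ and $l\notin T$. A particular concern is the last coordinate $t$: it collides with coordinate $n-t$ unless that position is decreased or is $l$. I would first check, by the run analysis, that every surviving non-tail $T$ forces some coincidence: either the value $t$ of coordinate $n$ duplicates an unmodified $\de_j$, or the block $\{l,\dots\}$ produces a repeated value. For $k$ values where $\de_l+k$ lands on a freed value, such a $T$ might genuinely survive. In that case I would pair it with another $T'$ of the same size that produces the same multiset of exponents with an odd transposition, so that the two terms cancel by \eqref{anti-sign}. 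The natural candidate is toggling membership of the index adjacent to $l$. Verifying that this pairing is a fixed-point-free involution on the bad sets is the step I expect to need the most care.
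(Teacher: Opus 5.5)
Your setup and your evaluation of the tail sets $T=\{n-t,\dots,n-1\}$ are correct and match the paper, including the sign bookkeeping through $\omega_{t+1}$ and the shift of $k$ from position $l$ to $l+1$ once $n-t\le l$. The gap is the step you flag yourself: you never show that no other $T$ contributes. Your fallback, a sign-reversing pairing for sets that ``genuinely survive'' for special $k$, targets something that never occurs. (For $l=n$ your run analysis already finishes the job, since it forces $T$ to be a final segment of $[n-1]$.)

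The missing idea for $1\le l\le n-1$ is to use the last coordinate of $\ga_T$, which equals $t$, together with the cardinality $|T|=t$. For $j\in[n-1]\setminus\{l\}$, the $j$-th entry of $\ga_T$ is $n-j$ if $j\notin T$ and $n-j-1$ if $j\in T$. It therefore collides with the last entry when $j=n-t\notin T$ or when $j=n-t-1\in T$. Your run analysis says $T\cap\{1,\dots,l-1\}$ and $T\cap\{l+1,\dots,n-1\}$ are final segments of those intervals.
\begin{itemize}
\item If $1\le t\le n-l-1$, then $n-t\ge l+1$, so $n-t\in T$. Hence $\{n-t,\dots,n-1\}\subseteq T$, and $|T|=t$ forces equality.
\item If $n-l\le t\le n-2$, then $1\le n-t-1\le l-1$, so $n-t-1\notin T$. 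Hence $T\cap\{1,\dots,n-t-1\}=\emptyset$, so $T\subseteq\{n-t,\dots,n-1\}$, and equality again follows by cardinality.
\item The cases $t=0$ and $t=n-1$ are trivial.
\end{itemize}

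This argument never looks at the entry in position $l$, so it holds for every $k\in\bZ$. For each $t$ the tail set is the only candidate, and no cancellation between distinct sets $T$ is ever needed. If that tail set happens to have a repeated entry for some $k$, both sides vanish consistently by \eqref{anti-sign}. This is exactly how the paper proceeds: it organizes the cases by $t$ rather than by the shape of $T$, which is what makes the uniqueness immediate.
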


\begin{proof}
(1)  Suppose $1\le l \le n-1$. We write the first exponent of $x$ in \eqref{lem-1-expan} as
    \begin{align}\label{lem-1-power}
    \de+ke_l+te_n = ( \underbrace{n-1,\dots,n-l+1}_{l-1},n-l+k,\underbrace{n-l-1,\dots,1}_{n-l-1},t).
    \end{align}
    To ensure that the components of this vector are distinct, $T$ must satisfy the following necessary conditions:
    \begin{enumerate}
        \item[(i)] If $i,i+1 \in [n-1] \setminus \{l\}$, then $i \in T$ implies $i+1 \in T$;
        \item[(ii)] If $i \in T$ and $i\neq l$, then $n-i-1\neq t$;
        \item[(iii)] If $i=l \in T$, then $n-l+k-1\neq t$.
    \end{enumerate}

We now divide our discussion into three cases.

\textbf{Case 1.} We first consider the case $0 \le t \le n-l-1$. If $t=0$, then $T=\emptyset$. If $1\leq t\leq n-l-1$, then $l+1\leq n-t\leq n-1$ and
    \[
    \de+ke_l+te_n = (n-1,\dots,n-l+1,\mathop{n-l+k}_{\substack{\uparrow \\ x_l}},n-l-1,\dots,\mathop{t}_{\substack{\uparrow \\ x_{n-t}}},\dots,1,t).
    \]
Since the $(n-t)$-th component and the last component are the same, we must have $n-t \in T$, and then by (i) we can get $\{ n-t,\dots,n-1\} \subseteq T$. Since $|T|=t$, $T$ must be $\{ n-t,\dots,n-1\}$. For such $T$ we have
    \begin{align*}
       &  \de+ke_l+te_n-T= (n-1,\dots,n-l+1,\mathop{n-l+k}_{\substack{\uparrow \\ x_l}},n-l-1,\dots,t+1,\mathop{t-1}_{\substack{\uparrow \\ x_{n-t}}},\dots,0,t) \\
        &= \omega_{t+1} (n-1,\dots,n-l+1,\mathop{n-l+k}_{\substack{\uparrow \\ x_l}},n-l-1,\dots,t+1,t,\mathop{t-1}_{\substack{\uparrow \\ x_{n-t+1}}},\dots,0)\\
        &= \omega_{t+1} (\de+ke_l).
    \end{align*}

\textbf{Case 2.} If $n-l \le t \le n-2$, then $1\leq n-t-1\leq l-1$ and
\[
\de + ke_l + te_n = \de+ke_l+te_n = (n-1,\dots,\mathop{t+1}_{\substack{\uparrow \\ x_{n-t-1}}},\dots,\mathop{n-l+k}_{\substack{\uparrow \\ x_l}},n-l-1,\dots,1,t).
\]
Clearly $n-t-1 \notin T$. Hence $T \subseteq \{n-t,\dots,n-1\}$. Since $|T| = t$, we must have $T = \{n-t,\dots,n-1\}$, and hence
\begin{align*}
&\de + ke_l + te_n -T = (n-1,\dots,t+1,t-1,\dots,\mathop{n-l+k-1}_{\substack{\uparrow \\ x_l}},n-l-2,\dots,0,t) \\
&= \omega_{t+1}(n-1,\dots,t+1,t,t-1,\dots,\mathop{n-l+k-1}_{\substack{\uparrow \\ x_{l+1}}},n-l-2,\dots,0) \\
&= \omega_{t+1} (\de+ke_{l+1}).
\end{align*}

\textbf{Case 3.} If $t=n-1$, then obviously we have $T=[n-1]$ and
\begin{align*}
&\de + ke_l + (n-1)e_n -T = (n-2,\dots,\mathop{n-l+k-1}_{\substack{\uparrow \\ x_l}},n-l-2,\dots,0,n-1) \\
&= \omega_{n}(n-1,n-2,\dots,\mathop{n-l+k-1}_{\substack{\uparrow \\ x_{l+1}}},n-l-2,\dots,0) \\
&= \omega_{n} (\de+ke_{l+1}).
\end{align*}
This can be combined with Case 2.

In summary, for each $0 \le t \le n-1$, there is at most one $t$-set $T \subseteq [n-1]$ such that  $\de+ke_l+te_n-T$ has distinct components. The only possible set is $T = \{n-t,\dots,n-1\}$ and for such $T$ we have
\[
    x^{\de}x_l^kx_{n}^tx^{-T} = \begin{cases}
        x^{\omega_{t+1}(\de+ke_l)}, & 0 \le t \le n-l-1 \\
        x^{\omega_{t+1}(\de+ke_{l+1})}, & n-l \le t \le n-1
    \end{cases}
    \]
with $\omega_{t+1} = (n-t,n-t+1,\dots,n)^{-1}$. Thus,
\begin{align*}
        &\bigg[ x^{\de} x_l^k \prod_{i=1}^{n-1} \Big( 1-y\frac{x_{n}}{x_i} \Big)\bigg] F_{n}
        = \bigg[ \sum_{t=0}^{n-l-1} (-y)^t x^{\omega_{t+1}(\de+ke_l)} \bigg] F_{n} + \bigg[ \sum_{t=n-l}^{n-1} (-y)^t x^{\omega_{t+1}(\de+ke_{l+1})} \bigg] F_{n} \\
       & = \frac{1-y^{n-l}}{1-y} \big[ x^{\de} x_l^k \big] F_{n} + \frac{y^{n-l}-y^{n}}{1-y} \big[ x^{\de} x_{l+1}^k \big] F_{n},
    \end{align*}
where the last equality holds due to \eqref{anti-sign}. This proves \eqref{F xl k}.

(2) Suppose $l=n$. We write the first exponent of $x$ in \eqref{lem-1-expan} as
\begin{align}\label{x-exponent-2}
\de+(k+t)e_n = (n-1,\dots,1,k+t).
\end{align}
As in part (1), to ensure that ${\de+(k+t)e_n-T}$ has distinct components, a necessary condition is that $i \in T\cap [n-2]$ implies $i+1 \in T$.
Hence, for any $0 \le t \le n-1$, there is a unique $t$-subset $\{n-t,\dots,n-1 \}$ that satisfies this condition. For such $T$ we have
\begin{align*}
&\de+(k+t)e_n-T = (n-1,\dots,\mathop{t+1}_{\substack{\uparrow \\ x_{n-t-1}}},\mathop{t-1}_{\substack{\uparrow \\ x_{n-t}}},\dots,0,k+t) \\
&= \omega_{t+1} (n-1,\dots,t+1,\mathop{k+t}_{\substack{\uparrow \\ x_{n-t}}},t-1,\dots,0)= \omega_{t+1} (\de + ke_{n-t}).
\end{align*}
It follows that
\begin{align*}
\bigg[x^{\de} x_n^k \prod_{i=1}^{n-1} \Big(1-y\frac{x_n}{x_i}\Big) \bigg] F_n = \sum_{t=0}^{n-1} (-y)^t [x^{\omega_{t+1} (\de + ke_{n-t})}] F_n = \sum_{t=0}^{n-1} y^t [x^{\de}x_{n-t}^k] F_n,
\end{align*}
which proves \eqref{F xn k}.
\end{proof}

\begin{cor}\label{cor-k01}
We have
\begin{align}
    \bigg[ x^{\de} \prod_{i=1}^{n-1} \Big( 1-y \frac{x_n}{x_i} \Big) \bigg] F_n(z,q) &= \frac{1-y^n}{1-y} [x^\de]F_n(z,q), \label{F de} \\
    \bigg[ x^{\de}x_n^{-1} \prod_{i=1}^{n-1} \Big( 1-y \frac{x_n}{x_i} \Big) \bigg] F_n(z,q) &= [x^\de x_n^{-1}]F_n(z,q), \label{add-F-de} \\
     \bigg[ x^{\de}x_n \prod_{i=1}^{n-1} \Big( 1-y \frac{x_n}{x_i} \Big) \bigg] F_n(z,q) &= y^{n-1}[x^\de x_1]F_n(z,q). \label{add-F-cor}
\end{align}
\end{cor}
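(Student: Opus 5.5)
All three identities should follow from Lemma \ref{F xi k} by specializing $k$ and $l$ and then discarding vanishing coefficients. The vanishing comes from the fact that $[x^\gamma]F_n=0$ whenever $\gamma$ has two equal components, by the antisymmetry relation \eqref{anti-sign}.

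For \eqref{F de}, I take $k=0$ in \eqref{F xn k}. Every term on the right is then $[x^\de]F_n$, so the sum equals $\sum_{t=0}^{n-1}y^t[x^\de]F_n=\frac{1-y^n}{1-y}[x^\de]F_n$. Equivalently, one could use \eqref{F xl k} with $k=0$ and any $l$, where the two coefficients add up to $\frac{1-y^n}{1-y}$; both routes give the same answer, which is a useful consistency check.

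For \eqref{add-F-de}, I take $k=-1$ and $l=n$ in \eqref{F xn k}, which gives $\sum_{t=0}^{n-1}y^t[x^\de x_{n-t}^{-1}]F_n$. For $1\le t\le n-1$ the exponent $\de-e_{n-t}$ has $(n-t)$-th component $t-1$. This coincides with the $(n-t+1)$-th component of $\de$, which is also $t-1$; that index is at most $n$ and is not altered. So these coefficients vanish, and only the $t=0$ term $[x^\de x_n^{-1}]F_n$ survives.

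For \eqref{add-F-cor}, I take $k=1$ and $l=n$, which gives $\sum_{t}y^t[x^\de x_{n-t}]F_n$. For $0\le t\le n-2$ the exponent $\de+e_{n-t}$ has $(n-t)$-th component $t+1$, which equals the $(n-t-1)$-th component of $\de$. Hence all these terms vanish except $t=n-1$, leaving $y^{n-1}[x^\de x_1]F_n$.

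There is no real obstacle here. The only care needed is the index bookkeeping in the vanishing claims: checking that the colliding component (index $n-t+1$ in the second case, $n-t-1$ in the third) lies in $[n]$ and is distinct from the modified index. This holds precisely in the stated ranges of $t$, namely $t\ge1$ for \eqref{add-F-de} and $t\le n-2$ for \eqref{add-F-cor}.
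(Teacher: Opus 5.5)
Your proposal is correct and follows the paper's proof: specialize \eqref{F xn k} to $k=0,-1,1$ and discard the terms whose exponent vector has a repeated component, using antisymmetry. Your index bookkeeping matches the paper's. For $k=-1$ and $t\ge 1$, components $n-t$ and $n-t+1$ collide; for $k=1$ and $t\le n-2$, components $n-t$ and $n-t-1$ collide. You also write out the third case explicitly, where the paper only says it follows ``in a similar way.''
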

\begin{proof}
Setting $k=0$ in \eqref{F xn k}, we obtain \eqref{F de}. Next, we set $k=-1$ in  \eqref{F xn k}. Note that for $1\leq t\leq n-1$, the exponents of $x_{n-t}$ and $x_{n-t+1}$ are the same in $x^\delta x_{n-t}^{-1}$  and hence $[x^\delta x_{n-t}^{-1}]F_n(z,q)=0$. This proves \eqref{add-F-de}. The identity \eqref{add-F-cor} follows in a similar way by setting $k=1$ in  \eqref{F xn k}.
\end{proof}

For $0\leq l\leq n-2$ we define
\begin{align}\label{add-beta-defn}
v_l=v_l(a,b):= (a, \underbrace{1,\dots,1}_{l}, \underbrace{0,\dots,0}_{n-l-2},b) \in \bZ^{n}.
\end{align}
We will now evaluate $W(\alpha,y)$ for $\alpha=v_l(a,b)+\delta$.

Note that
\begin{align}\label{beta-start}
v_l+\de+te_n = (a+n-1,\underbrace{n-1,\dots,n-l}_{l}, \underbrace{n-l-2,\dots,1}_{n-l-2},b+t).
\end{align}
For some special choices of $(a,b)$, we find that $W(\alpha,y)$ can be expressed as a simple combination of a few coefficients of $F_n$.
\begin{lem}\label{lem-beta-coeff}
Let $y$ be an indeterminate. For $1 \le l \le n-1$, we have
\begin{align}
\bigg[ x^{v_l(a,0)+\de}\prod_{i=1}^{n-1} \Big( 1 - y \frac{x_{n}}{x_i} \Big) \bigg] F_{n}
=\frac{1-y^{n-l-1}}{1-y} [x^{v_l(a,0)+\de}] F_n - \frac{y^{n-l-1}-y^n}{1-y} [x^{v_{l+1}(a-1,0)+\de}] F_n.
\end{align}
\end{lem}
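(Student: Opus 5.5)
The plan is to imitate the proof of Lemma \ref{F xi k}. Expand the product by \eqref{lem-1-expan}, so the left side becomes $\sum_{T\subseteq[n-1]}(-y)^t[x^{v_l(a,0)+\de+te_n-T}]F_n$. Only subsets $T$ for which $v_l(a,0)+\de+te_n-T$ has distinct components can contribute. For each $t$ I will find all such $T$ and identify the resulting exponent, via a cycle $\omega$, with either $v_l(a,0)+\de$ or $v_{l+1}(a-1,0)+\de$. Then \eqref{anti-sign} converts each term into a signed coefficient of one of these two exponents.

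By \eqref{beta-start} with $b=0$, the exponent is
\[
(a+n-1,\ n-1,\dots,n-l,\ n-l-2,\dots,1,\ t).
\]
Here $x_j$ carries the value $n-j+1$ for $2\le j\le l+1$ and $n-j$ for $l+2\le j\le n-1$. The values $n-l-1$ and $0$ are missing. This gives the following necessary conditions on $T$, analogous to (i)--(iii) in Lemma \ref{F xi k}:
\begin{enumerate}
\item[(i)] inside each consecutive block $\{2,\dots,l+1\}$ and $\{l+2,\dots,n-1\}$, $i\in T$ forces $i+1\in T$;
\item[(ii)] for $j\in\{2,\dots,l+1\}$ in the first block, $l+1\in T$ forces $l+2\in T$ unless $l+2>n-1$ (when $l=n-2$ the second block is empty, and I will check this edge case separately);
\item[(iii)] no lowered or unlowered entry may equal the last entry $t$.
\end{enumerate}

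\textbf{Case $0\le t\le n-l-2$.} The value $t$ already occurs at $x_{n-t}$, which lies in the second block. So $n-t\in T$, and (i) together with $|T|=t$ forces $T=\{n-t,\dots,n-1\}$. As in Case 1 of Lemma \ref{F xi k}, the exponent equals $\omega_{t+1}(v_l(a,0)+\de)$. Since $(-y)^t\operatorname{sgn}\omega_{t+1}=y^t$, summing over $t$ gives $\frac{1-y^{n-l-1}}{1-y}[x^{v_l(a,0)+\de}]F_n$.

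\textbf{Case $n-l-1\le t\le n-1$.} I expect the surviving set to be $T=\{1\}\cup\{n-t+1,\dots,n-1\}$. This set lowers $x_1$ to $a+n-2$, which produces the $a-1$. It lowers a tail of the blocks so that the values, after placing $t$ in its sorted slot, become $(a+n-2,\ n-1,\dots,n-l-1,\ n-l-3,\dots,1,0)$, which is $v_{l+1}(a-1,0)+\de$ up to a cycle $\omega_t$ acting on the last $t$ positions. The sign is then $(-y)^t(-1)^{t-1}=-y^t$. Summing $t$ from $n-l-1$ to $n-1$ gives $-\frac{y^{n-l-1}-y^n}{1-y}[x^{v_{l+1}(a-1,0)+\de}]F_n$, as claimed.

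The main obstacle is the uniqueness claim in the second case. I must rule out every other $t$-subset, in particular those that omit $1$ and instead lower the whole second block and part of the first block, such as $T=\{n-t,\dots,n-1\}$. For these, I will check that lowering $x_{l+1}$ from $n-l$ to $n-l-1$ creates a clash either with $t$ itself or, via (i), with an unlowered neighbour. The argument is a counting one: a subset of size $t\ge n-l-1$ contained in $\{2,\dots,n-1\}$ must contain all of the second block and the tail of the first block. It then lowers exactly the values $n-l-2,\dots,1$ together with some top values, and one checks that a lowered value coincides with $t$. Hence $1\in T$ is forced, and the rest of $T$ is forced as in Case 2 of Lemma \ref{F xi k}.

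Degenerate values of $a$ need a separate remark. If $a+n-1$ or $a+n-2$ coincides with another entry, then the corresponding coefficients on both sides vanish by antisymmetry, so the identity still holds; I would verify this consistency explicitly for $a=0$ and $a=1$. The boundary case $l=n-1$, where the first range is empty and $v_{l+1}$ is not defined, should also be checked directly; there the right-hand side should reduce consistently with Corollary \ref{cor-k01}.
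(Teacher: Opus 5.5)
Your outline matches the paper's proof. It uses the same split of the range of $t$ at $n-l-2$, the same surviving sets $T=\{n-t,\dots,n-1\}$ and $T=\{1,n-t+1,\dots,n-1\}$, the same cycles $\omega_{t+1}$ and $\omega_t$, and the same sign bookkeeping. Your case $0\le t\le n-l-2$ is correct.

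The gap is in the step you flag as the main obstacle: uniqueness of $T$ for $n-l-1\le t\le n-2$. The tools you propose for it do not work.

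\begin{itemize}
\item Your condition (ii) is not a valid necessary condition. Lowering $x_{l+1}$ turns $n-l$ into $n-l-1$, which is precisely a value missing from the exponent. Meanwhile $x_{l+2}$ carries $n-l-2$ (or $n-l-3$ if lowered). So $l+1\in T$ imposes nothing on $l+2$. This is why the paper's propagation rule exempts $i=l+1$ (and $i=1$).
\item For the same reason, the clash you plan to exhibit, ``lowering $x_{l+1}$ from $n-l$ to $n-l-1$'', does not exist once $t>n-l-1$. For $T=\{n-t,\dots,n-1\}$ with $t\ge n-l$, the lowered $x_{l+1}$ collides with nothing.
\item Your counting claim does not follow from (i) and $|T|=t$ when $l\ge 2$. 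The claim is that a $t$-subset of $\{2,\dots,n-1\}$ with $t\ge n-l-1$ must contain the whole second block. A counterexample is $n=6$, $l=2$, $T=\{2,3,5\}$.
\end{itemize}

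The missing observation is local and disposes of all competitors at once. For $n-l-1\le t\le n-2$, the index $n-t$ lies in the first block $\{2,\dots,l+1\}$, where $x_{n-t}$ carries $t+1$. Lowering it would create a second entry equal to $t$ (the last entry), so $n-t\notin T$. Inside the first block, $i\in T$ forces $i+1\in T$, so no index of $\{2,\dots,n-t\}$ belongs to $T$. Hence $T\subseteq\{1\}\cup\{n-t+1,\dots,n-1\}$. This set has exactly $t$ elements, so it equals $T$; for $t=n-1$ one has $T=[n-1]$ trivially. With this in place, the rest of your computation goes through.

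The extra checks you plan are unnecessary:
\begin{itemize}
\item The determination of $T$ never uses the entry of $x_1$.
\item The permutation step via antisymmetry of $F_n$ is valid whether or not the permuted exponent has repeated entries, so no special values of $a$ need attention.
\item $l=n-1$ is not a real case, since $v_{n-1}(a,0)$ is not defined.
\end{itemize}
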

\begin{proof}
Any subset $T$ for which $v_l(a,0)+\de + te_n-T$ has distinct components must satisfy the following conditions:
\begin{itemize}
\item[(i)] If $ i \in T$, then $i+1 \in T$ when $i \in [n-2]\setminus \{1,l+1\}$;
\item [(ii)] If $i \in T$, then $b+t \ne \begin{cases}
a+n-2, & i=1\\
n-i, & 2 \le i \le l+1\\
n-i-1, & l+2 \le i \le n-1
\end{cases}.$
\end{itemize}
 We will focus on the $(n-t)$-th component in \eqref{beta-start} which is the exponent of $x_{n-t}$.

If $t=0$, then $T=\emptyset$. If $1 \le t \le n-l-2$, then $l+2\leq n-t\leq n-1$ and
\begin{align*}
 v_l(a,0)+\de+te_n = (a+n-1,n-1,\dots,n-l, n-l-2,\dots, \mathop{t}_{\substack{\uparrow \\ x_{n-t}}},\dots,1,t).
\end{align*}
Since the $(n-t)$-th component is the same as the last component, we must have $n-t \in T$ and hence $\{n-t,\dots,n-1\} \subseteq T$. Since $|T|=t$, $T$ must be $\{n-t,\dots,n-1\}$. Therefore, we have
\begin{align*}
& v_l(a,0)+\de+te_n -T =  (a+n-1,n-1,\dots,n-l, n-l-2,\dots, t+1, \mathop{t-1}_{\substack{\uparrow \\ x_{n-t}}},\dots,0,t) \\
&= \omega_{t+1} (a+n-1,n-1,\dots,n-l, n-l-2,\dots, t+1,t,{t-1},\dots,0)\\
&= \omega_{t+1}(v_l(a,0)+\de).
\end{align*}
If $n-l-1\le t \le n-2$, then $2\leq n-t\leq l+1$ and
\begin{align*}
v_l(a,0)+\de+te_n = (a+n-1, n-1,\dots,\mathop{t+1}_{\substack{\uparrow \\ x_{n-t}}},\dots,n-l,n-l-2,\dots,1,t).
\end{align*}
Since the $(n-t)$-th component and the last component differs only by 1, we must have $n-t \notin T$, and then by (i) we deduce that $T \subseteq \{1,n-t+1,\dots,n-1\}$. Now we have $T = \{1,n-t+1,\dots,n-1\} $ since they are both $t$-sets, and
\begin{align*}
    & v_l(a,0)+\de+te_n -T \nonumber \\
    &= (a+n-2, n-1,\dots,\mathop{t+1}_{\substack{\uparrow \\ x_{n-t}}},\mathop{t-1}_{\substack{\uparrow \\ x_{n-t+1}}},\dots, n-l-1,n-l-3,\dots,0,t) \\
    &= \omega_{t} (a+n-2,n-1,\dots,t+1,t,t-1,\dots, n-l-1,n-l-3,\dots,0) \\
    &= \omega_{t} (v_{l+1}(a-1,0)+\de).
\end{align*}
For $t=n-1$, $T$ has to be $[n-1]$ and
\begin{align*}
    & v_l(a,0)+\de+(n-1)e_n -T  = (a+n-2,n-2,\dots,n-l-1,n-l-3,\dots,0,n-1)\\
    &= \omega_{n-1} (a+n-2,n-1,n-2,\dots,n-l-1,n-l-3,\dots,0) \\
   &= \omega_{n-1} (v_{l+1}(a-1,0)+\de).
\end{align*}
To summarize, for each $0 \le t \le n-1$, there is at most one $t$-set $T\subseteq[n-1]$ such that $x^{v_l(a,0)+\de}x^{-T}$ may have distinct exponents:
    \[
    T = \begin{cases}
        \{n-t,\dots,n-1\}, & 0 \le t \le n-l-2\\
        \{1,n-t+1,\dots,n-1\}, & n-l-1 \le t \le n-1
    \end{cases}.
    \]
For this set $T$ we have
    \[
    x^{v_l(a,0)+\de}x^{-T} = \begin{cases}
        x^{\omega_{t+1}(v_l+\de)}, & 0 \le t \le n-l-2\\
        x^{\omega_t(v_{l+1}(a-1,0)+\de)}, & n-l-1 \le t \le n-1
    \end{cases}.
    \]
We conclude that
    \begin{align*}
        & \bigg[ x^{v_l(a,0)+\de}\prod_{i=1}^{n-1} \Big( 1 - y \frac{x_{n}}{x_i} \Big) \bigg] F_{n} \\
        &= \sum_{t=0}^{n-l-2} (-y)^t [x^{\omega_{t+1}(v_l(a,0)+\de)}] F_n + \sum_{t=n-l-1}^{n-1} (-y)^t [x^{\omega_t(v_{l+1}(a-1,0)+\de)}] F_n \\
       &= \sum_{t=0}^{n-l-2} y^t [x^{v_l(a,0)+\de}] F_n - \sum_{t=n-l-1}^{n-1} y^t [x^{v_{l+1}(a-1,0)+\de}] F_n \\
        &= \frac{1-y^{n-l-1}}{1-y} [x^{v_l(a,0)+\de}] F_n - \frac{y^{n-l-1}-y^n}{1-y} [x^{v_{l+1}(a-1,0)+\de}] F_n,
    \end{align*}
    which is the desired result.
\end{proof}

For $r\in \mathbb{Z}$ and $0 \le s\le n-1$ we define
\begin{align}\label{beta-rs-defn}
\be(r,s) := ((r-1)(n-1)+s, \underbrace{-(r-1),\dots,-(r-1)}_{n-s-1}, \underbrace{-r,\dots,-r}_{s}) \in \bZ^{n}.
\end{align}
Notice that $\be(r,0) = \be(r-1,n-1)$ and $\be(1,0)=0$. With the help of Lemma \ref{lem-beta-coeff}, we are able to calculate $W(\alpha,y)$ for $\alpha=\beta(r,s)+\delta+ke_1$.
\begin{lem}
\label{F x be(r,s) and x1 k}
    Let $y$ be an indeterminate and $k \in \bZ$. For $1 \le s \le n-1$, we have
    \begin{align}
        & \bigg[ x^{\be(r,s)+\de}x_1^k \prod_{i=1}^{n-1} \Big( 1 - y \frac{x_{n}}{x_i} \Big) \bigg] F_{n}(z,q) \notag \\
       &= \frac{1-y^s}{1-y} \bigg[ x^{\be(r,s)+\de}x_1^k \bigg] F_{n}(z,q) - \frac{y^s-y^{n}}{1-y} \bigg[ x^{\be(r,s-1)+\de}x_1^k \bigg] F_{n}(z,q).
    \end{align}
\end{lem}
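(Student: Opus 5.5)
The plan is to reduce the statement to Lemma \ref{lem-beta-coeff} by pulling out a power of the symmetric monomial $x=x_1\cdots x_n$. Adding $r(1,\dots,1)$ to $\be(r,s)$ gives
\[
\be(r,s)+r(1,\dots,1)=\big((r-1)(n-1)+s+r,\ \underbrace{1,\dots,1}_{n-s-1},\ \underbrace{0,\dots,0}_{s}\big).
\]
Since $s\ge 1$, the last entry is $0$. Hence, with $l:=n-s-1$ and $a:=(r-1)(n-1)+s+r+k$,
\[
x^{\be(r,s)+\de}x_1^k = x^{-r}\, x^{v_l(a,0)+\de}.
\]

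First I will transfer the factor $x^{-r}$ onto the series. Using $[FH]G=[F]\,G(x)H(x^{-1})$ with the monomial $H=x^{-r}$, every bracket in the statement equals the corresponding bracket with $x^{-r}$ removed, evaluated on $G_n:=x^{r}F_n$. Since $x^r$ is symmetric, $G_n$ is again antisymmetric. The product $\prod_{i}(1-yx_n/x_i)$ is untouched by this manipulation.

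Next I will apply Lemma \ref{lem-beta-coeff} to $G_n$ with $l=n-s-1$, which produces the coefficients
\[
\frac{1-y^{n-l-1}}{1-y}=\frac{1-y^{s}}{1-y},\qquad \frac{y^{n-l-1}-y^n}{1-y}=\frac{y^{s}-y^n}{1-y}.
\]
The first term is $[x^{v_l(a,0)+\de}]G_n=[x^{\be(r,s)+\de}x_1^k]F_n$. For the second term I need to check that
\[
v_{l+1}(a-1,0)-r(1,\dots,1)=\be(r,s-1)+ke_1 .
\]
The tail $(1^{n-s},0^{s-1})$ shifted by $-r$ gives $(-(r-1))^{n-s},(-r)^{s-1}$. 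The first entry is $a-1-r=(r-1)(n-1)+(s-1)+k$. Both agree with $\be(r,s-1)+ke_1$, and translating back through $G_n=x^rF_n$ gives exactly the claimed formula.

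The main obstacle is the boundary cases where the index $l$ leaves the range in which Lemma \ref{lem-beta-coeff} and the definition of $v_l$ are stated.
\begin{itemize}
\item For $s=n-1$ we get $l=0$. I would rerun the case analysis of Lemma \ref{lem-beta-coeff} with $l=0$. The first range $1\le t\le n-l-2$ still forces $T=\{n-t,\dots,n-1\}$. The second range reduces to $t=n-1$, where $T=[n-1]$ and one obtains $\omega_{n-1}(v_1(a-1,0)+\de)$. So the identity persists.
\item For $s=1$ we get $l+1=n-1$, and then $v_{n-1}$ must be read as $(a,1,\dots,1)$ with no trailing zero block. The relevant term arises only at $t=n-1$, from $T=[n-1]$. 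I would verify directly that $v_{n-2}(a,0)+\de+(n-1)e_n-[n-1]$ equals $\omega_{n-1}$ applied to $(a+n-2,n-1,\dots,1)$, and that this vector is $\be(r,0)+\de+ke_1+r(1,\dots,1)$. This uses the convention $\be(r,0)=\be(r-1,n-1)$ noted after \eqref{beta-rs-defn}.
\end{itemize}
In all cases the signs come out as in Lemma \ref{lem-beta-coeff}, because $\operatorname{sgn}\omega_{t+1}=(-1)^t$, $\operatorname{sgn}\omega_t=(-1)^{t-1}$, and \eqref{anti-sign} holds.
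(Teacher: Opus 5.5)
Your proposal is correct and matches the paper's proof: the paper also sets $a=(r-1)(n-1)+r+k+s$ and $l=n-s-1$, uses $\beta(r,s)+ke_1+r(1,\dots,1)=v_l(a,0)$ and $v_{l+1}(a-1,0)=\beta(r,s-1)+ke_1+r(1,\dots,1)$, moves $x^{r}$ onto $F_n$, and applies Lemma \ref{lem-beta-coeff}. Your boundary check goes beyond what the paper writes, but one remark in it is off: for $s=1$ (so $l=n-2$) the term $v_{n-1}(a-1,0)+\de$, read as $(a-1,1,\dots,1)+\de$, arises for every $1\le t\le n-1$ via $T=\{1,n-t+1,\dots,n-1\}$, not only at $t=n-1$, and this is what produces the full coefficient $y+\cdots+y^{n-1}=\frac{y-y^{n}}{1-y}$.
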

This reduces to \cite[Lemma 3.4]{Stembridge1988} when  $k=0$ and $r\ge1$ with the special series
\begin{align}
F_n(z,q)=\prod\limits_{1\leq i<j\leq n}(x_i-x_j)\prod\limits_{1\leq i\neq j\leq n} \frac{(qx_jx_i^{-1};q)_\infty}{(zx_jx_i^{-1};q)_\infty}.
\end{align}
\begin{proof}
Let $a=(r-1)(n-1)+r+k+s$ and $l=n-s-1$. Note that
\begin{align}
    \beta(r,s)+ke_1+r(e_1+\cdots+e_n)=v_l(a,0).
\end{align}
We have
\begin{align}\label{Lem2.4-proof}
    \bigg[ x^{\be(r,s)+\de}x_1^k \prod_{i=1}^{n-1} \Big( 1 - y \frac{x_{n}}{x_i} \Big) \bigg] F_{n}(z,q)=\bigg[x^{v_{l}(a,0)+\delta}\prod_{i=1}^{n-1} \Big( 1 - y \frac{x_{n}}{x_i} \Big)\bigg]x^rF_n(z,q).
\end{align}
Note that
\begin{align}\label{v-beta-relation}
v_{l+1}(a-1,0)=\beta(r,s-1)+ke_1+r(e_1+\cdots +e_n).
\end{align}
Applying Lemma \ref{lem-beta-coeff} to the right side of \eqref{Lem2.4-proof}, we obtain the desired formula.
\end{proof}

From now on we set $\ap(r):= (\underbrace{1,\dots,1}_{r},0,\dots,0) \in \bZ^{n}$ with $0 \le r \le n$. The following lemma evaluates $W(\alpha(r)+\delta,y)$ and $W(\alpha(r)+\delta+e_n,y)$.
\begin{lem}
\label{F x ap(r)}
    Let $y$ be an indeterminate. For $0 \le r \le n-1$ we have
    \begin{align}
        \label{F x ap(r)+de} \bigg[x^{\ap(r) + \de} \prod_{i=1}^{n-1} \Big( 1-y \frac{x_{n}}{x_i} \Big) \bigg] F_{n}(z,q) &= \frac{1-y^{n-r}}{1-y} \big[ x^{\ap(r)+\de} \big]F_{n}(z,q), \\
        \label{F x ap(r)+de and xn} \bigg[x^{\ap(r) + \de} x_{n} \prod_{i=1}^{n-1} \Big( 1-y \frac{x_{n}}{x_i} \Big) \bigg]F_{n}(z,q) &= \frac{y^{n-r-1}-y^{n}}{1-y} \big[ x^{\ap(r+1)+\de} \big]F_{n}(z,q).
    \end{align}
\end{lem}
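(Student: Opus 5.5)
We argue exactly as in Lemmas \ref{F xi k} and \ref{lem-beta-coeff}: expand the product by \eqref{lem-1-expan} and, for each $t=|T|$, determine all $t$-subsets $T\subseteq[n-1]$ for which the relevant exponent vector has distinct components. Every other $T$ contributes $0$ by antisymmetry. Each surviving vector is then identified as $\omega_{t'}$ applied to a fixed vector, and \eqref{anti-sign} turns $(-y)^t$ into $\pm y^t$.

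For \eqref{F x ap(r)+de}, note that
\[
\ap(r)+\de+te_n=(n,n-1,\dots,n-r+1,\,n-r-1,\dots,1,\,t),
\]
so the value $n-r$ is missing. The necessary conditions are:
\begin{enumerate}
\item[(i)] for $i\in[n-2]\setminus\{r\}$, $i\in T$ forces $i+1\in T$;
\item[(ii)] no lowered entry may equal $t$.
\end{enumerate}
By (i), $T$ is a union of a block $\{j,\dots,r\}$ and a tail $\{n-u,\dots,n-1\}$.

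\textbf{Case $1\le t\le n-r-1$.} The entry in position $n-t$ equals $t$, so $n-t\in T$. Hence $T\supseteq\{n-t,\dots,n-1\}$, and by cardinality $T=\{n-t,\dots,n-1\}$. As in Case 1 of Lemma \ref{F xi k},
\[
\ap(r)+\de+te_n-T=\omega_{t+1}(\ap(r)+\de),
\]
which contributes $y^t[x^{\ap(r)+\de}]F_n$. Summing over $0\le t\le n-r-1$ gives the factor $\frac{1-y^{n-r}}{1-y}$.

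\textbf{Case $t\ge n-r$.} Here we must show there is no admissible $T$. If the block is nonempty, the lowered entry at position $r$ becomes $n-r$, and one checks it collides with $t$ or with another lowered entry of the block. If the block is empty, $T$ is contained in $\{r+1,\dots,n-1\}$, which has size $n-r-1<t$, which is impossible. More precisely, when the block is $\{j,\dots,r\}$ its lowered values are $n-j,\dots,n-r$, and the last entry $t$ lies in the range $[n-r,n-1]$. The block size is forced to be $t-(n-r-1)$, so $j=n-t$, and the value $t$ then appears twice.

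For \eqref{F x ap(r)+de and xn}, the last entry becomes $t+1$. For $t+1\le n-r-1$ the tail argument forces position $n-t-1$ to be both in $T$ and not in $T$, so there are no contributions. For $t\ge n-r-1$ we instead take the block $\{n-t,\dots,r\}$ together with the tail $\{r+1,\dots,n-1\}$. The lowered entries fill the gap and the last entry $t+1$ lands in the vacated slot, giving a permutation of $\ap(r+1)+\de$ whose sign is arranged to make the contribution $+y^t$. Summing over $n-r-1\le t\le n-1$ yields $\frac{y^{n-r-1}-y^n}{1-y}[x^{\ap(r+1)+\de}]F_n$.

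The main obstacle is the exhaustive verification of uniqueness, or nonexistence, of $T$ in the range $t\ge n-r$ (respectively $t\ge n-r-1$). Block and tail interact here, and the edge cases $r=0$ and $t=n-1$ need separate checks. The sign of the cycle carrying position $n$ into the gap must also be tracked carefully. If the case analysis is too delicate, a fallback is to reduce to Lemma \ref{lem-beta-coeff} by multiplying by a power of $x=x_1\cdots x_n$, in the spirit of the proof of Lemma \ref{F x be(r,s) and x1 k}.
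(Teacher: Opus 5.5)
Your method is essentially the paper's: expand via \eqref{lem-1-expan}, use condition (i) to restrict the admissible $T$, and identify each survivor as $\omega_{t+1}$ applied to a fixed vector. The paper differs only in organization. It obtains \eqref{F x ap(r)+de} for $1\le r\le n-2$ from Lemma \ref{lem-beta-coeff}, using $\alpha(r)=v_{r-1}(1,0)$ and $[x^{v_r(0,0)+\delta}]F_n=0$, and it handles $r=0,n-1$ through Corollary \ref{cor-k01}. Your uniform block/tail treatment is fine and needs no separate edge cases. However, one step is misstated and one is missing.

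\textbf{Case $t\ge n-r$ of \eqref{F x ap(r)+de}.} The block size is not forced to equal $t-(n-r-1)$, because the tail need not be full. You only know $r-j+1\ge t-(n-r-1)$, i.e.\ $j\le n-t$. That still puts $t$ among the lowered block values $n-j,\dots,n-r$, so your conclusion survives. A cleaner argument:
\begin{itemize}
\item Position $n-t\le r$ carries $t+1$, so $n-t\notin T$; otherwise it would collide with the last entry $t$.
\item Condition (i), applied only at positions below $r$, then excludes every earlier position.
\item Hence $T\subseteq\{n-t+1,\dots,n-1\}$, a set of size $t-1<t$, so no admissible $T$ exists.
\end{itemize}

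\textbf{Case $t\ge n-r-1$ of \eqref{F x ap(r)+de and xn}.} You exhibit $T=\{n-t,\dots,n-1\}$ but never show it is the only admissible $t$-set. This is exactly the step you defer as the ``main obstacle.'' Your fallback cannot supply it: Lemma \ref{lem-beta-coeff} requires last component $b=0$, whereas $\alpha(r)+e_n=v_{r-1}(1,1)$. Multiplying by a power of $x$ shifts every component equally. Making the last entry $0$ then creates entries $-1$ in the middle whenever $r\le n-2$, so the shape $v_l(a,0)$ is lost.

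The missing argument is the same one-liner, and it is what the paper does:
\begin{itemize}
\item Position $n-t-1\le r$ carries $t+2$. If lowered it would equal the last entry $t+1$, so $n-t-1\notin T$.
\item By (i), no earlier position lies in $T$.
\item Hence $T\subseteq\{n-t,\dots,n-1\}$, and $|T|=t$ forces equality. For $t=n-1$ this is trivially $T=[n-1]$.
\end{itemize}
The surviving exponent is $\omega_{t+1}(\alpha(r+1)+\delta)$, a $(t+1)$-cycle on positions $n-t,\dots,n$ of sign $(-1)^t$. This gives the contribution $+y^t$, as you claimed.
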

\begin{proof}
(1) When $r=0$, \eqref{F x ap(r)+de} is exactly \eqref{F de}.

When $r=n-1$, we get \eqref{F x ap(r)+de} by using \eqref{add-F-de} and noting that
\begin{align}\label{lem-2.6-proof}
\bigg[x^{\alpha(n-1)+\de}\prod_{i=1}^{n-1} \Big( 1-y \frac{x_{n}}{x_i}\ \Big)\bigg]F_n=\bigg[x^{\delta}x_n^{-1}\prod_{i=1}^{n-1} \Big( 1-y \frac{x_{n}}{x_i}\ \Big)\bigg] x^{-1}F_n.
\end{align}

When $1\leq r \leq n-2$ we have $\alpha(r)=v_{r-1}(1,0)$. Applying Lemma \ref{lem-beta-coeff} with $l=r-1$ we obtain  \eqref{F x ap(r)+de} using the simple fact that
\begin{align}
[x^{v_{r}(0,0)+\delta}]F_n=0, \quad 1\leq r \leq n-2.
\end{align}

 (2) Now we turn to prove  \eqref{F x ap(r)+de and xn}. When $r=0$, the formula  \eqref{F x ap(r)+de and xn} is exactly \eqref{add-F-cor}. When $r=n-1$, clearly by \eqref{F de} we have
    \[
    \bigg[x^{\ap(n-1) + \de} x_{n} \prod_{i=1}^{n-1} \Big( 1-y \frac{x_{n}}{x_i} \Big) \bigg]F_{n} = \bigg[x^{\de} \prod_{i=1}^{n-1} \Big( 1-y \frac{x_{n}}{x_i} \Big) \bigg] x^{-1}F_{n}=\frac{1-y^n}{1-y} [x^\de x^{\alpha(n)}]F_n.
    \]

    It remains to show \eqref{F x ap(r)+de and xn} for $1 \le r \le n-2$. Note that $\alpha(r)+e_n=v_{r-1}(1,1)$. We denote $l=r-1$ and note that $0 \le l \le n-3$.

    If $0 \le t \le n-l-3$, then $l+2\leq n-t-1\leq n-1$ and we write \eqref{beta-start} as
    \[
    v_l(1,1)+\de = (n,n-1,\dots,n-l,n-l-2,\dots,\mathop{t+1}_{\substack{\uparrow \\ x_{n-t-1}}},\dots,1,t+1).
    \]
It follows that $n-t-1 \in T$ and hence $\{n-t-1,\dots,n-1\} \subseteq T$. However, the preceding set is a $(t+1)$-set, contradicting to the fact $|T| = t$. Thus,
\begin{align}
[x^{v_l(1,1)+\de+|T|e_n}x^{-T}]F_n = 0, \quad 0 \le t \le n-l-3.
\end{align}

If $n-l-2\leq t\leq n-1$, then $0\leq n-t-1\leq l+1$ and we have
\begin{align}\label{case-beta-1}
    v_l(1,1)+\de+te_n = (n,n-1,\dots, \mathop{t+2}_{\substack{\uparrow \\ x_{n-t-1}}},t+1,\dots, n-l,n-l-2,\dots,1,t+1).
\end{align}
    Clearly, in this case $n-t-1 \notin T$ and hence $T \subseteq \{n-t,\dots, n-1\}$. Since $|T|=t$ we must have $T = \{n-t,\dots, n-1\}$. Therefore,
    \begin{align*}
       & v_l(1,1)+\de+te_n -T = (n,n-1,\dots, \mathop{t+2}_{\substack{\uparrow \\ x_{n-t-1}}},t,\dots, n-l-1,n-l-3,\dots,0,t+1) \\
        &= \omega_{t+1} (n,n-1,\dots,t+2,t+1,t,\dots,n-l-1,n-l-3,\dots,0) \\
        &= \omega_{t+1} (v_{l+1}(1,0)+\de).
    \end{align*}
Combining the above cases, we deduce that
    \begin{align*}
        & \bigg[x^{v_l(1,1) + \de}  \prod_{i=1}^{n-1} \Big( 1-y \frac{x_{n}}{x_i} \Big) \bigg]F_{n}(z,q) = \sum_{t=n-l-2}^{n-1} (-y)^t [x^{\omega_{t+1} (v_{l+1}(1,0)+\de)}] F_n(z,q) \\
        &= \sum_{t=n-l-2}^{n-1} y^t [x^{v_{l+1}(1,0)+\de}] F_n(z,q) = \frac{y^{n-l-2}-y^n}{1-y} [x^{v_{l+1}(1,0)+\de}] F_n(z,q). \qedhere
    \end{align*}
\end{proof}

Replacing $F_n$ by $xF_n$ and $r$ by $r-1$ in Lemma \ref{F x ap(r)}, we obtain the following byproduct.
\begin{cor}
\label{F x -ap(r)}
For $1 \le r \le n$ we have
    \begin{align}
        \label{F x de-ap(r) xn -1} \bigg[ \frac{x^\de}{x_{r}\cdots x_{n}}  \prod_{i=1}^{n-1} \Big( 1-y \frac{x_{n}}{x_i} \Big) \bigg] F_n &= \frac{1-y^{n-r+1}}{1-y} \bigg[ \frac{x^\de}{x_{r}\cdots x_{n}} \bigg]F_n, \\
        \label{F x de-ap(r)} \bigg[ \frac{x^\de}{x_{r}\cdots x_{n-1}} \prod_{i=1}^{n-1} \Big( 1-y \frac{x_{n}}{x_i} \Big) \bigg]F_n &= \frac{y^{n-r}-y^{n}}{1-y} \bigg[ \frac{x^\de}{x_{r+1}\cdots x_{n}} \bigg]F_n.
    \end{align}
\end{cor}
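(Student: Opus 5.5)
The plan is to derive Corollary \ref{F x -ap(r)} directly from Lemma \ref{F x ap(r)}, applied to the antisymmetric series $xF_n$ (with $x=x_1\cdots x_n$) in place of $F_n$ and with $r-1$ in place of $r$. Multiplying by the symmetric monomial $x$ preserves antisymmetry, so $xF_n$ is a legitimate input to the lemma. The basic bookkeeping identity is
\[
[x^{\gamma}](xF_n)=[x^{\gamma-(1,\dots,1)}]F_n ,
\]
and more generally $[G\,x](xF_n)=[G]F_n$ for any Laurent polynomial $G$. This follows from the rule $[FH]G=[F]G(x)H(x^{-1})$ with $H=x$.

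The key computation is
\[
\alpha(r-1)-(1,\dots,1)=-(0,\dots,0,\underbrace{1,\dots,1}_{n-r+1}),
\]
which gives
\[
x^{\alpha(r-1)+\de}x^{-1}=\frac{x^\de}{x_r\cdots x_n}.
\]
For $1\le r\le n$, the index $r-1$ lies in the range $0\le r-1\le n-1$ allowed by Lemma \ref{F x ap(r)}.

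For \eqref{F x de-ap(r) xn -1}:
\begin{itemize}
\item Apply \eqref{F x ap(r)+de} to $xF_n$ with index $r-1$.
\item The left side becomes $\big[\frac{x^\de}{x_r\cdots x_n}\prod_{i=1}^{n-1}(1-yx_n/x_i)\big]F_n$.
\item The right side becomes $\frac{1-y^{n-r+1}}{1-y}\big[\frac{x^\de}{x_r\cdots x_n}\big]F_n$.
\end{itemize}

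For \eqref{F x de-ap(r)}:
\begin{itemize}
\item Apply \eqref{F x ap(r)+de and xn} to $xF_n$ with index $r-1$.
\item On the left, $x^{\alpha(r-1)+\de}x_n\cdot x^{-1}=x^\de/(x_r\cdots x_{n-1})$.
\item On the right, the coefficient is $\frac{y^{n-r}-y^n}{1-y}$, and $x^{\alpha(r)+\de}x^{-1}=x^\de/(x_{r+1}\cdots x_n)$.
\end{itemize}

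There is no real obstacle here. The only points to verify are the index ranges and the endpoint cases. At $r=n$, the expression $x_{r+1}\cdots x_n$ is the empty product and $x_r\cdots x_{n-1}$ is empty, so both formulas specialize consistently to \eqref{F x ap(r)+de} and \eqref{F x ap(r)+de and xn} at index $n-1$.
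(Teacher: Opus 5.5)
Your proposal is correct and is the paper's own argument: the paper derives this corollary by replacing $F_n$ with $xF_n$ and $r$ with $r-1$ in Lemma~\ref{F x ap(r)}. Your exponent bookkeeping ($x^{\alpha(r-1)+\de}x^{-1}=x^\de/(x_r\cdots x_n)$, $x^{\alpha(r)+\de}x^{-1}=x^\de/(x_{r+1}\cdots x_n)$) and the index range $0\le r-1\le n-1$ are handled correctly.
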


Finally, we give a formula for $W(\delta+ke_1-e_n,y)$, i.e., $W(\delta+v_0(k,-1),y)$.
\begin{lem}
\label{F x1 k}
    Let $y$ be an indeterminate. For any $k\in\bZ$ we have
    \begin{equation}
        \bigg[x^{\de}x_1^kx_{n}^{-1} \prod_{i=1}^{n-1} \Big( 1-y \frac{x_{n}}{x_i} \Big) \bigg] F_{n}(z,q) = \big[ x^{\de}x_1^kx_{n}^{-1} \big]F_{n}(z,q) - \frac{y-y^{n}}{1-y} \big[ x^{\de}x_1^{k-1} \big]F_{n}(z,q). \label{F x de x1 k and xn -1}
    \end{equation}
\end{lem}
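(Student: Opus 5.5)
The plan is to argue exactly as in the proofs of Lemmas \ref{F xi k} and \ref{lem-beta-coeff}. First expand the product by \eqref{lem-1-expan}, then use the fact that $[x^\ga]F_n=0$ unless $\ga$ has distinct components to pin down, for each $t=|T|$, the unique admissible $t$-set $T\subseteq[n-1]$. Finally, rewrite each surviving exponent as a permuted version of a fixed vector and apply \eqref{anti-sign}. Equivalently, this is $W(\de+v_0(k,-1),y)$, but I will not reduce to Lemma \ref{lem-beta-coeff}, since that lemma only treats $b=0$. The relevant exponent vector is
\[
\de+ke_1-e_n+te_n=(n-1+k,\,n-2,\,n-3,\dots,1,\,t-1).
\]
The first component is special, playing the same role as $x_l$ in Lemma \ref{F xi k}. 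Hence the necessary condition (i) reads: for $2\le i\le n-2$, $i\in T$ implies $i+1\in T$.

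The case analysis I plan runs as follows. For $t=0$ we have $T=\emptyset$, which gives the term $[x^\de x_1^kx_n^{-1}]F_n$. For $1\le t\le n-1$ the last component is $t-1$.

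\emph{Step 1.} If $t\ge 2$, the $(n-t+1)$-th component equals $t-1$. So $n-t+1\in T$, and by (i) $\{n-t+1,\dots,n-1\}\subseteq T$; this accounts for $t-1$ elements.

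\emph{Step 2.} The $(n-t)$-th component is $t$, provided $n-t\ge2$. If $n-t\in T$ it would drop to $t-1$ and collide with the last component. So $n-t\notin T$, and by (i) no index in $[2,n-t]$ lies in $T$.

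\emph{Step 3.} The remaining element must therefore be $1$, so $T=\{1,n-t+1,\dots,n-1\}$. The case $t=1$ fits this pattern: $n-1\notin T$, since otherwise its exponent $1$ drops to $0$ and collides, so $T=\{1\}$. The case $t=n-1$ forces $T=[n-1]$, which again fits.

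For this $T$ the exponent becomes
\[
(n-2+k,\,n-2,\dots,t,\,t-2,\dots,0,\,t-1)=\omega_t\big(\de+(k-1)e_1\big),
\]
where the $t$-cycle $\omega_t$ moves the last entry $t-1$ into position $n-t+1$.

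By \eqref{anti-sign}, this term contributes $(-y)^t(-1)^{t-1}[x^\de x_1^{k-1}]F_n=-y^t[x^\de x_1^{k-1}]F_n$. Summing over $1\le t\le n-1$ gives $-\frac{y-y^n}{1-y}[x^\de x_1^{k-1}]F_n$, which together with the $t=0$ term is \eqref{F x de x1 k and xn -1}.

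The main obstacle is making sure the uniqueness argument is airtight when the special first component $n-1+k$, or $n-2+k$ after subtracting $1$, happens to coincide with some other entry, for instance when $k\le 0$. I plan to handle this uniformly: such coincidences only make the corresponding coefficient zero on both sides at once, since the same vector $\de+(k-1)e_1$ appears on the right. The exclusion arguments in Steps 1 and 2 never used the value of the first component. So no separate treatment of small $k$ should be needed, beyond checking the boundary indices $t=1$ and $t=n-1$ as above.
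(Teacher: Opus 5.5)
Your proposal is correct and is essentially the paper's proof: the paper also shows that for $1\le t\le n-2$ the index $n-t$ cannot lie in $T$, so $T\subseteq\{1,n-t+1,\dots,n-1\}$ and cardinality forces equality (with $T=[n-1]$ when $t=n-1$), and each surviving exponent is $\omega_t(\de+(k-1)e_1)$, which contributes $-y^t[x^\de x_1^{k-1}]F_n$. Your Step 1 is redundant, since Step 2 together with $|T|=t$ already determines $T$. Your remark that coincidences involving the first component make both sides vanish at once correctly justifies a point the paper leaves implicit.
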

\begin{proof}
We write the exponent vector as
\begin{align}
    \de+ke_1+(t-1)e_n = (n-1+k,n-2,\dots,1,t-1).
\end{align}

    When $t=0$, we have $T = \emptyset$.

    If $1 \le t \le n-2$, then $2\leq n-t \leq n-1$ and we have
    \[
    \de+ke_1+(t-1)e_n = (n-1+k,n-2,\dots,\mathop{t}_{\substack{\uparrow \\ x_{n-t}}},\dots,1,t-1).
    \]
It follows that $n-t \notin T$ and hence $T \subseteq\{1,n-t+1,\dots,n-1\}$. Since $|T|=t$ we must have $T =\{1,n-t+1,\dots,n-1\}$. Therefore,
    \begin{align*}
      &  \de+ke_1+(t-1)e_n -T = (n-2+k,n-2,\dots,t,t-2,\dots,0,t-1) \\
        &= \omega_t (n-2+k,n-2,\dots,t,t-1,t-2,\dots,0) \\
        &= \omega_t (\de+(k-1)e_1).
    \end{align*}

 When $t=n-1$, $T$ has to be $[n-1]$ and
    \begin{align*}
        &\de+ke_1+(t-1)e_n -T = (n-2+k,n-3,\dots,0,n-2) \\
        &= \omega_{n-1} (n-2+k,n-2,\dots,0) = \omega_{n-1} (\de+(k-1)e_1).
    \end{align*}
 Combining the above cases, we deduce that
    \begin{align*}
        & \bigg[x^{\de}x_1^kx_{n}^{-1} \prod_{i=1}^{n-1} \Big( 1-y \frac{x_{n}}{x_i} \Big) \bigg] F_n = [x^{\de}x_1^kx_n^{-1}]F_n + \sum_{t=1}^{n-1} (-y)^t [x^{\omega_t(\de+(k-1)e_1)}]F_n \\
        &= [x^{\de}x_1^kx_n^{-1}]F_n - \frac{y-y^n}{1-y} [x^\de x_1^{k-1}]F_n.
    \end{align*}
This proves the desired assertion.
\end{proof}


\section{Proofs and generalizations of the ${\rm SU}(N)$ conjectures}\label{sec-general}
We now present proofs and generalizations of the conjectures mentioned in Section \ref{sec-intro}. We will rely on the crucial results established in Section \ref{sec-pre}.

\subsection{The ${\rm SU}(N)_{-N}$ conjecture}\label{sec-N}
We establish the following constant term identity which generalizes Theorem \ref{thm-SUN-N}.
\begin{thm}\label{thm-CT-1}
Let $x_1x_2\cdots x_n=1$ and
\begin{align}\label{eq-thm-CT-0}
    C_k(z):=[(x_1x_2\cdots x_{n-1})^0]  F_{k,n}(x_1,x_2,\dots,x_{n-1},(x_1x_2\cdots x_{n-1})^{-1})
\end{align}
where
\begin{align}
F_{k,n}(x):= \prod_{i \ne j}^{n} \frac{(x_ix_j^{-1};q)_\infty}{(zx_ix_j^{-1};q)_\infty} \sum_{i=1}^n x_i^{k} \in \bZ[[x_1^{\pm},\dots,x_{n}^{\pm}]].
\end{align}
We have
\begin{align}\label{eq-thm-CT-zero}
C_k(z)=0, \quad k\not\equiv 0 \pmod{n}
\end{align}
and
\begin{align}\label{eq-thm-CT-1}
   & C_{nm}(z)= n!(-1)^{(n-1)m}q^{(n-1)m(m-1)/2} \bigg( 1+ \sum_{l=1}^{n-1} \prod_{i=1}^l \frac{(q^m-z^{i})(1-z^{n-i})}{(1-q^{m}z^{n-i})(1-z^{i})} \bigg) \nonumber \\
   &\qquad \qquad \times \frac{(z;q)_{\ii}^{n-1}}{(q;q)_{\ii}^{n-2} (qz^{n-1};q)_{\ii}(z;z)_{n-1}}  \prod_{i=1}^{n-1} \frac{(z^iq^{1-m},z^iq^m;q)_{\ii}}{(z^{i+1},z^{i-1}q;q)_{\ii}}.
\end{align}
\end{thm}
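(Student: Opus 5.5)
The plan is to follow Stembridge's strategy. We rewrite the constant term as a coefficient extraction from an antisymmetric series, and then use the lemmas of Section \ref{sec-pre} to reduce everything to a single coefficient that can be evaluated by specialization.

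\textbf{Step 1: reduction to antisymmetric coefficients.} Since $F_{k,n}$ is homogeneous of degree $k$ in the $x_i$ (apart from the sum $\sum x_i^k$; the product part is homogeneous of degree $0$), setting $x_n=(x_1\cdots x_{n-1})^{-1}$ and taking the constant term is the same as summing the coefficients $[x_1^{a}\cdots x_n^{a}]$ of $F_{k,n}$ over all $a$ with $na=k$. This immediately forces $C_k(z)=0$ unless $n\mid k$, which gives \eqref{eq-thm-CT-zero}. For $k=nm$ we get $C_{nm}(z)=[x^{m(1,\dots,1)}]F_{nm,n}$.

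Next we write
\[
\prod_{i\ne j}\frac{(x_ix_j^{-1};q)_\infty}{(zx_ix_j^{-1};q)_\infty}=\prod_{i<j}\Big(1-\frac{x_i}{x_j}\Big)\Big(1-\frac{x_j}{x_i}\Big)\cdot G,
\]
where $G$ is symmetric. Multiplying by the Vandermonde factor, the integrand becomes $x^{-\delta}\prod_{i<j}(x_i-x_j)\cdot(\text{sign factor})\cdot G_n$, with $F_n:=\prod_{i<j}(x_i-x_j)\,G\,\sum_i x_i^{nm}$ antisymmetric. In this way $C_{nm}$ becomes $[x^{\delta}x^{m}\prod_{i<j}(1-x_j/x_i)]$ applied to $F_n$ divided by the Vandermonde. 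More concretely, I would use the symmetry of $\sum_i x_i^{nm}\cdot G$ to replace $\sum_i x_i^{nm}$ by $n\,x_n^{nm}$ (equivalently $n\,x_1^{nm}$), after which we must compute $[x^{\delta}x_l^{nm}\ldots]$-type coefficients of the antisymmetric series $\prod_{i<j}(x_i-x_j)\cdot G$ (multiplied by $x^{-m}$ to absorb the constant shift).

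\textbf{Step 2: the key recursion.} Following Stembridge, I would split off the factors involving $x_n$,
\[
\prod_{i<n}(1-x_n/x_i)\,(1-q x_i/x_n)\cdots,
\]
and realize the remaining product $\prod_{i=1}^{n-1}(1-y\,x_n/x_i)$ with $y$ specialized to powers $q^{j}$ or $z$, using the formal-series truncation trick: replace $z$ by $q^{K}$ so that $(x_ix_j^{-1};q)_\infty/(q^Kx_ix_j^{-1};q)_\infty=(x_ix_j^{-1};q)_K$ becomes a Laurent polynomial. Then Lemma \ref{F xi k} (for exponent $\delta+ke_l$) and Lemma \ref{F x be(r,s) and x1 k} give linear relations among the quantities $A_l:=[x^\delta x_l^{nm}]F_n$, $l=1,\dots,n$. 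Comparing two different ways of extracting the same coefficient (as Stembridge does with his Lemma 3.4) yields ratios $A_{l+1}/A_l$, which should be exactly the factors $\frac{(q^m-z^i)(1-z^{n-i})}{(1-q^mz^{n-i})(1-z^i)}$ appearing in \eqref{eq-thm-CT-1}. Summing the $A_l$ over $l$ produces the bracket $1+\sum_{l}\prod_{i\le l}(\cdots)$.

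\textbf{Step 3: the base value.} What remains is $A_1$, essentially $[x^{\delta}x_1^{nm}]$ of the Vandermonde times $G$. I would evaluate it via \eqref{pnz} together with the $q$-Dyson/Macdonald constant term \eqref{eq-speical-q-Dyson} in Kadell/Stembridge's generalized form with shifted first exponent. This should produce the product $\frac{(z;q)_\infty^{n-1}}{(q;q)_\infty^{n-2}(qz^{n-1};q)_\infty(z;z)_{n-1}}\prod_i\frac{(z^iq^{1-m},z^iq^m;q)_\infty}{(z^{i+1},z^{i-1}q;q)_\infty}$ together with the sign and power $(-1)^{(n-1)m}q^{(n-1)m(m-1)/2}$ coming from shifting a Jacobi-type factor by $m$. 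Both sides will be verified at $z=q^K$ for infinitely many $K$, and Lemma \ref{lem-series} then lifts the identity to general $z$.

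\textbf{Main obstacle.} I expect the main obstacle to be Step 2: producing the ratio $A_{l+1}/A_l$ requires finding two independent linear relations, one from inserting the $x_n$-factor and one from an $x_1$-type factor via Lemma \ref{F x1 k}/Corollary \ref{F x -ap(r)}, and matching the $y$-specializations ($y=z$ versus $y=q^{m}$-shifted) correctly. I would first test the computation for $n=2,3$ to pin down the signs.
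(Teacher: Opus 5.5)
Your skeleton matches the paper's: the degree argument giving \eqref{eq-thm-CT-zero}, the reduction $C_{nm}=n!\sum_l A_l$, a ratio recursion for the $A_l$, and evaluation of $A_1$ at $z=q^K$ followed by Lemma \ref{lem-series}. \textbf{Step 2 has a genuine gap: it is missing the ingredient that actually produces relations among the $A_l$.}

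The extraction lemmas of Section \ref{sec-pre} are identities valid for \emph{every} antisymmetric series. For instance, \eqref{F xl k} only rewrites the auxiliary quantity $[x^{\de}x_l^{-nm}\prod_{i<n}(1-yx_n/x_i)]g_{m,n}$ in terms of $A_l$ and $A_{l+1}$. So combining such lemmas with one another (your ``one relation from the $x_n$-factor, one from Lemma \ref{F x1 k}/Corollary \ref{F x -ap(r)}'') can never constrain the $A_l$. What is needed is the $q$-difference equation of $g_{m,n}$ under $x_n\mapsto qx_n$:
\[
\frac{g_{m,n}(x_n\to qx_n)}{g_{m,n}}=z^{n-1}q^{-m}\prod_{i=1}^{n-1}\frac{1-z^{-1}x_i/x_n}{1-zq^{-1}x_i/x_n}.
\]
Via \eqref{fact of qx} this gives Lemma \ref{qx for gm,n}:
\[
q^{\be_n+m}\Big[x^{\be}\prod_{i<n}\big(1-zx_n/x_i\big)\Big]g_{m,n}=z^{n-1}\Big[x^{\be}\prod_{i<n}\big(1-z^{-1}x_n/x_i\big)\Big]g_{m,n}.
\]
This is Stembridge's Lemma 3.3; his Lemma 3.4 only evaluates the two sides. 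Take $\be=\de-nme_l$ and apply \eqref{F xl k} at $y=z$ and at $y=z^{-1}$. This single identity then yields \eqref{add-A-relation} directly. No truncation $z=q^K$ and no specialization $y=q^m$ enter Step 2.

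Also, the right quantities are $A_l=[x^{\de}x_l^{-nm}]g_{m,n}$, with the power sum kept outside. Your $[x^{\de}x_l^{nm}]F_n$, with $\sum_i x_i^{nm}$ inside $F_n$, is not the right object. Replacing $\sum_i x_i^{nm}$ by $nx_n^{nm}$ would also destroy the antisymmetry that \eqref{pnz} needs.

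\textbf{Step 3 is also not a proof as written.} \eqref{eq-speical-q-Dyson} and \eqref{g0,n} only give the case $m=0$. At $z=q^K$ you need the non-constant coefficient
\[
A_1=[x_1^{-(n-1)m}x_2^m\cdots x_n^m]\prod_{i<j}(x_j/x_i;q)_K(qx_i/x_j;q)_{K-1}.
\]
The paper gets it from the same $q$-difference identity with $\be=\be(r,s)+\de-nme_1$ and Lemma \ref{F x be(r,s) and x1 k}. Iterating over $(r,s)$ reaches $[x^{\be(k,0)+\de}x_1^{-nm}]g_{m,n}$, which at $z=q^{k-m}$ collapses to Stembridge's $(n-1)$-variable constant term.

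Your idea can be made into a genuinely different, valid route, but only with an explicit extra input:
\begin{enumerate}
\item Set $x_1=1$, which is allowed because the product has degree $0$.
\item For $x=x_2,\dots,x_n$, use $x^{-m}(x;q)_K(q/x;q)_{K-1}=(-1)^mq^{m(m-1)/2}(q^mx;q)_{K-m}(q^{1-m}/x;q)_{K-1+m}$, valid for $1-K\le m\le K$. This is where your factor $(-1)^{(n-1)m}q^{(n-1)m(m-1)/2}$ comes from.
\item Rescale $x_j\to q^{-m}x_j$.
\item Pass from the Macdonald form $(x_j/x_i;q)_K(qx_i/x_j;q)_{K-1}$ to the equal-parameter form, at the cost of the factor $P_{n-1}(q^K)$, via \eqref{pnz}.
\item Apply the Habsieger--Kadell $q$-Morris identity with $(a,b)=(K-m,K-1+m)$.
\end{enumerate}
This does reproduce Theorem \ref{A1} at $z=q^K$. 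Without naming that identity and these reductions, Step 3 is only a hope.
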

We have
\begin{align}
C_k(z)= \sum_{m=-\ii}^{\ii} [x^m] F_{k,n} (x).
\end{align}
Since $(x_ix_j^{-1};q)_\infty/(zx_ix_j^{-1};q)_\infty$ is homogeneous of degree 0,  $[x^m] F_{k,n} (x)$ vanishes unless $k=mn$ since the degree of each monomial in $F_{k,n}$ is $k$ and that of $x^m$ is $mn$. Hence
\begin{align}\label{F-vanish}
   C_k(z) =0, \quad k\not\equiv 0 \pmod{n}
\end{align}
and
\begin{align}\label{constant-f-F}
C_{mn}(z) = [x^m]F_{mn,n}(x).
\end{align}
Now we connect $[x^m]F_{mn,n}(x)$ with an antisymmetric series so that the results in Section \ref{sec-antisymmetric} can be employed. We have
\begin{align*}
    & [x^m]F_{mn,n}(x) = [x^0] x^{-m} \prod_{i \ne j}^{n} \frac{(x_ix_j^{-1};q)_\infty}{(zx_ix_j^{-1};q)_\infty}  \sum_{i=1}^n x_i^{nm} \\
    &= [x^0] \prod_{i<j}^n (1-x_ix_j^{-1})(1-x_jx_i^{-1}) \cdot x^{-m} \prod_{i \ne j}^{n} \frac{(qx_ix_j^{-1};q)_{\infty}}{(zx_ix_j^{-1};q)_\infty}  \sum_{i=1}^n x_i^{nm} \\
    &= [x^0] \prod_{i<j}^n (x_i-x_j)(x_i^{-1}-x_j^{-1}) \cdot x^{-m} \prod_{i \ne j}^{n} \frac{(qx_ix_j^{-1};q)_{\infty}}{(zx_ix_j^{-1};q)_\infty}  \sum_{i=1}^n x_i^{nm} \\
    &= \bigg[ \prod_{i<j}^n (x_i-x_j) \bigg] g_{m,n}(z,q)\sum_{i=1}^n x_i^{nm}
\end{align*}
where
\begin{align}\label{add-g-defn}
    g_{m,n}(z,q) := x^{-m} \prod_{i \ne j}^{n} \frac{(qx_ix_j^{-1};q)_{\ii}}{(zx_ix_j^{-1};q)_{\ii}} \prod_{i<j}^{n} (x_i-x_j).
\end{align}
Since $g_{m,n}(z,q)\sum_{i=1}^n x_i^{nm}$ is antisymmetric for $x_1,\dots,x_n$, by \eqref{pnz} we have
\begin{align}\label{add-F-start}
 [x^m]F_{mn,n}(x) = n! [x^\de] g_{m,n}(z,q) \sum_{i=1}^n x_i^{nm}.
\end{align}

We denote
\begin{align}\label{A-defn}
A_i(z,q) = A_{i,m,n}(z,q):= [x^\de x_i^{-nm}] g_{m,n}
\end{align}
and sometimes abbreviate $A_i(z,q)$ as $A_i$.
Then from \eqref{add-F-start} we have
\begin{align}\label{add-F-decompose}
[x^m] F_{mn,n}(x)  = n! \sum_{i=1}^n A_i(z,q).
\end{align}
We now aim to find relations between $A_i$ and compute them explicitly. When $m=0$, it is easy to see that
\begin{align}
    A_1=A_2= \cdots = A_{n}.
\end{align}
The general case is more complicated and we will discuss as follows.

We first state a lemma which generalizes \cite[Lemma 3.3]{Stembridge1988}.
\begin{lem}
\label{qx for gm,n}
    For any $\be \in \bZ^{n}$, we have
    \begin{align}\label{eq-lem-gmn}
    q^{\be_{n}+m} \bigg[ x^{\be} \prod_{i=1}^{n-1} \Big(1-z\frac{x_{n}}{x_i}\Big) \bigg] g_{m,n}(z,q) = z^{n-1} \bigg[ x^{\be} \prod_{i=1}^{n-1} \Big(1-z^{-1} \frac{x_{n}}{x_i}\Big) \bigg] g_{m,n}(z,q).
    \end{align}
\end{lem}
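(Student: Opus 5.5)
The idea is to mimic Stembridge's proof of his Lemma 3.3: find how $g_{m,n}$ transforms under $x_n\mapsto qx_n$, and then translate that functional equation into a relation between coefficients. Write
\[
g_{m,n}(z,q)=x^{-m}\prod_{i<j}^n(x_i-x_j)\prod_{i\ne j}^n\frac{(qx_ix_j^{-1};q)_\infty}{(zx_ix_j^{-1};q)_\infty},
\]
and let $\tau$ be the substitution $x_n\mapsto qx_n$ with the other variables fixed. We compute $\tau g_{m,n}/g_{m,n}$ factor by factor.

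\emph{The prefactor $x^{-m}$.} It contributes a factor $q^{-m}$.

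\emph{The product over $i\ne j$.} Only the factors with $i=n$ or $j=n$ change.
\begin{itemize}
\item For $i<n$, $j=n$, the ratio is $\frac{(x_ix_n^{-1};q)_\infty}{(qx_ix_n^{-1};q)_\infty}\cdot\frac{(zx_ix_n^{-1};q)_\infty}{(zq^{-1}x_ix_n^{-1};q)_\infty}=\frac{1-x_i/x_n}{1-zq^{-1}x_i/x_n}$.
\item For $i=n$, $j<n$, the ratio is $\frac{(q^2x_n/x_j;q)_\infty}{(qx_n/x_j;q)_\infty}\cdot\frac{(zx_n/x_j;q)_\infty}{(zqx_n/x_j;q)_\infty}=\frac{1-zx_n/x_j}{1-qx_n/x_j}$.
\end{itemize}

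\emph{The Vandermonde.} The factors $x_i-x_n$ become $x_i-qx_n$, giving $\prod_{i<n}\frac{x_i-qx_n}{x_i-x_n}$.

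Combining the pieces for a fixed $i<n$:
\[
\frac{x_i-qx_n}{x_i-x_n}\cdot\frac{1-x_i/x_n}{1-zq^{-1}x_i/x_n}\cdot\frac{1-zx_n/x_i}{1-qx_n/x_i}
=\frac{-x_n}{x_n-zq^{-1}x_i}\,(1-zx_n/x_i)
=\frac{q\,(1-zx_n/x_i)}{z\,(1-qz^{-1}x_n/x_i)}.
\]
Hence
\[
\tau g_{m,n}\cdot\prod_{i<n}\bigl(1-qz^{-1}x_n/x_i\bigr)=q^{-m}q^{n-1}z^{-(n-1)}\,g_{m,n}\prod_{i<n}\bigl(1-zx_n/x_i\bigr).
\]
These are honest identities of formal series once we expand in the appropriate region, as in Stembridge. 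I expect checking this carefully to be the main obstacle, together with the signs and the powers of $q$.

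Next we extract coefficients. Use $[x^\beta]\tau G=q^{\beta_n}[x^\beta]G$ and the rule $[FH]G=[F]\,G\,H(x^{-1})$. Pairing the identity against monomials gives a relation between coefficients of $g_{m,n}$ weighted by products of the form $\prod(1-c\,x_n/x_i)$.

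It is cleaner to run the same computation with the roles arranged so that the multipliers match the statement. Apply the identity to $\tau^{-1}$, or equivalently take the coefficient of $x^\beta$ of $g\cdot\prod_{i<n}(1-zx_n/x_i)$ after substituting. Concretely, the coefficient of $x^\beta$ in $\tau\bigl(g\prod(1-z^{-1}x_n/x_i)\bigr)$ equals $q^{\beta_n}$ times the coefficient in $g\prod(1-z^{-1}x_n/x_i)$. The functional equation, after the substitution $z\leftrightarrow$ its appropriate inverse form, then converts this into $z^{n-1}q^{-m}\cdots$ times the coefficient in $g\prod(1-zx_n/x_i)$.

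A pairing $[x^\beta\prod(1-cx_n/x_i)]g$ is the coefficient of $x^\beta$ in $g\prod(1-c^{-1}\ldots)$ only up to the inversion $H(x^{-1})$, so these pairings must be matched carefully. After matching them, the powers should combine to $q^{\beta_n+m}$ on the $z$-side and $z^{n-1}$ on the $z^{-1}$-side, which is exactly \eqref{eq-lem-gmn}. I would first verify the exponent bookkeeping in the case $n=2$, and then write the general argument.
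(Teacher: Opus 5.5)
Your approach is the paper's. There the lemma is stated as the $y=0$ case of Lemma~\ref{qx for gk}, and that proof (with $\ell=n$) is exactly your computation of the ratio under $\tau\colon x_n\mapsto qx_n$, followed by coefficient extraction via $[x^\be]\tau G=q^{\be_n}[x^\be]G$ and $[FH]G=[F]\,G\,H(x^{-1})$. Your ratio is correct; in the intermediate step the numerator should be $-x_i$, not $-x_n$. The paper likewise takes the resulting functional equation as a formal identity without further comment, so the formal-series issue is not the weak point.

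What is not yet a proof is your last paragraph, and the bookkeeping sketched there is incorrect. Your identity says $\tau\bigl(g_{m,n}\prod_{i<n}(1-z^{-1}x_n/x_i)\bigr)=q^{n-1-m}z^{1-n}\,g_{m,n}\prod_{i<n}(1-zx_n/x_i)$. Taking $[x^\be]$ of it gives
\[
q^{\be_n}\Big[x^{\be}\prod_{i<n}\Big(1-z^{-1}\frac{x_i}{x_n}\Big)\Big]g_{m,n}=q^{n-1-m}z^{1-n}\Big[x^{\be}\prod_{i<n}\Big(1-z\frac{x_i}{x_n}\Big)\Big]g_{m,n}.
\]
The multipliers here are in $x_i/x_n$ rather than $x_n/x_i$, and the factor is $q^{n-1-m}z^{1-n}$, not the $z^{n-1}q^{-m}$ you claim. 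No substitution in $z$ changes this.

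The missing step is to move the monomials out before extracting coefficients. Write $1-qz^{-1}x_n/x_i=-qz^{-1}(x_n/x_i)(1-zq^{-1}x_i/x_n)$ and $1-zx_n/x_i=-z(x_n/x_i)(1-z^{-1}x_i/x_n)$, and cancel $\prod_{i<n}(x_n/x_i)$. Your functional equation then becomes
\[
\tau\Big(g_{m,n}\prod_{i<n}\Big(1-z\frac{x_i}{x_n}\Big)\Big)=z^{n-1}q^{-m}\,g_{m,n}\prod_{i<n}\Big(1-z^{-1}\frac{x_i}{x_n}\Big),
\]
which is the paper's form. Taking $[x^\be]$ now gives $q^{\be_n}[x^\be\prod_{i<n}(1-zx_n/x_i)]g_{m,n}$ on the left and $z^{n-1}q^{-m}[x^\be\prod_{i<n}(1-z^{-1}x_n/x_i)]g_{m,n}$ on the right. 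Multiplying by $q^m$ yields \eqref{eq-lem-gmn}. Equivalently, the relation displayed above at $\be$ is the lemma at $\be+(1,\dots,1,1-n)$.
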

This corresponds to the special case $y=0$ of Lemma \ref{qx for gk} which we will prove later.

Now we are able to figure out the relation between the $A_i$'s.
\begin{cor}
For $1 \le l \le n-1$ we have
\begin{align}\label{add-Cor2.9-relation}
    A_{l+1} = \prod_{i=1}^l\frac{(q^m-z^i)(1-z^{n-i})}{(1-q^mz^{n-i})(1-z^i)} A_1.
\end{align}
\end{cor}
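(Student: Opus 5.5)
We aim to prove the one-step ratio
\[
A_{l+1}=\frac{(q^m-z^l)(1-z^{n-l})}{(1-q^mz^{n-l})(1-z^l)}\,A_l,\qquad 1\le l\le n-1,
\]
and then telescope it to obtain \eqref{add-Cor2.9-relation}. To produce this ratio we feed Lemma \ref{qx for gm,n} with the exponent $\be=\de-nm\,e_l$, where $1\le l\le n-1$. On each side of \eqref{eq-lem-gmn} we then have an expression of the form treated in Lemma \ref{F xi k}, namely \eqref{F xl k} with $k=-nm$, applied to the antisymmetric series $F_n=g_{m,n}(z,q)$. We use it with $y=z$ on the left and $y=z^{-1}$ on the right.

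Since $l\le n-1$, the last coordinate is $\be_n=\de_n=0$, so the prefactor on the left is simply $q^{m}$. Lemma \ref{F xi k} turns the left side of \eqref{eq-lem-gmn} into
\[
q^m\Big(\tfrac{1-z^{n-l}}{1-z}A_l+\tfrac{z^{n-l}-z^n}{1-z}A_{l+1}\Big).
\]
The right side becomes
\[
z^{n-1}\Big(\tfrac{1-z^{-(n-l)}}{1-z^{-1}}A_l+\tfrac{z^{-(n-l)}-z^{-n}}{1-z^{-1}}A_{l+1}\Big).
\]
Multiplying numerator and denominator by $z^n$, this simplifies to
\[
\tfrac{z^{l}-z^{n}}{1-z}\,A_l\;+\;\tfrac{1-z^{l}}{1-z}\,A_{l+1}.
\]

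We then clear the common factor $1-z$ and collect the terms in $A_l$ and $A_{l+1}$. This gives
\[
A_{l+1}\big(q^mz^{n-l}(1-z^l)-(1-z^l)\big)=A_l\big(z^l(1-z^{n-l})-q^m(1-z^{n-l})\big),
\]
that is,
\[
A_{l+1}(1-z^l)(1-q^mz^{n-l})=A_l(q^m-z^l)(1-z^{n-l}).
\]
This is exactly the claimed one-step ratio. We should double-check that $\tfrac{1-y^{n-l}}{1-y}$ at $y=z^{-1}$ does convert to $\tfrac{z^{l}-z^n}{z^{n-1}(1-z)}$, and that the second term matches as well. Iterating the one-step ratio from $l=1$ up to $l$ multiplies the factors and gives \eqref{add-Cor2.9-relation}.

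The one point that needs care is that the $A_i$, and the coefficients above, are formal series in $z$ and $q$, so dividing by $(1-z^l)(1-q^mz^{n-l})$ must be justified. Both factors are invertible power series in $z$, possibly with Laurent factors in $q$ when $m<0$, so the division is legitimate in the coefficient ring. Apart from this, the whole argument is bookkeeping once Lemma \ref{qx for gm,n} is assumed.
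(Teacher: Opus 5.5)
Your proposal is correct and is the paper's own argument: apply Lemma \ref{qx for gm,n} with $\be=\de-nme_l$ (so the prefactor is $q^m$), evaluate both sides with \eqref{F xl k} at $y=z$ and $y=z^{-1}$, solve for the one-step relation \eqref{add-A-relation}, and iterate. Your algebra checks out, and your remark that $(1-z^l)$ and $(1-q^mz^{n-l})$ have constant term $1$ in $z$ (so dividing by them is legitimate) is a small justification the paper leaves implicit.
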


\begin{proof}
From Lemma \ref{qx for gm,n} with $\be = \de-nme_l$ we deduce that
\begin{align}\label{add-Cor2.9-start}
    q^m \bigg[x^{\de}x_l^{-nm} \prod_{i=1}^{n-1} \Big(1-z\frac{x_n}{x_i}\Big) \bigg] g_{m,n} =z^{n-1} \bigg[ x^{\de} x_l^{-nm} \prod_{i=1}^{n-1} \Big(1-z^{-1} \frac{x_{n}}{x_i}\Big) \bigg] g_{m,n}.
\end{align}
Applying \eqref{F xl k} with $y=z$ to the left side of \eqref{add-Cor2.9-start}, we have
    \begin{align}\label{Cor2.9-LHS}
        & q^m \bigg[x^{\de}x_l^{-nm} \prod_{i=1}^{n-1} \Big(1-z\frac{x_n}{x_i}\Big) \bigg] g_{m,n}  \nonumber \\
        &= q^m \frac{1-z^{n-l}}{1-z} [x^\de x_l^{-nm}] g_{m,n} + q^m \frac{z^{n-l}-z^n}{1-z} [x^\de x_{l+1}^{-nm}] g_{m,n} \nonumber \\
        &= q^m \frac{1-z^{n-l}}{1-z} A_l + q^m \frac{z^{n-l}-z^n}{1-z} A_{l+1}.
    \end{align}
Applying \eqref{F xl k} with $y=z^{-1}$ to the right side of \eqref{add-Cor2.9-start}, we have
    \begin{align}\label{Cor2.9-RHS}
        & z^{n-1} \bigg[ x^{\de} x_l^{-nm} \prod_{i=1}^{n-1} \Big(1-z^{-1} \frac{x_{n}}{x_i}\Big) \bigg] g_{m,n}  \nonumber \\
        &= z^{n-1} \frac{1-z^{l-n}}{1-z^{-1}} [x^\de x_l^{-nm}] g_{m,n} + z^{n-1} \frac{z^{l-n}-z^{-n}}{1-z^{-1}} [x^\de x_{l+1}^{-nm}] g_{m,n}  \nonumber \\
        &= \frac{z^l-z^n}{1-z} A_l + \frac{1-z^l}{1-z} A_{l+1}.
    \end{align}
Substituting \eqref{Cor2.9-LHS} and \eqref{Cor2.9-RHS} into \eqref{add-Cor2.9-start}, we deduce that
\begin{align}\label{add-A-relation}
    A_{l+1} = \frac{(q^m-z^l)(1-z^{n-l})}{(1-q^mz^{n-l})(1-z^l)}A_l, \quad 1\le l \le  n-1.
\end{align}
Iterating it we obtain \eqref{add-Cor2.9-relation}.
\end{proof}

Similar to the previous corollary, if we apply \eqref{F xn k} to Lemma \ref{qx for gm,n} with $\be = \de-nme_{n}$, then we deduce that
    \[
    \sum_{l=1}^{n} (q^{-(n-1)m}z^{n-l}-z^{l-1})A_l = 0.
    \]
If we write them in term of $A_1$, then we  obtain a curious identity about $z$ which will not be needed in this paper but is of independent interest itself.
\begin{cor}
For any $n \in \bN^+$ and $m \in \bZ$ we have
\begin{align}\label{z-id}
    \sum_{l=0}^n (q^{-nm}z^{n-l}-z^l) \prod_{i=1}^{l}\frac{(q^m-z^i)(1-z^{n-i+1})}{(1-q^mz^{n-i+1})(1-z^i)} =0.
\end{align}
\end{cor}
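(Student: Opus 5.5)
We prove \eqref{z-id} by carrying out, in $n+1$ variables, the computation sketched just before the statement, and then dividing by $A_1$. Since \eqref{z-id} for a given $n$ involves the factors $(1-z^{n-i+1})$ and the power $q^{-nm}$, we work with $n+1$ variables in place of $n$. All lemmas of Section~\ref{sec-pre} and Lemma~\ref{qx for gm,n} are stated for an arbitrary number of variables, so this is legitimate. We write $A_l=A_{l,m,n+1}=[x^\de x_l^{-(n+1)m}]g_{m,n+1}$ for $1\le l\le n+1$.

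\textbf{Step 1: the linear relation.} Apply Lemma~\ref{qx for gm,n} (with $n+1$ variables) to $\be=\de-(n+1)me_{n+1}$. Then $q^{\be_{n+1}+m}=q^{-nm}$, since $\de_{n+1}=0$. Expand both sides with \eqref{F xn k} (again with $n+1$ variables and $k=-(n+1)m$), using $y=z$ on the left and $y=z^{-1}$ on the right. Each bracket becomes $\sum_{t=0}^{n} y^t[x^\de x_{n+1-t}^{-(n+1)m}]g_{m,n+1}=\sum_t y^tA_{n+1-t}$. Multiplying out the factor $z^{n}$ on the right and reindexing $l=n+1-t$ gives
\[
\sum_{l=1}^{n+1}\big(q^{-nm}z^{n+1-l}-z^{l-1}\big)A_l=0 .
\]
Substituting $l\mapsto l+1$ turns this into $\sum_{l=0}^{n}(q^{-nm}z^{n-l}-z^{l})A_{l+1}=0$.

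\textbf{Step 2: express everything through $A_1$.} By \eqref{add-Cor2.9-relation} with $n+1$ variables,
\[
A_{l+1}=\prod_{i=1}^{l}\frac{(q^m-z^i)(1-z^{n+1-i})}{(1-q^mz^{n+1-i})(1-z^i)}\,A_1 .
\]
Substituting this into Step 1, the relation becomes $A_1\cdot S=0$, where $S$ is exactly the left side of \eqref{z-id}. The sum $S$ is a rational function of $z$ and $q^m$, viewed as a formal Laurent series in $z,q$.

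\textbf{Step 3: the obstacle, $A_1\neq 0$.} To conclude $S=0$ we need $A_1$ to be nonzero, or at least not a zero divisor, in $\bC((z))[[q]]$, which is an integral domain. So it suffices to show $A_1\neq0$ as a formal series. This is the step I expect to be the main difficulty.

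My first attempt is to find a single nonzero coefficient of $A_1$. Specialize $z=0$, which is allowed since the denominators $(zx_ix_j^{-1};q)_\ii$ become $1$. Then compute $[x^\de x_1^{-(n+1)m}]\,x^{-m}\prod_{i\ne j}(qx_ix_j^{-1};q)_\ii\prod_{i<j}(x_i-x_j)$ at its lowest $q$-power. At $q^0$ only $\prod_{i<j}(x_i-x_j)x^{-m}$ survives, and its $x^\de x_1^{-(n+1)m}$ coefficient vanishes unless $m=0$. For general $m$ one must instead track the lowest power of $q$ coming from the triple-product expansions of $(qx_ix_j^{-1};q)_\ii(qx_jx_i^{-1};q)_\ii$, and exhibit a unique minimal contribution.

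If this bookkeeping proves too messy, I would switch to a fallback. Since $S$ is a rational function of $(z,w)$ with $w=q^m$, it suffices to verify $S(z,w)=0$ identically. I would do this directly: write the $l$-th summand via the ratio $A_{l+1}/A_l$ and exhibit a telescoping decomposition $(q^{-nm}z^{n-l}-z^l)\prod_{i\le l}(\cdots)=T_{l+1}-T_l$ with $T_0=T_{n+1}=0$. Equivalently, recognize $S$ as a terminating ${}_2\phi_1$-type sum in base $z$ and evaluate it by the $q$-Chu--Vandermonde formula.
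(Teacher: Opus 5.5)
Your Steps 1 and 2 are correct and coincide with the paper's derivation. You rightly pass to $n+1$ variables, which the paper does silently when it turns $\sum_{l=1}^{n}(q^{-(n-1)m}z^{n-l}-z^{l-1})A_l=0$ into \eqref{z-id}. But these steps only prove $A_{1,m,n+1}\cdot S=0$, and the step that turns this into $S=0$ is never carried out.

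Your $z=0$ analysis is only announced, and its one concrete claim is false. For $m=1$ the exponent $\de-(n+1)e_1+(1,\dots,1)=(0,n,n-1,\dots,1)$ is a cyclic permutation of $\de$, so the $q^0$ coefficient of $A_{1,1,n+1}(0,q)$ is $(-1)^n\neq 0$. Your fallback is only named, not executed. Also, since $q^m$ with $m<0$ occurs, the domain should be $\bC((z))((q))$ (or the fraction field of $\bC[[z,q]]$), not $\bC((z))[[q]]$. So as written the argument has a gap exactly at the division by $A_1$.

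The gap is easy to close, in any of three ways.

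\textbf{Via Theorem \ref{A1}.} In the paper the division is tacitly justified by Theorem \ref{A1}, whose proof does not use this corollary. At $z=0$ it gives $A_{1,m,n+1}(0,q)=(-1)^{nm}q^{nm(m-1)/2}(q;q)_\ii^{-n}$ (cf.\ \eqref{Cm-Al,n+1}), hence $A_1\neq 0$.

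\textbf{Via a one-step recursion.} You do not need the full theorem. Take \eqref{qxi for hk(0,q)} with $y=0$, in $n+1$ variables, with $\ell=1$, $k=m$ and $\be=\de-(n+1)me_1$. It reads $A_{1,m,n+1}(0,q)=(-1)^nq^{n(m-1)}A_{1,m-1,n+1}(0,q)$ for all $m\in\bZ$. This reduces nonvanishing to the case $m=0$, where $A_{1,0,n+1}(0,q)=[x^\de]\prod_{i<j}(x_i-x_j)\prod_{i\ne j}(qx_ix_j^{-1};q)_\ii=1+O(q)$.

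\textbf{Via $q$-Chu--Vandermonde.} Alternatively, carry out your fallback as the paper's Remark does. With $t=q^m$, the $l$-th product equals $(t^{-1}z,z^{-n};z)_l/(z,t^{-1}z^{-n};z)_l$. So the left side of \eqref{z-id} is $t^{-n}z^n$ times the base-$z$ ${}_2\phi_1$ with numerator parameters $t^{-1}z,z^{-n}$, denominator parameter $t^{-1}z^{-n}$ and argument $z^{-1}$, minus the same series at argument $z$. By \eqref{q-Chu-variant} and \eqref{q-Chu}, these equal $X:=(z^{-n-1};z)_n/(t^{-1}z^{-n};z)_n$ and $(t^{-1}z)^nX$ respectively, so the two terms cancel. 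This route avoids $A_1$ altogether and proves the identity for indeterminate $t$.
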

\begin{rem}
This can also be proved directly using the $q$-Chu-Vandermonde summation formula \cite[(\uppercase\expandafter{\romannumeral 2}.6)]{GR-book}
\begin{align}\label{q-Chu}
    &{}_2\phi_1\bigg(\genfrac{}{}{0pt}{}{a,q^{-n}}{c};q,q\bigg)=\frac{(c/a;q)_n}{(c;q)_n}a^n
\end{align}
and its invariant (reversing the order of summation) \cite[(\uppercase\expandafter{\romannumeral 2}.7)]{GR-book}
\begin{align}\label{q-Chu-variant}
    &{}_2\phi_1\bigg(\genfrac{}{}{0pt}{}{a,q^{-n}}{c};q,cq^n/a\bigg)=\frac{(c/a;q)_n}{(c;q)_n}.
\end{align}
Here the ${}_r\phi_s$ series is defined as (see e.g.~\cite[(1.2.22)]{GR-book})
\begin{align}
     &{}_r\phi_s\bigg(\genfrac{}{}{0pt}{}{a_1,a_2,\dots,a_r}{b_1,b_2,\dots,b_s};q,z\bigg)=\sum_{n=0}^\infty \frac{(a_1,a_2,\dots,a_r;q)_n}{(q,b_1,b_2,\dots,b_s;q)_n}\Big((-1)^nq^{n(n-1)/2}\Big)^{1+s-r}z^n
\end{align}
where $q\neq 0$ when $r>s+1$.
In fact, if we write $q^m=t$, then the left side of \eqref{z-id} becomes
\begin{align}\label{z-id-proof}
  & t^{-n}z^n {}_2\phi_1\bigg(\genfrac{}{}{0pt}{}{t^{-1}z,z^{-n}}{t^{-1}z^{-n}};z,z^{-1}\bigg)-{}_2\phi_1\bigg(\genfrac{}{}{0pt}{}{t^{-1}z,z^{-n}}{t^{-1}z^{-n}};z,z\bigg)\nonumber \\
   &=t^{-n}z^n\frac{(z^{-n-1};z)_n}{(t^{-1}z^{-n};z)_n}-\frac{(z^{-n-1};z)_n}{(t^{-1}z^{-n};q)_n}(t^{-1}z)^n=0.
\end{align}
Here the first equality follows from \eqref{q-Chu} and \eqref{q-Chu-variant}. This proves \eqref{z-id}.
\end{rem}

Therefore, we have
\begin{align}
    \label{f and A1}
    &[x_1^0x_2^0\cdots x_{n-1}^0] F_{mn,n}(x_1,\dots,x_{n-1},(x_1x_2\cdots x_{n-1})^{-1}) = n! \sum_{l=1}^{n} A_l(z,q)  \nonumber \\
    &= n! \bigg( 1+ \sum_{l=1}^{n-1} \prod_{i=1}^l \frac{(q^m-z^{i})(1-z^{n-i})}{(1-q^{m}z^{n-i})(1-z^{i})} \bigg) A_1(z,q).
\end{align}
When $m=0$, from  Stembridge's result \cite[Theorem 3.1]{Stembridge1988} we have
\begin{equation}
\label{g0,n}
   A_1= [x^{\de}] g_{0,n}(z,q) =\frac{ (z;q)_{\ii}^{n}}{(q;q)_{\ii}^{n-1} (z^{n};q)_{\ii} (z;z)_{n-1}}.
\end{equation}

We will give a computation of $A_1$ for general $m$ following the method in Stembridge's proof of \cite[Theorem 3.1]{Stembridge1988}. This will be a key result in the proof of our theorems.
\begin{thm}
    \label{A1}
    We have
    \begin{align}\label{eq-thmA1}
    A_{1,m,n}(z,q) &= (-1)^{(n-1)m}q^{(n-1)m(m-1)/2} \frac{(z;q)_{\ii}^{n-1}}{(q;q)_{\ii}^{n-2} (qz^{n-1};q)_{\ii}(z;z)_{n-1}} \nonumber \\
    &\qquad \times \prod_{i=1}^{n-1} \frac{(z^iq^{1-m},z^iq^m;q)_{\ii}}{(z^{i+1},z^{i-1}q;q)_{\ii}}.
    \end{align}
\end{thm}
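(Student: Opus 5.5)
We follow Stembridge's strategy for \cite[Theorem 3.1]{Stembridge1988}: view $A_{1,m,n}(z,q)=[x^\de x_1^{-nm}]g_{m,n}(z,q)$ as a formal power series in $z$ (with $q$ fixed). We compute it at the specializations $z=q^k$, $k\ge 1$, where the infinite products truncate, and then lift the result to general $z$ by Lemma \ref{lem-series}.

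\textbf{Step 1: specialization.} At $z=q^k$,
\[
\frac{(qx_ix_j^{-1};q)_\ii}{(q^kx_ix_j^{-1};q)_\ii}=(qx_ix_j^{-1};q)_{k-1}.
\]
Hence
\[
g_{m,n}(q^k,q)=x^{-m}\prod_{i<j}(x_i-x_j)\prod_{i\neq j}(qx_ix_j^{-1};q)_{k-1}
\]
is an honest Laurent polynomial.

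\textbf{Step 2: a recursion in $m$.} The cleanest route is to relate $A_{1,m,n}$ to $A_{1,m-1,n}$ (or to $A_{1,0,n}$, which is given by \eqref{g0,n}). Since $g_{m,n}=x^{-1}g_{m-1,n}$, we have
\[
A_{1,m,n}=[x^{\de}x_1^{-nm}x^{-1}]g_{m-1,n}=[x^{\de}x_1^{-nm+n}\cdot x_1^{-n}x^{-1}]g_{m-1,n}.
\]
Here $x^{-1}x_1^{-n}$ shifts the exponent vector to $\be(r,s)$-type vectors, after renormalizing by a power of $x$. So the plan is to use Lemma \ref{qx for gm,n} with $\be=\be(r,s)+\de+ke_1$, combined with Lemma \ref{F x be(r,s) and x1 k} at $y=z$ and at $y=z^{-1}$. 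This produces a two-term relation between $[x^{\be(r,s)+\de}x_1^k]g$ and $[x^{\be(r,s-1)+\de}x_1^k]g$ with an explicit rational factor in $z$ and $q$, exactly as in Stembridge's Lemma 3.4 step. Iterating over $s=n-1,\dots,1$ and then over $r$ (using $\be(r,0)=\be(r-1,n-1)$) walks from $x^\de x_1^{-nm}$ down to $x^\de$ for the series $g_{0,n}$, or equivalently shifts $m$ by one per full cycle.

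\textbf{Step 3: assembling the product.} The accumulated product of rational factors should telescope into
\[
(-1)^{(n-1)m}q^{(n-1)m(m-1)/2}\prod_{i}\frac{(z^iq^{1-m},z^iq^m;q)_\ii}{(z^iq,z^i;q)_\ii}\,\cdot(\text{correction between }(z^n;q)_\ii\text{ and }(qz^{n-1};q)_\ii\text{ factors}).
\]
Multiplying by \eqref{g0,n} should give \eqref{eq-thmA1}. Each factor is a ratio of $q$-shifted products, so one checks the final identity at $z=q^k$ for all large $k$ and invokes Lemma \ref{lem-series}. Alternatively, one may note that Lemma \ref{qx for gm,n} and Lemma \ref{F x be(r,s) and x1 k} hold for formal $z$, so the recursion is valid directly.

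\textbf{Main obstacle.} The main obstacle is Step 2. The ratio $q^{\be_n+m}$ in Lemma \ref{qx for gm,n} varies along the walk, since $\be_n$ changes with $r,s$. The offset $k=-nm$ on $x_1$ must be arranged so that Lemma \ref{F x be(r,s) and x1 k} applies at every stage, and the resulting factors must be bookkept so that they produce exactly $(z^iq^{1-m},z^iq^m;q)_\ii$ together with the sign and the power $q^{(n-1)m(m-1)/2}$. I would first carry out the case $n=2$ by hand to fix the normalization, then do the general bookkeeping by induction on $m$, treating $m<0$ symmetrically.
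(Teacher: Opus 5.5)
Your proposal is correct, but its endgame differs from the paper's. The engine is the same in both: Lemma~\ref{qx for gm,n} with $\be=\be(r,s)+\de-nme_1$, together with Lemma~\ref{F x be(r,s) and x1 k} at $y=z^{\pm1}$, gives \eqref{Thm2.11-rec}. Here $q^{\be_n+m}=q^{m-r}$ is constant along each cycle in $s$, so your ``main obstacle'' is harmless. A full cycle in $s$ then multiplies by $\prod_{s=1}^{n-1}\frac{q^mz^s-q^r}{q^m-q^rz^s}$.

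The paper runs $r$ up to an arbitrary $k>m$ and specializes $z=q^{k-m}$. It then reduces $[x^{\be(k,0)+\de}x_1^{-nm}]g_{m,n}$ to an $(n-1)$-variable constant term via Stembridge's (6)--(7), and lifts the result with Lemma~\ref{lem-series}.

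Your key observation is different. Since $x^{\be(r,0)}=(x_1^n/x)^{r-1}$ and $g_{m,n}=x^{-1}g_{m-1,n}$, you get $[x^{\be(r,0)+\de}x_1^{-nm}]g_{m,n}=A_{1,m-r+1,n}$. So the $r$-walk is literally a recursion in $m$:
\[
A_{1,m,n}=\prod_{s=1}^{n-1}\frac{q^{m-1}-z^s}{q^{m-1}z^s-1}\,A_{1,m-1,n},\qquad m\in\bZ .
\]
Stopping at $r=m+1$ lands exactly on $A_{1,0,n}$, which is given by \eqref{g0,n}.

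This checks out. The multiplier equals $(-1)^{n-1}q^{(n-1)(m-1)}\prod_{s}\frac{1-z^sq^{1-m}}{1-z^sq^{m-1}}$, which is precisely the ratio of the right side of \eqref{eq-thmA1} at $m$ and at $m-1$. At $m=0$ that right side collapses to \eqref{g0,n} by telescoping the $(z^iq;q)_\ii/(z^{i-1}q;q)_\ii$ and $(z^i;q)_\ii/(z^{i+1};q)_\ii$ factors.

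All multipliers are units in $\mathbb{Q}((q))[[z]]$, and Lemma~\ref{F x be(r,s) and x1 k} holds for every $r\in\bZ$. Hence your argument is purely formal and handles $m<0$ by running the recursion backward. It needs neither your Step~1 nor Lemma~\ref{lem-series} nor Stembridge's $n\to n-1$ reduction.

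What each route buys: the paper's route reproduces Stembridge's inductive mechanism (at $m=0$ it is Stembridge's proof) and uses only the $(n-1)$-variable case. Yours takes the $n$-variable case \eqref{g0,n}, which the paper quotes anyway, as base and is shorter.

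One slip to fix: with the paper's convention, $[x^\alpha](x^{-1}G)=[x^\alpha x]G$, so $A_{1,m,n}=[x^{\de}x_1^{-nm}\,x]\,g_{m-1,n}$, not with $x^{-1}$. With the correct sign, $x\,x_1^{-n}=x^{\be(0,0)}$ is already of $\be(r,s)$ type, and no renormalization is needed.
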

\begin{proof}
Setting $\beta=\beta(r,s)+\delta-nme_1$ in Lemma \ref{qx for gm,n} with $s>0$, we obtain
\begin{align}\label{proof-Thm-A1-start}
&q^{m-r} \bigg[ x^{\beta(r,s)+\delta} x_1^{-nm}\prod_{i=1}^{n-1} \Big(1-z\frac{x_{n}}{x_i}\Big) \bigg] g_{m,n}(z,q) \nonumber \\
&= z^{n-1} \bigg[ x^{\beta(r,s)+\delta}x_1^{-nm} \prod_{i=1}^{n-1} \Big(1-z^{-1} \frac{x_{n}}{x_i}\Big) \bigg] g_{m,n}(z,q).
\end{align}
Applying Lemma \ref{F x be(r,s) and x1 k} with $k=-nm$ and $y=z$ (resp.~$y=z^{-1}$), we obtain equivalent expressions for the left (resp.~ right) side of \eqref{proof-Thm-A1-start}, then after simplification we obtain the following relation:
    \begin{align}\label{Thm2.11-rec}
    [x^{\be(r,s)+\de}x_1^{-nm}]g_{m,n} = \frac{(q^mz^s-q^r)(1-z^{n-s})}{(q^m-q^rz^{n-s})(1-z^s)} [x^{\be(r,s-1)+\de}x_1^{-nm}]g_{m,n}.
    \end{align}
    Recalling the fact that $\be(r+1,0) = \be(r,n-1)$, using \eqref{Thm2.11-rec} to iterate we have
    \begin{align}\label{Thm.11-rec-2}
        &[x^{\be(r+1,0)+\de}x_1^{-nm}]g_{m,n}=[x^{\beta(r,n-1)+\delta}x_1^{-nm}]g_{m,n}  \nonumber\\
        &= \prod_{s=1}^{n-1} \frac{(q^mz^s-q^r)}{(q^m-q^rz^{n-s})} [x^{\be(r,0)+\de}x_1^{-nm}]g_{m,n} \nonumber \\
        &= \prod_{s=1}^{n-1} \frac{(q^mz^s-q^r)}{(q^m-q^rz^s)} [x^{\be(r,0)+\de}x_1^{-nm}]g_{m,n}.
    \end{align}
Recall that $A_1=[x^{\beta(1,0)+\delta}x_1^{-nm}]g_{m,n}$ since $\beta(1,0)=0$. Iterating \eqref{Thm.11-rec-2} we deduce that for any $k\in\bZ_{\ge0}$
    \[
    [x^{\be(k,0)+\de}x_1^{-nm}]g_{m,n} = \prod_{r=1}^{k-1} \prod_{s=1}^{n-1} \frac{(q^mz^s-q^r)}{(q^m-q^rz^s)} A_1.
    \]
In other words, we have
\begin{align}\label{Thm2.11-A1}
    A_1(z,q) = \prod_{r=1}^{k-1} \prod_{s=1}^{n-1} \frac{(q^m-q^rz^s)}{(q^mz^s-q^r)} [x^{\be(k,0)+\de}x_1^{-nm}]g_{m,n}(z,q).
\end{align}
    Now we choose $k> m$ and set $z=q^{k-m}$, then we have
    \begin{align}\label{add-coeff-g}
        &[x^{\be(k,0)+\de}x_1^{-nm}]g_{m,n}(q^{k-m},q)  \nonumber \\
       &= [x_1^{(k-m-1)(n-1)}x_2^{-(k-m-1)} \cdots x_{n}^{-(k-m-1)}] \prod_{i<j}^{n} (qx_ix_j^{-1};q)_{k-m-1} (x_jx_i^{-1};q)_{k-m} \nonumber \\
        &=(-1)^{(n-1)(k-m-1)} q^{(n-1)(k-m-1)(k-m)/2}[x^{\de(n-1)}] g_{0,n-1}(q^{k-m},q).
    \end{align}
Here the last equality follows from \cite[(6)--(7), pp.~782--783]{Stembridge1988}.

Note that
    \begin{align}\label{add-A1-exp}
         &\prod_{r=1}^{k-1} \prod_{s=1}^{n-1} \frac{(q^m-q^rq^{s(k-m)})}{(q^mq^{s(k-m)}-q^r)}= \prod_{s=1}^{n-1} \prod_{r=1}^{k-1} (-q^{m-r}) \frac{1-q^{s(k-m)+r-m}}{1-q^{s(k-m)+m-r}}   \\
        &= (-1)^{(n-1)(k-1)} q^{(k-1)(n-1)m-(n-1)k(k-1)/2} \nonumber \\
        &\qquad \times \prod_{s=1}^{n-1} \frac{(1-q^{s(k-m)+1-m}) \cdots (1-q^{s(k-m)+k-m-1})}{(1-q^{s(k-m)+m-1}) \cdots (1-q^{s(k-m)+m-k+1})}  \nonumber  \\
        &= (-1)^{(n-1)(k-1)} q^{(k-1)(n-1)m-(n-1)k(k-1)/2}  \nonumber \\
        &\qquad \times \prod_{s=1}^{n-1} \frac{(1-q^{s(k-m)+1-m}) \cdots (1-q^{(s+1)(k-m)-1})}{(1-q^{(s-1)(k-m)+1}) \cdots (1-q^{s(k-m)+m-1})}  \nonumber  \\
        &= (-1)^{(n-1)(k-1)} q^{(k-1)(n-1)m-(n-1)k(k-1)/2} \prod_{s=1}^{n-1} \frac{(q^{s(k-m)+1-m};q)_{\ii} (q^{s(k-m)+m};q)_{\ii}}{(q^{(s+1)(k-m)};q)_{\ii} (q^{(s-1)(k-m)+1};q)_{\ii}}. \nonumber
    \end{align}
 Substituting \eqref{add-coeff-g} and \eqref{add-A1-exp} into  \eqref{Thm2.11-A1} and utilizing \eqref{g0,n}, we deduce that
\begin{align}
   & A_1(q^{k-m},q)
    =(-1)^{(n-1)m}q^{(n-1)m(m-1)/2} \prod_{s=1}^{n-1} \frac{(q^{s(k-m)+1-m};q)_{\ii} (q^{s(k-m)+m};q)_{\ii}}{(q^{(s+1)(k-m)};q)_{\ii} (q^{(s-1)(k-m)+1};q)_{\ii}}  \nonumber \\
    &\qquad \qquad \qquad \times \frac{(q^{k-m};q)_{\ii}^{n-1}}{(q;q)_{\ii}^{n-2}(q^{1+(k-m)(n-1)};q)_{\ii}(q^{k-m};q^{k-m})_{n-1}}.
\end{align}
This proves the desired formula \eqref{eq-thmA1} for $z=q^{k-m}$.
Since $k>m$ is arbitrary, by Lemma \ref{lem-series} we see that \eqref{eq-thmA1} holds for any $z$ with $|z|<1$.
\end{proof}

\begin{proof}[Proof of Theorem \ref{thm-CT-1}]
From \eqref{F-vanish} we get \eqref{eq-thm-CT-zero}. Substituting \eqref{eq-thmA1} into \eqref{f and A1}, we obtain \eqref{eq-thm-CT-1}.
\end{proof}

\begin{proof}[Proof of Theorem \ref{thm-SUN-N}]
This follows from Theorem \ref{thm-CT-1} with $z=0$.
\end{proof}

\subsection{The ${\rm SU}(N)_{-N-\frac12}$ conjecture}\label{sec-N12}
Our first goal is to establish the following constant term identity that generalizes Theorem \ref{thm-SUN-N-12}.
\begin{thm}\label{thm-CT-SU-N12}
Let
\begin{align}
    C_m^{(1)}(y,z):=[(x_1x_2\cdots x_{n-1})^0] F(x_1,x_2,\dots,x_{n-1},(x_1x_2\dots x_{n-1})^{-1})
\end{align}
where
\begin{align}
 F_{m,n}^{(1)}(x) := \prod_{i\ne j}^n \frac{(x_ix_j^{-1};q)_\infty}{(x_ix_j^{-1}z;q)_\infty} \prod_{i=1}^n \frac{(q^{\frac12}x_i y;q)_\infty}{(q^{\frac12}x_i yz;q)_\infty} \sum_{i=1}^n x_i^m \in \bZ[x_1^{\pm},\dots,x_n^{\pm}]. \label{def-Fmn-1}
\end{align}
We have
\begin{align}
   C_0^{(1)}(y,z)&=\frac{n n! (z;q)_{\ii}^{n}}{(q)_{\ii}^{n-1} (z^{n};q)_{\ii} (z;z)_{n-1}} \sum_{k=0}^\infty (yq^{\frac12})^{nk} \prod_{t=1}^k \prod_{r=1}^{n}\frac{z^r-q^{t-1}}{1-q^tz^{r-1}}, \label{eq-thm-SU-N12-C0}\\
   C_1^{(1)}(y,z)&=\frac{n! (1-z^n)(z;q)_{\ii}^{n}}{(q;q)_{\ii}^{n-1} (1-z)(z^{n};q)_{\ii} (z;z)_{n-1}} \sum_{k=1}^\infty \frac{(yq^{\frac12})^{nk-1}(1-q^k)}{z^n-q^{k-1}} \prod_{t=1}^k \prod_{r=1}^{n}\frac{z^r-q^{t-1}}{1-q^tz^{r-1}}. \label{eq-thm-SU-N12-C1}
\end{align}
\end{thm}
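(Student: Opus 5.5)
The plan is to reduce $C_m^{(1)}(y,z)$ to coefficients of the antisymmetric series $g_{\cdot,n}$, in the same way as for Theorem \ref{thm-CT-1}, and then reuse Theorem \ref{A1}.

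\textbf{Step 1: expand the new factor.} By the $q$-binomial theorem, $(q^{1/2}x_iy;q)_\infty/(q^{1/2}x_iyz;q)_\infty=\sum_{j\ge0}\frac{(z;q)_j}{(q;q)_j}(q^{1/2}x_iy)^j$. Hence
\[
\prod_{i=1}^n \frac{(q^{\frac12}x_i y;q)_\infty}{(q^{\frac12}x_i yz;q)_\infty}=\sum_{d\ge0}(yq^{\frac12})^{d}h_d(x),
\]
where $h_d$ is a symmetric polynomial, homogeneous of degree $d$. Degree counting as in \eqref{F-vanish} shows that the coefficient of $x^{k}$, with $x=x_1\cdots x_n$, in the integrand with $\sum_i x_i^m$ inserted is nonzero only when $d+m=nk$. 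Therefore
\[
C_m^{(1)}=\sum_{k}(yq^{\frac12})^{nk-m}\,[x^k]\Big(\textstyle\prod_{i\ne j}\cdots\Big)h_{nk-m}(x)\,p_m(x).
\]
Exactly as in \eqref{add-F-start}, each such term equals $n!\,[x^\de]\,g_{k,n}\,h_{nk-m}p_m$.

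\textbf{Step 2: convert to the $A$-coefficients.} The product $h_{nk-m}p_m$ is symmetric. Using $[FH]G=[F]G(x)H(x^{-1})$, the problem becomes evaluating $[x^\de\,\cdot\,(\text{symmetric polynomial in }x^{-1})]\,g_{k,n}$. The cleanest route is to avoid expanding $h_d$ monomial by monomial. Instead, I would treat the generating product $\prod_i(\cdot)$ through a functional equation: multiplying by it should modify the $q$-shift relation of Lemma \ref{qx for gm,n} by a factor involving $y$. This is presumably the general-$y$ version of Lemma \ref{qx for gk} that the paper promises.

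Concretely, I would define
\[
G_k(y)=[x^{\de}]\,g_{k,n}\prod_i\frac{(q^{1/2}x_i^{-1}y;q)_\infty}{(q^{1/2}x_i^{-1}yz;q)_\infty},
\]
with the sign and inversion conventions fixed suitably, and look for a recursion in the $\be(r,s)$ lattice analogous to \eqref{Thm2.11-rec}. Alternatively, and more simply, I would exploit that only the degree-$(nk-m)$ part contributes. Then for $m=0$ the problem is to compute $[x^{\de}x^{-k}\cdot h_{nk}(x^{-1})]g_{0,n}$-type quantities. By the $q$-shift relation these reduce to $A_1$ times ratios, which should produce the factor $\prod_{t=1}^k\prod_{r=1}^n\frac{z^r-q^{t-1}}{1-q^tz^{r-1}}$ multiplying the $m=0$ Stembridge value \eqref{g0,n}.

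\textbf{Step 3: the main obstacle.} The hard step is extracting $[x^\de h_{nk}(x^{-1})\,x^{-k}]\,g$. My first attempt would be to avoid $h_d$ entirely by specializing $z=q^{a}$. At that specialization $g$ becomes a finite product, as in \eqref{add-coeff-g}, and $(q^{1/2}x_iy;q)_\infty/(q^{1/2}x_iyq^{a};q)_\infty=(q^{1/2}x_iy;q)_a$ is a polynomial. The whole constant term then becomes a finite $q$-Dyson/Kadell-type constant term. It can be handled via the recursion of Lemma \ref{F x be(r,s) and x1 k}, applied now with the shift identity including the extra $(q^{1/2}x_ny;q)_a$ factor; moving $x_n\to qx_n$ changes that factor by $(1-q^{1/2}x_ny)/(1-q^{a+1/2}x_ny)$. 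Once the formula is established for $z=q^a$ with infinitely many $a$, Lemma \ref{lem-series} extends it to general $z$, coefficientwise in $y$.

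\textbf{Step 4: the case $m=1$.} For $m=1$, the extra $p_1$ gives an analogous sum. Here Corollary \ref{cor-k01} and Lemma \ref{F x ap(r)} handle the $x^{\ap(r)+\de}$ shapes. This yields the additional factor $\frac{(1-z^n)(1-q^k)}{(1-z)(z^n-q^{k-1})}$, with the sum starting at $k=1$, since $nk-1\ge0$ forces $k\ge1$.
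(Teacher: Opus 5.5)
Your Step 1 is correct and matches how the paper begins: only $k\ge\lceil m/n\rceil$ contributes, and $[x^k]F^{(1)}_{m,n}=n!\,[x^\de]\,h_k\,p_m$ with $h_k=g_{k,n}\prod_i(q^{1/2}x_iy;q)_\infty/(q^{1/2}x_iyz;q)_\infty$. This is the series of \eqref{hk-defn}, not your symmetric $h_d$. The $y$-product enters with $x_i$, not $x_i^{-1}$, so your $G_k$ should simply be $[x^\de]h_k$. You also correctly anticipate the $y$-deformed shift relation, Lemma~\ref{qx for gk}.

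The gap is that the relation between $[x^\de]h_k$ and $[x^\de]h_{k-1}$ is never derived; you only say the factor $\prod_{r=1}^n\frac{z^r-q^{k-1}}{1-q^kz^{r-1}}$ ``should'' appear. The tools you point to do not produce it. With $\ell=n$, the shift relation contains the term $q^{-1/2}y\,[x^{\be}x_n^{-1}\prod_{i<n}(1-zx_n/x_i)]h_k$, which lowers the total degree by one. Every $\be(r,s)$ has total degree $0$, and Lemma~\ref{F x be(r,s) and x1 k} only evaluates $W(\be(r,s)+\de+ke_1,y)$. The companion quantities $W(\be(r,s)+\de+ke_1-e_n,y)$ are available in Section~\ref{sec-pre} only for $(r,s)=(1,0)$ (Lemma~\ref{F x1 k}), and you give no way to compute the others. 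So a recursion ``in the $\be(r,s)$ lattice'' does not close as stated, with or without the specialization $z=q^a$. Saying the specialized finite Kadell-type constant term ``can be handled'' by that lemma is an assertion, not an argument.

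The missing idea is to use a family of exponents that descends in degree one step at a time, matching the $x_n^{-1}$ in the $y$-term. Apply Lemma~\ref{qx for gk} with $\ell=n$ and $\be=\de-(e_r+\cdots+e_{n-1})$. By Corollary~\ref{F x -ap(r)} (and Corollary~\ref{cor-k01} when $\be=\de$), each side collapses to a combination of only $[x^\de/(x_{r+1}\cdots x_n)]h_k$ and $[x^\de/(x_r\cdots x_n)]h_k$. This yields
\[
\Big[\frac{x^\de}{x_{r+1}\cdots x_n}\Big]h_k=\frac{yq^{\frac12}(1-z^{n-r+1})(z^r-q^{k-1})}{(1-q^kz^{n-r})(1-z^r)}\Big[\frac{x^\de}{x_r\cdots x_n}\Big]h_k,\qquad 1\le r\le n,
\]
where the left side means $[x^\de]h_k$ when $r=n$. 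Telescoping over $r$ and using $[x^\de x^{-1}]h_k=[x^\de]h_{k-1}$ gives the recursion in $k$. The base case $[x^\de]h_0=[x^\de]g_{0,n}$ is Stembridge's value \eqref{g0,n}, since by degree only the constant term $1$ of the $y$-product survives. No specialization of $z$ and no use of Lemma~\ref{lem-series} is needed.

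For $m=1$, antisymmetry gives $[x^\de]h_k\,p_1=[x^\de x_n^{-1}]h_k$. The $r=n$ link above converts this to $[x^\de]h_k$ with exactly the factor you quote. So your Step 4 has the right shape, but it rests on the chain that your proposal does not supply.
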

Note that we can also write \eqref{eq-thm-SU-N12-C0} as
\begin{align}
     C_0^{(1)}(y,z)= \frac{n n! (z;q)_{\ii}^{n}}{(q)_{\ii}^{n-1} (z^{n};q)_{\ii} (z;z)_{n-1}} {}_{n+1}\phi_n\bigg(\genfrac{}{}{0pt}{}{z^{-1},z^{-2},\dots,z^{-n}}{qz,qz^2,\dots,qz^{n-1}};q,(yq^{1/2})^nz^{n(n+1)/2}\bigg).
\end{align}

We now proceed to prove Theorem \ref{thm-CT-SU-N12}. We have
\begin{align}\label{General-2-start}
C_m^{(1)}(y,z)= \sum_{k=-\ii}^{\ii} [x^k] F_{m,n}^{(1)}(x).
\end{align}
Note that the infinite product $(x_ix_j^{-1};q)_\infty/(x_ix_j^{-1}z;q)_\infty$ is homogeneous of degree $0$, and $(q^{\frac12}x_iy;q)_\infty/(q^{\frac12}x_i yz;q)_\infty$ contains only monomials with nonnegative powers. Therefore, the coefficient $[x^k]F_{m,n}^{(1)}(x)$ vanishes when $nk<m$. Hence \eqref{General-2-start} can be rewritten as
\begin{align}\label{General-2-key}
C_m^{(1)}(y,z)  = \sum_{k\ge \lceil m/n \rceil}^{\ii} [x^k] F_{m,n}^{(1)}(x),
\end{align}
where $\lceil m/n \rceil$ is denoted as the minimal integer that is no less than $m/n$.
Similarly to Section \ref{sec-N}, we have
\begin{align}\label{add-coeff}
[x^k] F_{m,n}^{(1)}(x) = n! [x^{\de}] h_k(z,q) \sum_{i=1}^n x_i^m
\end{align}
where
\begin{align}\label{hk-defn}
h_k(z,q) := x^{-k} \prod_{i<j}^n (x_i-x_j) \prod_{i\ne j}^n \frac{(qx_ix_j^{-1};q)_{\ii}}{(zx_ix_j^{-1};q)_{\ii}} \prod_{i=1}^n \frac{(q^{\frac12}x_iy;q)_{\ii}}{(q^{\frac12}x_iyz;q)_{\ii}}.
\end{align}
It is easy to see that $h_k(z,q) \cdot \sum_{i=1}^n x_i^m$ is antisymmetric for $x_1,\dots,x_n$.

\begin{lem}
\label{qx for gk}
    For any $\be \in \bZ^n$ and $1\leq \ell \leq n$, we have
    \begin{align}\label{add-h-relation}
        &q^{\be_\ell+k} \bigg[x^{\be} (1-q^{-\frac12}yx_\ell^{-1}) \prod_{\substack{i=1 \\ i\neq \ell}}^{n} \Big(1-z\frac{x_\ell}{x_i}\Big) \bigg] h_k(z,q) \nonumber  \\
        &= z^{n-1} \bigg[x^{\be} (1-q^{\frac12}zyx_\ell^{-1}) \prod_{\substack{i=1 \\ i\neq \ell}}^{n} \Big(1-z^{-1}\frac{x_\ell}{x_i}\Big) \bigg] h_k(z,q).
    \end{align}
\end{lem}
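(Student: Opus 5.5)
The plan is the Stembridge $q$-shift argument: find how $h_k$ transforms under $x_\ell\mapsto qx_\ell$, then read the resulting functional equation at the level of coefficients, using the bilinear rule $[FH]G=[F]\,G(x)H(x^{-1})$.

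\textbf{Step 1: the shift identity.} Write $h_k=x^{-k}\Delta\,\Phi\,\Psi$, where $\Delta=\prod_{i<j}(x_i-x_j)$, $\Phi$ is the product over $i\ne j$, and $\Psi=\prod_i(q^{1/2}x_iy;q)_\infty/(q^{1/2}x_iyz;q)_\infty$. Let $\tau_\ell$ denote $x_\ell\mapsto qx_\ell$. Only the factors involving $x_\ell$ change.
\begin{itemize}
\item For $i\ne \ell$, the pair $\frac{(qx_\ell x_i^{-1};q)_\infty(qx_ix_\ell^{-1};q)_\infty}{(zx_\ell x_i^{-1};q)_\infty(zx_ix_\ell^{-1};q)_\infty}\,(x_\ell-x_i)$ picks up the factor $\frac{(1-zx_\ell/x_i)\,q}{(1-q^{-1}zx_i/x_\ell)}\cdot\frac{(1-x_i/x_\ell)}{(1-x_\ell/x_i)}$, up to the sign convention in $\Delta$. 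As in Stembridge's Lemma 3.3, this collapses to a ratio of the form $\prod_{i\ne\ell}\frac{x_\ell-zx_i}{zx_\ell-x_i}$ times a power of $q$, after the $q^{-1}$ terms are absorbed.
\item $\Psi$ picks up $\frac{1-q^{1/2}x_\ell yz}{1-q^{1/2}x_\ell y}$.
\item $x^{-k}$ picks up $q^{-k}$.
\end{itemize}
The output is a rational identity
\[
\tau_\ell h_k\cdot R_1(x)=q^{-k}\,h_k\cdot R_2(x),
\]
where $R_1,R_2$ are Laurent polynomials in $x_\ell$ and $x_i/x_\ell$. Clearing denominators should give
\[
\tau_\ell h_k\cdot(1-q^{1/2}yx_\ell)\prod_{i\ne\ell}(1-zx_i/x_\ell)=c\,h_k\cdot(1-q^{1/2}zyx_\ell)\prod_{i\ne\ell}(z-x_i/x_\ell),
\]
up to the constant $c$, which involves $q^{-k}$ and $z^{n-1}$ bookkeeping.

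\textbf{Step 2: pass to coefficients.} Use $[x^\beta]\tau_\ell G=q^{\beta_\ell}[x^\beta]G$. Taking $[x^\beta]$ of the identity in Step 1 and moving the polynomial multipliers into the bracket with $x\mapsto x^{-1}$ gives two things:
\begin{itemize}
\item the left side becomes $q^{\beta_\ell}\big[x^\beta(1-q^{1/2}\cdot q^{-1}yx_\ell^{-1})\prod(1-zx_\ell/x_i)\big]h_k$, where the inner $q$-shift of the multiplier produces the $q^{-1/2}$;
\item the right side becomes $\big[x^\beta(1-q^{1/2}zyx_\ell^{-1})\prod(1-z^{-1}x_\ell/x_i)\big]h_k$, times $z^{n-1}$ from factoring $z$ out of each $(z-x_\ell/x_i)$.
\end{itemize}
Multiplying through by $q^k$ then yields the stated identity. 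Care is needed because $\tau_\ell$ acts on the multiplier as well. The clean way is to write the identity as $\tau_\ell(h_kP)=q^{-k}h_kQ$ with $P,Q$ chosen appropriately, and apply $[x^\beta]$ to each side.

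\textbf{Main obstacle.} The main difficulty is the exact bookkeeping of powers of $q$, the signs coming from $\Delta$, and the shifts $q^{\pm1/2}$, so that the constants land as $q^{\beta_\ell+k}$ and $z^{n-1}$. This amounts to reproducing the computation in \cite[Lemma 3.3]{Stembridge1988} for a general index $\ell$ (by symmetry one could first take $\ell=n$ and then permute) with the extra $\Psi$-factor. Setting $y=0$ recovers Lemma \ref{qx for gm,n} with $\ell=n$, which serves as a consistency check on the constants.
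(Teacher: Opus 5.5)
Your route is the paper's: compute $\tau_\ell h_k/h_k$, clear denominators to get an identity of the form $\tau_\ell(h_kP)=q^{-k}h_kQ$, and then take $[x^\beta]$ using $[x^\beta]\tau_\ell f=q^{\beta_\ell}[x^\beta]f$ and $[FH]G=[F]G(x)H(x^{-1})$. Your Step 2 is correct. However, Step 1 is the whole substance of this lemma, and it is wrong as written. You also leave its constant unresolved.

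\textbf{The pair factor.} There is no sign ambiguity and no power of $q$ here. For every $i\neq\ell$ the Vandermonde ratio is $(x_i-qx_\ell)/(x_i-x_\ell)$, whether $i<\ell$ or $i>\ell$, and
\[
\frac{x_i-qx_\ell}{x_i-x_\ell}\cdot\frac{1-x_i/x_\ell}{1-qx_\ell/x_i}\cdot\frac{1-zx_\ell/x_i}{1-zq^{-1}x_i/x_\ell}=\frac{z-x_i/x_\ell}{1-zq^{-1}x_i/x_\ell}.
\]
Your version has a spurious factor $q$: the first two fractions combine to $-x_i/x_\ell$, not $-qx_i/x_\ell$. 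That error would leave an extra $q^{n-1}$ and give $q^{\beta_\ell+k-n+1}$ on the left.

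\textbf{The collapsed ratio.} It is $\prod_{i\neq\ell}(zx_\ell-x_i)/(x_\ell-zq^{-1}x_i)$. The $q^{-1}$ in the denominator cannot be absorbed into an overall power of $q$, because $1-zq^{-1}x_i/x_\ell$ is exactly $\tau_\ell(1-zx_i/x_\ell)$.

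\textbf{The cleared identity.} Your displayed version, with $\prod_{i\neq\ell}(1-zx_i/x_\ell)$ on the left, is therefore false. The correct identity is
\[
\tau_\ell h_k\cdot(1-q^{1/2}yx_\ell)\prod_{i\neq\ell}\Big(1-zq^{-1}\frac{x_i}{x_\ell}\Big)=q^{-k}\,h_k\cdot(1-q^{1/2}zyx_\ell)\prod_{i\neq\ell}\Big(z-\frac{x_i}{x_\ell}\Big).
\]
So $c=q^{-k}$ exactly, with $P=(1-q^{-1/2}yx_\ell)\prod_{i\neq\ell}(1-zx_i/x_\ell)$; the $z^{n-1}$ is already contained in $\prod_{i\neq\ell}(z-x_i/x_\ell)$. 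If you applied Step 2 consistently to your own display, the left bracket would contain $\prod_{i\neq\ell}(1-qzx_\ell/x_i)$. Your Step 2 silently uses the correct $P$ instead.

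Once Step 1 is replaced by this identity, your Step 2 yields the lemma exactly as in the paper.
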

When $y=0$, $h_k(z;q)$ becomes $g_{k,n}(z,q)$ and this lemma reduces to Lemma \ref{qx for gm,n}. Before giving a proof, we state an easy but useful fact: for any $f(x) \in \bC[x_1^{\pm},\dots,x_n^{\pm}]$ we have
\begin{equation}
    \label{fact of qx}
q^{\be_i} [x^\be] f(x_1,\dots,x_n) = [x^{\be}] f(x_1,\dots,x_{i-1},qx_i,x_{i+1},\dots,x_n).
\end{equation}

\begin{proof}
     Through a straightforward computation, we find that
    \[
    \frac{h_{k}(z,q;x_{\ell} \to qx_{\ell})}{h_k(z,q)} = z^{n-1}q^{-k} \prod_{\substack{i=1 \\ i\neq \ell}}^{n} \frac{(1-z^{-1}x_ix_{\ell}^{-1})}{(1-zq^{-1}x_ix_{\ell}^{-1})} \times \frac{1-q^{\frac{1}{2}}x_\ell yz}{1-q^{\frac{1}{2}}x_\ell y}.
    \]
    In other words, we have
    \begin{align}\label{proof-h-relation}
   & h_k(z,q) \prod_{\substack{i=1\\ i\neq \ell}}^{n} (1-zx_ix_{\ell}^{-1})(1-q^{-\frac{1}{2}}x_\ell y) \bigg|_{x_\ell \to qx_\ell}  \nonumber \\
   &= z^{n-1}q^{-k} h_k(z,q) \prod_{\substack{i=1 \\ i\neq \ell}}^{n} (1-z^{-1}x_ix_{\ell}^{-1}) (1-q^{\frac{1}{2}}x_\ell yz).
    \end{align}
    Using the fact \eqref{fact of qx}, we have
    \begin{align}\label{add-Lem3.7-1}
        & [x^{\be}]h_k(z,q) \prod_{\substack{i=1\\ i\neq \ell}}^{n} (1-zx_ix_{\ell}^{-1})(1-q^{-\frac{1}{2}}x_\ell y)  \bigg|_{x_\ell \to qx_\ell} \nonumber \\
        &= q^{\be_\ell} [x^\be]h_k(z,q) \prod_{\substack{i=1\\ i\neq \ell}}^{n} (1-zx_ix_{\ell}^{-1})(1-q^{-\frac{1}{2}}x_\ell y)  \nonumber \\
        &= q^{\be_\ell} \bigg[x^\be (1-q^{-\frac{1}{2}}x_\ell^{-1} y)\prod_{\substack{i=1 \\ i\neq \ell}}^{n} (1-zx_i^{-1}x_{\ell}) \bigg] h_{k}(z,q).
    \end{align}
    On the other hand, using \eqref{proof-h-relation} we have
    \begin{align}\label{add-Lem3.7-2}
        & [x^{\be}]h_k(z,q) \prod_{\substack{i=1\\ i\neq \ell}}^{n} (1-zx_ix_{\ell}^{-1})(1-q^{-\frac{1}{2}}x_\ell y)  \bigg|_{x_\ell \to qx_\ell} \nonumber \\
        &= [x^{\be}] z^{n-1}q^{-k} h_k(z,q) \prod_{\substack{i=1 \\ i\neq \ell}}^{n} (1-z^{-1}x_ix_{\ell}^{-1}) (1-q^{\frac{1}{2}}x_\ell yz)  \nonumber \\
       & = z^{n-1}q^{-k} \bigg[x^{\be} (1-q^{\frac12}zyx_\ell^{-1}) \prod_{\substack{i=1 \\ i\neq \ell}}^{n} \Big(1-z^{-1}\frac{x_\ell}{x_i}\Big) \bigg] h_k(z,q).
    \end{align}
    Comparing \eqref{add-Lem3.7-1} and \eqref{add-Lem3.7-2}, we obtain the desired assertion.
\end{proof}

When $m=0$, we only need to compute $[x^\de]h_k(z,q)$.
\begin{lem}
\label{linear relation of gk}
    For any $k\in\bN^{+}$, we have
    \[
    [x^\de] h_{k}(z,q) = (yq^{\frac12})^n \prod_{r=1}^{n}\frac{z^r-q^{k-1}}{1-q^kz^{r-1}} [x^\de] h_{k-1}(z,q).
    \]
\end{lem}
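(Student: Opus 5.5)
The plan is to relate $[x^\de]h_k(z,q)$ and $[x^\de]h_{k-1}(z,q)$ through a chain of $n$ intermediate coefficients, and to link consecutive coefficients by one application of Lemma \ref{qx for gk} with $\ell=n$. Since $h_{k-1}=x\,h_k$ with $x=x_1\cdots x_n$, we have $[x^\de]h_{k-1}=[x^{\de}x^{-1}]h_k$. For $1\le r\le n+1$ put
\[
B_r:=\Big[\frac{x^\de}{x_r x_{r+1}\cdots x_n}\Big]h_k(z,q),
\]
so that $B_1=[x^\de]h_{k-1}$ and $B_{n+1}=[x^\de]h_k$ (empty product). The claim then reduces to showing that
\[
B_{r+1}=yq^{\frac12}\,\frac{z^{n-r+1}-q^{k-1}}{1-q^kz^{n-r}}\,B_r,\qquad 1\le r\le n.
\]
Multiplying these $n$ ratios and reindexing $n-r+1\mapsto r$ gives exactly the stated product.

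To get the relation between $B_r$ and $B_{r+1}$, apply Lemma \ref{qx for gk} with $\ell=n$ and $\be=\de-e_r-\cdots-e_{n-1}$, so that $\be_n=0$ for $r\le n-1$. Expanding the linear factors $(1-q^{-\frac12}yx_n^{-1})$ and $(1-q^{\frac12}zyx_n^{-1})$ splits each side into two brackets. One has exponent $\be$, that is, the monomial $x^\de/(x_r\cdots x_{n-1})$. The other has exponent $\be-e_n$, that is, $x^\de/(x_r\cdots x_n)$. Each bracket still carries the product $\prod_{i<n}(1-w\,x_n/x_i)$ with $w=z$ or $w=z^{-1}$. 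Since $h_k$ is antisymmetric, Corollary \ref{F x -ap(r)} applies to both brackets, with its indeterminate taken to be $w$. By \eqref{F x de-ap(r)}, the bracket with exponent $\be$ equals $\frac{w^{n-r}-w^n}{1-w}B_{r+1}$. By \eqref{F x de-ap(r) xn -1}, the bracket with exponent $\be-e_n$ equals $\frac{1-w^{n-r+1}}{1-w}B_r$. This yields
\[
q^{k}\Big(\tfrac{z^{n-r}-z^n}{1-z}B_{r+1}-q^{-\frac12}y\tfrac{1-z^{n-r+1}}{1-z}B_r\Big)
= z^{n-1}\Big(\tfrac{z^{r-n}-z^{-n}}{1-z^{-1}}B_{r+1}-q^{\frac12}zy\tfrac{1-z^{r-n-1}}{1-z^{-1}}B_r\Big).
\]
This is a linear relation between $B_r$ and $B_{r+1}$, and solving it should give the displayed ratio.

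The boundary case $r=n$ uses $\be=\de$, with $\be_n=0$ again. There the two brackets are handled by \eqref{F de} and \eqref{add-F-de} instead, which is consistent with Corollary \ref{F x -ap(r)} read at $r=n$.

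The main obstacle is bookkeeping rather than any conceptual step. I must check that the $z$-powers simplify correctly: after multiplying the right side by $z^{n-1}/(1-z^{-1})=-z^n/(1-z)$, the coefficient of $B_{r+1}$ should become $q^k(z^{n-r}-z^n)-(1-z^{r})\cdot(\text{sign})$, and similarly for $B_r$. The simplified relation should factor as $(1-q^kz^{n-r})B_{r+1}=yq^{\frac12}(z^{n-r+1}-q^{k-1})B_r$ up to a common factor $(1-z^r)/(1-z)$ or similar, which must cancel. I will verify the cancellation first in the case $n=1$, where the identity can be checked directly. Finally, the denominators $1-q^kz^{n-r}$ are invertible formal power series, so dividing by them is legitimate.
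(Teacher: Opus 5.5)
Your strategy matches the paper's: apply Lemma \ref{qx for gk} with $\ell=n$ and $\be=\de-e_r-\cdots-e_{n-1}$, evaluate both sides with Corollary \ref{F x -ap(r)} (Corollary \ref{cor-k01} at $r=n$), and chain from $B_1=[x^\de]h_{k-1}$ to $B_{n+1}=[x^\de]h_k$. Your displayed linear relation is also correct.

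The gap is the per-step identity you reduce to, $B_{r+1}=yq^{\frac12}\frac{z^{n-r+1}-q^{k-1}}{1-q^kz^{n-r}}B_r$. It is false for $n\ge 2$, so "solving it should give the displayed ratio" fails. Multiply your display by $1-z$ and move everything to one side. The coefficient of $B_{r+1}$ becomes $q^k(z^{n-r}-z^n)-(1-z^r)=-(1-z^r)(1-q^kz^{n-r})$. The coefficient of $B_r$ becomes $q^{\frac12}y(1-z^{n-r+1})(z^r-q^{k-1})$. So what you actually get is
\[
(1-z^r)(1-q^kz^{n-r})\,B_{r+1}=yq^{\frac12}(1-z^{n-r+1})(z^r-q^{k-1})\,B_r ,
\]
which is exactly the relation the paper derives. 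The two sides carry different factors, $1-z^r$ and $1-z^{n-r+1}$, so there is no common factor to cancel within a step. The numerator is also $z^r-q^{k-1}$, not $z^{n-r+1}-q^{k-1}$. For example, with $n=2$ and $r=1$ the true ratio is $yq^{\frac12}(1+z)(z-q^{k-1})/(1-q^kz)$, not $yq^{\frac12}(z^2-q^{k-1})/(1-q^kz)$. Your planned check at $n=1$ cannot catch this: the only step there is $r=n=1$, where the two ratios agree.

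The repair is short. Keep the correct per-step ratio and multiply over $r=1,\dots,n$. Since $r\mapsto n-r+1$ permutes $\{1,\dots,n\}$, $\prod_r(1-z^{n-r+1})$ cancels $\prod_r(1-z^r)$. What remains, $\prod_{r=1}^n(z^r-q^{k-1})/\prod_{r=1}^n(1-q^kz^{n-r})$, is the claimed product after reindexing the denominator. The cancellation you anticipated happens only globally, after the full telescoping, not step by step. That is also why your incorrect per-step ratios happen to multiply to the right final answer.
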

\begin{proof}
 We will use Corollary \ref{F x -ap(r)} in Lemma \ref{qx for gk} with $\be=\de-(e_r+\cdots+e_{n-1})$. For the left side of Lemma \ref{qx for gk}, taking $y=z$ in Corollary \ref{F x -ap(r)}, we can get
    \begin{align*}
        & q^k \bigg[ \frac{x^\de}{x_r\cdots x_{n-1}} (1-q^{-\frac12}yx_n^{-1}) \prod_{i=1}^{n-1} \Big(1-z\frac{x_n}{x_i}\Big) \bigg] h_k(z,q) \\
        &= q^k \bigg[ \frac{x^\de}{x_r\cdots x_{n-1}} \prod_{i=1}^{n-1} \Big(1-z\frac{x_n}{x_i}\Big) \bigg] h_k(z,q) -q^{k-\frac12}y \bigg[ \frac{x^\de}{x_r\cdots x_{n}} \prod_{i=1}^{n-1} \Big(1-z\frac{x_n}{x_i}\Big) \bigg] h_k(z,q) \\
        &= q^k \frac{z^{n-r}-z^n}{1-z} \bigg[ \frac{x^\de}{x_{r+1}\cdots x_n} \bigg] h_k(z,q) - q^{k-\frac12}y \frac{1-z^{n-r+1}}{1-z} \bigg[\frac{x^{\de}}{x_r\cdots x_n} \bigg] h_k(z,q);
    \end{align*}
    Taking $y=z^{-1}$ for the right side, we get
    \begin{align*}
        & z^{n-1} \bigg[x^{\be} (1-q^{\frac12}zyx_n^{-1}) \prod_{i=1}^{n-1} \Big(1-z^{-1}\frac{x_n}{x_i}\Big) \bigg] h_k(z,q) \\
        &= z^{n-1} \bigg[ \frac{x^\de}{x_r\cdots x_{n-1}} \prod_{i=1}^{n-1} \Big(1-z^{-1}\frac{x_n}{x_i}\Big) \bigg] h_k(z,q) - q^{\frac12}z^ny \bigg[ \frac{x^\de}{x_r\cdots x_{n}} \prod_{i=1}^{n-1} \Big(1-z^{-1}\frac{x_n}{x_i}\Big) \bigg] h_k(z,q) \\
       & = z^{n-1} \frac{z^{-(n-r)}-z^{-n}}{1-z^{-1}} \bigg[ \frac{x^\de}{x_{r+1}\cdots x_n} \bigg] h_k(z,q) - q^{\frac12}z^ny \frac{1-z^{-(n-r+1)}}{1-z^{-1}} \bigg[\frac{x^{\de}}{x_r\cdots x_n} \bigg] h_k(z,q) \\
        &= \frac{1-z^r}{1-z} \bigg[ \frac{x^\de}{x_{r+1}\cdots x_n} \bigg] h_k(z,q) - q^{\frac12}y\frac{z^r-z^{n+1}}{1-z} \bigg[\frac{x^{\de}}{x_r\cdots x_n} \bigg] h_k(z,q).
    \end{align*}
    Thus, by Lemma \ref{qx for gk}, we obtain the following relation: for $1 \le r \le n-1$,
    \begin{equation}
    \label{lr [xr]gk and [xr+1]gk}
        \bigg[ \frac{x^\de}{x_{r+1} \cdots x_{n}} \bigg]h_k = \frac{yq^{\frac12}(1-z^{n-r+1})(z^r-q^{k-1})}{(1-q^kz^{n-r})(1-z^r)}\bigg[ \frac{x^\de}{x_r \cdots x_{n}} \bigg]h_k.
    \end{equation}

 Similarly,   for the left side of \eqref{add-h-relation} with $\be=\de$, we have
    \begin{align}\label{add-h-LHS}
        & q^k \bigg[ x^\de (1-q^{-\frac12}yx_n^{-1}) \prod_{i=1}^{n-1} \Big(1-z\frac{x_n}{x_i}\Big) \bigg] h_k(z,q)  \nonumber \\
        &= q^k \bigg[ x^\de \prod_{i=1}^{n-1} \Big(1-z\frac{x_n}{x_i}\Big) \bigg] h_k(z,q) - q^{k-\frac12}y \bigg[ x^\de x_n^{-1} \prod_{i=1}^{n-1} \Big(1-z\frac{x_n}{x_i}\Big) \bigg] h_k(z,q)  \nonumber \\
        &= q^k \frac{1-z^n}{1-z} [x^\de]h_k(z,q) - q^{k-\frac12}y [x^\de x_n^{-1}] h_k(z,q). \quad \text{(by Corollary \ref{cor-k01})}
\end{align}
For the right side of \eqref{add-h-relation} with $\be=\de$, we have
\begin{align}\label{add-h-RHS}
        & z^{n-1} \bigg[x^{\de} (1-q^{\frac12}zyx_n^{-1}) \prod_{i=1}^{n-1} \Big(1-z^{-1}\frac{x_n}{x_i}\Big) \bigg] h_k(z,q)  \nonumber \\
        &= z^{n-1} \bigg[ x^{\de} \prod_{i=1}^{n-1} \Big(1-z^{-1}\frac{x_n}{x_i}\Big) \bigg] h_k(z,q) - q^{\frac12}z^ny \bigg[ x^{\de} x_n^{-1} \prod_{i=1}^{n-1} \Big(1-z^{-1}\frac{x_n}{x_i}\Big) \bigg] h_k(z,q) \nonumber \\
       &= \frac{1-z^n}{1-z} [x^\de] h_k(z,q) - q^{\frac12}z^ny [x^\de x_n^{-1}] h_k(z,q). \quad  \text{(by Corollary \ref{cor-k01})}
    \end{align}
By Lemma \ref{qx for gk}, the two expressions in \eqref{add-h-LHS} and \eqref{add-h-RHS} match each other, and we deduce that
    \begin{align}\label{add-eq-A}
    [x^\de]h_k(z,q) = \frac{yq^\frac12(z^n - q^{k-1})(1-z)}{(1-z^n)(1-q^k)} [x^\de x_n^{-1}] h_k(z,q),
    \end{align}
    which is exactly the relation \eqref{lr [xr]gk and [xr+1]gk} when we set $r=n$ and agree that $\big[\frac{x^\de}{x_{r+1}\cdots x_n} \big]h_k \big|_{r=n} := [x^\de]h_k$.

 Note that $[\frac{x^\de}{x_1\cdots x_n}]h_k = [x^\de]h_{k-1}$. From \eqref{lr [xr]gk and [xr+1]gk} we have
    \begin{align*}
        &[x^\de]h_k = \frac{yq^\frac12(z^n - q^{k-1})(1-z)}{(1-z^n)(1-q^k)} [x^\de x_n^{-1}] h_k \\
        &= \prod_{r=1}^n \frac{yq^{\frac12}(1-z^{n-r+1})(z^r-q^{k-1})}{(1-q^kz^{n-r})(1-z^r)}\bigg[ \frac{x^\de}{x_1 \cdots x_{n}} \bigg]h_k \\
        &= (yq^{\frac12})^n \prod_{r=1}^{n}\frac{z^r-q^{k-1}}{1-q^kz^{r-1}} [x^\de] h_{k-1}. \qedhere
    \end{align*}
\end{proof}

\begin{proof}[Proof of Theorem \ref{thm-CT-SU-N12}]
    From \eqref{General-2-key} and \eqref{add-coeff} we deduce that
    \begin{align}\label{add-N12-proof-1}
&C_0^{(1)}(y,z)= [x_1^0x_2^0\cdots x_{n-1}^0] F_{0,n}(x_1,\dots,x_{n-1},(x_1x_2\cdots x_{n-1})^{-1}) \nonumber \\
&= n! \sum_{k=0}^\infty [x^\de] h_k(z,q).
    \end{align}
By Lemma \ref{linear relation of gk}, we have
    \begin{align}\label{add-hk-h0}
        [x^\de] h_{k}(z,q) = (yq^{\frac12})^n \prod_{r=1}^{n}\frac{z^r-q^{k-1}}{1-q^kz^{r-1}} [x^\de] h_{k-1}(z,q)= (yq^{\frac12})^{nk} \prod_{t=1}^k \prod_{r=1}^{n}\frac{z^r-q^{t-1}}{1-q^tz^{r-1}} [x^\de] h_{0}(z,q).
    \end{align}
We have
    \begin{align}\label{add-h-compute}
  &  [x^\de] h_0(z,q) = [x^0] \prod_{i\ne j}^n \frac{(qx_ix_j^{-1};q)_{\ii}}{(zx_ix_j^{-1};q)_{\ii}} \prod_{i=1}^n \frac{(q^{\frac12}x_iy;q)_{\ii}}{(q^{\frac12}x_iyz;q)_{\ii}} \nonumber \\
  &= [x^0] \prod_{i\ne j}^n \frac{(qx_ix_j^{-1};q)_{\ii}}{(zx_ix_j^{-1};q)_{\ii}} = [x^\de] g_{0,n}(z,q),
    \end{align}
where $g_{m,n}(z,q)$ was defined in \eqref{add-g-defn}.  Note that the $x$-degree of ${(qx_ix_j^{-1};q)_{\ii}}/{(zx_ix_j^{-1};q)_{\ii}}$ is $0$, and ${(q^{\frac12}x_iy;q)_{\ii}}/{(q^{\frac12}x_iyz;q)_{\ii}}$ contains only monomials with nonnegative powers. Thus we have to choose the constant term of ${(q^{\frac12}x_iy;q)_{\ii}}/{(q^{\frac12}x_iyz;q)_{\ii}}$, that is $1$. This explains the last equality in \eqref{add-h-compute}.

Substituting \eqref{g0,n} into \eqref{add-h-compute} and then substituting the result into \eqref{add-N12-proof-1}, we obtain \eqref{eq-thm-SU-N12-C0}.

When $m=1$, note that
\begin{align}\label{add-eq-C}
[x^\de] h_k(z,q) \cdot \sum_{i=1}^{n} x_i = \sum_{i=1}^{n} [x^\de x_i^{-1}] h_k(z,q) = [x^\de x_{n}^{-1}] h_k(z,q)
\end{align}
since $h_k$ is antisymmetric and for $1\leq i<n$, exponents of $x^\de x_i^{-1}$ are not distinct.

Using \eqref{add-coeff} we have
    \begin{align*}
        &C_1^{(1)}(y,z) = [x_1^0x_2^0\cdots x_{n-1}^0] F_{1,n}(x_1,\dots,x_{n-1},(x_1x_2\cdots x_{n-1})^{-1}) \nonumber \\
      & = n! \sum_{k=0}^{\ii} [x^\de] h_k(z,q) \sum_{i=1}^n x_i = n! \sum_{k=0}^{\ii} [x^\de x_{n}^{-1}] h_k(z,q)  \quad \text{(by \eqref{add-eq-C})} \nonumber \\
      &=n! \sum_{k=0}^{\ii} (yq^{\frac{1}{2}})^{nk-1} \frac{(1-z^n)(1-q^k)}{(z^n - q^{k-1})(1-z)} \prod_{t=1}^k \prod_{r=1}^{n}\frac{z^r-q^{t-1}}{1-q^tz^{r-1}}[x^\de] g_{0,n}(z,q).
    \end{align*}
Here for the last equality we used \eqref{add-eq-A}, \eqref{add-hk-h0} and \eqref{add-h-compute}. Substituting  \eqref{g0,n} into it, we obtain \eqref{eq-thm-SU-N12-C1}.
\end{proof}

\begin{proof}[Proof of Theorem \ref{thm-SUN-N-12}]
This follows from \eqref{eq-thm-SU-N12-C0} and \eqref{eq-thm-SU-N12-C1} by setting $z=0$. Note that for the later we also need to replace $k$ by $k+1$.
\end{proof}

Our second goal is to prove Theorem \ref{thm-SUN-N-12-general}.  For convenience, we denote
\begin{align}
&B_{i,k}:= B_{i,k}(y,q) = [x^\de (x_i \cdots x_n)^{-1}] h_k(0,q), \label{def-Bik} \\
&C_{i,m,k}:= C_{i,m,k}(y,q) = [x^\de x_i^{-m}] h_k(0,q).  \label{def-Cimk}
\end{align}
Clearly, we have $[x^\de]h_k = B_{1,k+1}$. From \eqref{lr [xr]gk and [xr+1]gk} with $z=0$, we have
\begin{align}\label{add-Bi-B1}
    B_{i,k} = (-yq^{\frac12})^{i-1}q^{(k-1)(i-1)} B_{1,k}, \quad 1\leq i \leq n.
\end{align}
From the proof of Theorem \ref{thm-CT-SU-N12}, we obtain
\begin{align}
    \label{C-i-0-k}
    B_{1,k+1} = C_{i,0,k} = [x^\de]h_k =\begin{cases}
        \frac{(-y)^{nk}q^{nk^2/2}}{(q;q)_k (q;q)_{\ii}^{n-1}}, & k\geq 0, \\
        0, & k<0.
    \end{cases}
\end{align}
In particular, the case $k\geq 0$ follows from \eqref{add-hk-h0}, \eqref{add-h-compute} and \eqref{g0,n}.

Substituting \eqref{C-i-0-k} into \eqref{add-Bi-B1}, we deduce that
\begin{align}\label{Bik-result}
    B_{i,k} =\begin{cases}
    (-y)^{n(k-1)+i-1} \frac{q^{n(k-1)^2/2+(i-1)(k-\frac12)}}{(q;q)_{k-1}(q;q)_{\ii}^{n-1}}, & k\geq 1, \\
    0, & k\leq 0.
    \end{cases}
\end{align}

From \eqref{General-2-key} and \eqref{add-coeff} we have
\begin{align}
    \label{Cm and Cimk}
    C_m^{(1)}(y,0) = n! \sum_{k = \lceil m/n \rceil}^{\ii} \sum_{i=1}^n C_{i,m,k}(y,q)= n! \sum_{k\in \mathbb{Z}} \sum_{i=1}^n C_{i,m,k}(y,q).
\end{align}
Here the last equality holds since $C_{i,m,k}=0$ when $k<\lceil m/n\rceil$. To compute $C_{i,m,k}$, we need the following fact which is the special case $z=0$ of Lemma \ref{qx for gk}:  For any $\be\in\bZ^n$ and $1\leq \ell \leq n$, we have
    \begin{align}  \label{qxi for hk(0,q)}
        q^{\be_\ell+k} \big( [x^\be]h_k(0,q) - q^{-\frac12}y [x^\be x_\ell^{-1}]h_k(0,q) \big) = (-1)^{n-1} [x^\be x_\ell^n]h_{k-1}(0,q).
    \end{align}

\begin{thm}\label{thm-Cimk}
  Let $1\leq i \leq n$.  We have
    \begin{align}
        C_{i,0,k}(y,q) &= \begin{cases}
            \frac{(-y)^{nk}q^{nk^2/2}}{(q;q)_k (q;q)_{\ii}^{n-1}}, & k \ge 0,\\
            0, & k < 0,
        \end{cases} \label{C_i0k} \\
        C_{i,m,k}(y,q) &= \begin{cases}
            \frac{(-1)^{nk+1} y^{nk-m} q^{(nk^2+m)/2-(n+1-i)k}}{(q;q)_{k-1}(q;q)_{\ii}^{n-1}}, & n-i+1 \le m \le n \text{ and } k > 0, \\
            0, & 0<m<n-i+1 \text{ or } k \le 0.
        \end{cases} \label{C-imk}
    \end{align}
    Furthermore, for other $m$, we have the following recurrence relations:
    \begin{align}\label{rela-C-12}
        C_{i,m,k}(y,q) =\begin{cases}
        q^{\frac12}y^{-1} \big( C_{i,m-1,k}(y,q) + (-1)^n q^{m+i-k-n-1} C_{i,m-n-1,k-1}(y,q) \big), & m>n,  \\
        (-1)^{n-1} q^{k-i-m+1} \big( C_{i,m+n,k+1}(y,q) - yq^{-\frac12} C_{i,m+n+1,k+1}(y,q) \big), & m<0.
        \end{cases}
    \end{align}
\end{thm}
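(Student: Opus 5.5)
The plan is to get every case out of the single $z=0$ shift relation \eqref{qxi for hk(0,q)}, applied with $\ell=i$. Two further tools are needed. First, by antisymmetry, $[x^\gamma]h_k(0,q)=0$ whenever $\gamma$ has a repeated component, and otherwise $[x^\gamma]h_k(0,q)$ equals $\pm$ the coefficient at the sorted vector, by \eqref{anti-sign}. Second, since $h_k=x^{-1}h_{k-1}$, we have $[x^\gamma]h_{k-1}=[x^{\gamma-(1,\dots,1)}]h_k$. The case $m=0$, namely \eqref{C_i0k}, is exactly \eqref{C-i-0-k}, so nothing new is needed there.

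\textbf{Recurrences.} Taking $\be=\de-(m-1)e_i$ and $\ell=i$ in \eqref{qxi for hk(0,q)} gives $\be_i=n-i-m+1$ and
\begin{align*}
q^{n-i-m+1+k}\big(C_{i,m-1,k}-q^{-\frac12}yC_{i,m,k}\big)=(-1)^{n-1}C_{i,m-1-n,k-1}.
\end{align*}
Solving this for $C_{i,m,k}$ yields the $m>n$ line of \eqref{rela-C-12}. Replacing $m$ by $m+n+1$ and $k$ by $k+1$, and solving for $C_{i,m,k}$, yields the $m<0$ line. These are pure algebra, valid for every $m$; I only need to check that the powers of $q$ and the signs come out as displayed.

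\textbf{Base cases $1\le m\le n$.} If $0<m\le n-i$, then the $i$-th exponent of $\de-me_i$ is $n-i-m$, which equals $\de_{i+m}$. So $C_{i,m,k}=0$, which is the vanishing part of \eqref{C-imk}.

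For $n-i+1\le m\le n$, I use the same relation with $\be=\de-me_i$ and $n-i\le m\le n-1$. Its right side is $C_{i,m-n,k-1}$ with $m-n\in[-i,-1]$.
\begin{itemize}
\item For $-i<m-n<0$ the exponent $n-i+(n-m)$ collides with $\de_{i-(n-m)}$, so the right side vanishes. This gives the geometric step $C_{i,m+1,k}=q^{\frac12}y^{-1}C_{i,m,k}$ for $n-i+1\le m\le n-1$.
\item At $m=n-i$ the left term $C_{i,n-i,k}$ is zero. This gives
\begin{align*}
-q^{k-\frac12}y\,C_{i,n-i+1,k}=(-1)^{n-1}C_{i,-i,k-1}.
\end{align*}
\end{itemize}
To evaluate $C_{i,-i,k-1}$, I shift degree: $C_{i,-i,k-1}=[x^{\de+ie_i-(1,\dots,1)}]h_k$. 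The vector $\de+ie_i-(1,\dots,1)$ is an $i$-cycle rearrangement of $\de-(e_{i+1}+\cdots+e_n)$, so it equals $(-1)^{i-1}B_{i+1,k}$. When $i=n$ I read $B_{n+1,k}$ as $[x^\de]h_k=B_{1,k+1}$, consistent with \eqref{C-i-0-k}.

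Inserting \eqref{Bik-result} then gives $C_{i,n-i+1,k}$ explicitly, and iterating the geometric step gives $C_{i,m,k}$ for all $n-i+1\le m\le n$. The result vanishes for $k\le0$ because $B_{\cdot,k}$ does. The remaining check is that the exponent of $q$ collapses to $(nk^2+m)/2-(n+1-i)k$ and the sign to $(-1)^{nk+1}$.

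\textbf{Main obstacle.} The delicate step is this last one: getting the sign of the cyclic rearrangement right and tracking the boundary case $i=n$ together with the $k\le 0$ vanishing. I would first verify the resulting formula against Theorem~\ref{thm-SUN-N-12}, i.e.\ the case $m=1$, $i=n$, which is \eqref{add-eq-A}, and then against the small case $n=2$, before trusting the general bookkeeping.
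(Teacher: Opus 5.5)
\textbf{The base case breaks at $i=n$.} Your recurrences are correct, and so is your base case for $i<n$: the anchor $C_{i,-i,k-1}=(-1)^{i-1}B_{i+1,k}$ together with \eqref{add-Bi-B1} gives $C_{i,n-i+1,k}=(-1)^{n-i}B_{i,k}$, as it should. The failure is the claim ``at $m=n-i$ the left term $C_{i,n-i,k}$ is zero.'' For $i=n$ this is $m=0$, and the left term is $C_{n,0,k}=[x^\de]h_k(0,q)$. Your vanishing criterion needs $m>0$, and $\de$ itself has distinct components, so this term is not zero. Keeping it, the relation with $\be=\de$, $\ell=n$ reads
\[
q^{k}\big(C_{n,0,k}-q^{-\frac12}y\,C_{n,1,k}\big)=(-1)^{n-1}C_{n,-n,k-1}=C_{n,0,k},
\qquad\text{so}\qquad
C_{n,1,k}=-(1-q^{k})\,q^{\frac12-k}y^{-1}\,C_{n,0,k}.
\]
This is \eqref{add-eq-A} at $z=0$. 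Your version drops the factor $1-q^k$, which has two consequences. First, every $C_{n,m,k}$ gets $(q;q)_k$ in the denominator instead of $(q;q)_{k-1}$. Second, you get $C_{n,1,0}=-q^{\frac12}y^{-1}(q;q)_\ii^{1-n}\neq 0$, which contradicts the stated vanishing for $k\le 0$. Your remark that the result vanishes for $k\le0$ ``because $B_{\cdot,k}$ does'' fails under your own reading $B_{n+1,k}=B_{1,k+1}$, since $B_{1,1}=[x^\de]h_0=(q;q)_\ii^{1-n}\neq0$. The check against \eqref{add-eq-A} that you planned would therefore fail as the argument stands.

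\textbf{How to repair it.} The cleanest fix, which is what the paper does, is to anchor the base case directly rather than through $C_{i,-i,k-1}$. The vector $\de-(n-i+1)e_i$ is an $(n-i+1)$-cycle rearrangement of $\de-(e_i+\cdots+e_n)$. Hence $C_{i,n-i+1,k}=(-1)^{n-i}B_{i,k}$ for every $1\le i\le n$; for $i=n$ this is simply the definition $C_{n,1,k}=B_{n,k}$. Then apply your geometric step $C_{i,m+1,k}=q^{\frac12}y^{-1}C_{i,m,k}$ for $n-i+1\le m\le n-1$. (The paper obtains the same ratio by moving the exponent into $x_n$ and using the shift relation with $\ell=n$.) Combined with \eqref{Bik-result}, this gives the stated formula uniformly in $i$, and the vanishing for $k\le0$ is inherited from $B_{i,k}$.
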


\begin{proof}
    The result of $m=0$ can be obtained directly from \eqref{C-i-0-k}.

    Now, we suppose $1 \le m \le n$. When $1 \le i < n-m+1$, clearly
    \[
    C_{i,m,k} = [x^\de x_i^{-m}]h_k(0,q) = 0
    \]
    by the antisymmetry of $h_k(0,q)$. For $n-m+1 \le i \le n$, we have
    \begin{align}\label{add-Cimk-hk}
       & C_{i,m,k} = [x^\de x_i^{-m}]h_k = [\omega_{n-i+1}^{-1}(x^\de (x_i \cdots x_{n-1})^{-1} x_n^{n-i-m})]h_k(0,q) \notag \\
        &= (-1)^{n-i} [x^\de (x_i \cdots x_{n-1})^{-1} x_n^{n-i-m}]h_k(0,q).
    \end{align}
 In particular, we have
    \begin{align}\label{add-C-B}
    C_{n-m+1,m,k} = (-1)^{m+1} [x^\de (x_{n-m+1} \cdots x_{n})^{-1}]h_k(0,q) = (-1)^{m+1} B_{n-m+1,k}.
    \end{align}
When $n-m+2 \le i \le n$, the power $n-i-m \le -2$. Thus using \eqref{qxi for hk(0,q)} with $\ell=n$ and $\be = \de - (e_i+\cdots+e_{n-1})+(n-i-m+1)e_n$, we deduce that
    \begin{align*}
        &q^{k+n-i-m+1} \big( [x^\de (x_i \cdots x_{n-1})^{-1} x_n^{n-i-m+1}]h_k(0,q) - q^{-\frac12}y [x^\de (x_i \cdots x_{n-1})^{-1} x_n^{n-i-m}]h_k(0,q) \big) \\
    &= (-1)^{n-1} [x^\de (x_i \cdots x_{n-1})^{-1} x_n^{2n-i-m+1}]h_{k-1}(0,q)
    \end{align*}
    Note that $n-i-m \le -2$ and $m\le n$, and thus
    \[
    n-i+1 \le 2n-i-m+1 \le n-1,
    \]
    which implies $[x^\de (x_i \cdots x_{n-1})^{-1} x_n^{2n-i-m+1}]h_{k-1}(0,q)=0$. Hence we get
    \begin{align}\label{add-hk0-rec}
        [x^\de (x_i \cdots x_{n-1})^{-1} x_n^{n-i-m}]h_k(0,q) = y^{-1}q^{\frac12} [x^\de (x_i \cdots x_{n-1})^{-1} x_n^{n-i-m+1}]h_k(0,q).
    \end{align}
 Iterating \eqref{add-hk0-rec} and combining with \eqref{add-Cimk-hk}, we have
    \begin{align}\label{Cimk-Bik-relation}
        &C_{i,m,k} = (-1)^{n-i} [x^\de (x_i \cdots x_{n-1})^{-1} x_n^{n-i-m}]h_k(0,q) \notag \\
        &= (-1)^{n-i} (y^{-1}q^{\frac12})^{m+i-n-1} [x^\de (x_i \cdots x_n)^{-1}]h_k(0,q) = (-1)^{n-i} (y^{-1}q^{\frac12})^{m+i-n-1} B_{i,k},
    \end{align}
    which also holds for $i=n-m+1$ in view of \eqref{add-C-B}. Substituting \eqref{Bik-result} into \eqref{Cimk-Bik-relation}, we obtain the desired formulas for $1 \le m \le n$.

    The cases $m>n$ and $m<0$ in the recurrence relation \eqref{rela-C-12} can be obtained directly from \eqref{qxi for hk(0,q)} by setting $\ell=i,\,\be = \de +(1-m)e_i$ and $\be = \de - (m+n)e_i$, respectively. In particular, for the $m<0$ case we also need to replace $k$ by $k+1$.
\end{proof}

\begin{proof}[Proof of Theorem \ref{thm-SUN-N-12-general}]
First note that
\begin{align}
\langle W_m\rangle^{SU(N)_{-N-1/2}}(x;q)=C_N^{(1)}(x,0).
\end{align}
Using \eqref{Cm and Cimk} and Theorem \ref{thm-Cimk} after replacing $n$ by $N$ and setting
\begin{align}\label{Y-C}
Y_{i,m,k}(x,q)=(q;q)_\infty^{N-1}C_{i,m,k}(x,q),
\end{align}
we obtain the desired assertions.
\end{proof}

\begin{proof}[Proof of Corollary \ref{cor-SUN-N-12-general}]
When $1\leq m \leq n$, using \eqref{C-imk} with $k$ replaced by $k+1$, we obtain
\begin{align}
    \label{C-imk+1-1}
C_{i,m,k+1} (y,q) =\begin{cases} \frac{(-1)^{nk+n+1} y^{nk+n-m} q^{(nk^2-n+m)/2+(k+1)(i-1)}}{(q;q)_k(q;q)_{\ii}^{n-1}}, & \text{$k \ge 0$ and $n-i+1 \le m \le n$}, \\
0, & \text{otherwise.}
\end{cases}
\end{align}
Substituting it into \eqref{Cm and Cimk}, we deduce that
 \begin{align}
        C_m^{(1)}(y,q) &= \frac{(-1)^{n+1}(yq^{\frac12})^{n-m}n!}{(q;q)_{\ii}^{n-1}} \sum_{k \in \bN} \frac{(-y)^{nk} q^{nk^2/2+(n-m)k}\big(1-q^{m(k+1)} \big)}{(q;q)_k(1-q^{k+1})}. \label{Cm>0}
    \end{align}
This proves \eqref{intro-Wn-positive} in view of \eqref{Y-C}.

When $-n+1 \le m \le -1$, then $1 \le m+n < m+n+1 \le n$,
and by \eqref{C-imk+1-1} we obtain for $k\geq 0$ that
\begin{align*}
    &C_{i,m+n,k+1} (y,q) =\begin{cases}
    \frac{(-1)^{nk+n+1} y^{nk-m} q^{(nk^2+m)/2+(k+1)(i-1)}}{(q;q)_k(q;q)_{\ii}^{n-1}}, &  1-i \le m \le 0,\\
    0, & -n+1\leq m \leq -i,
    \end{cases} \\
    &C_{i,m+n+1,k+1} (y,q) =\begin{cases} \frac{(-1)^{nk+n+1} y^{nk-m-1} q^{(nk^2+m)/2+(k+1)(i-1)+\frac12}}{(q;q)_k(q;q)_{\ii}^{n-1}}, &-i \le m \le -1, \\
    0, & -n+1\leq m <-i.
    \end{cases}
\end{align*}
Substituting them into \eqref{rela-C-12}, we obtain
\begin{align}
   & C_{i,m,k} (y,q) = (-1)^{n-1} q^{k-i-m+1} \big( C_{i,m+n,k+1}(y,q) - yq^{-\frac12} C_{i,m+n+1,k+1}(y,q) \big) \notag \\
    &= \begin{cases}
        \frac{(-1)^{nk+1} y^{nk-m} q^{(nk^2-m)/2-mk}}{(q;q)_k(q;q)_{\ii}^{n-1}}, & m=-i,k \ge 0,\\
        0, & \text{otherwise}.
    \end{cases} \label{Cimk-m<0-1}
\end{align}
Then by \eqref{Cm and Cimk} we have
\begin{align}
    &C_m^{(1)}(y,0) = n! \sum_{k \in \bN} \sum_{i=1}^n C_{i,m,k}(y,q) = n! \sum_{k \in \bN} C_{-m,m,k}(y,q)  \nonumber \\
    &= \frac{-(yq^{\frac12})^{-m}n!}{(q;q)_{\ii}^{n-1}} \sum_{k\in\bN} \frac{(-1)^{nk} y^{nk} q^{nk^2/2-mk}}{(q;q)_k}.
\end{align}
This proves \eqref{intro-Wn-negative}  in view of \eqref{Y-C}.

When $m=-n$, by \eqref{rela-C-12} we have
\[
C_{i,-n,k}(y,q) = (-1)^{n-1} q^{k-i+n+1} \big( C_{i,0,k+1}(y,q) - yq^{-\frac12} C_{i,1,k+1}(y,q) \big).
\]
From \eqref{C-imk} we see that $C_{i,1,k+1} \ne 0$ only when $n-i+1 \le 1$, i.e., $i=n$. Thus, from \eqref{Cm and Cimk} we have
\begin{align*}
    &C_{-n}^{(1)}(y,0) = n! \sum_{k \ge -1} \sum_{i=1}^n C_{i,-n,k}(y,q) \nonumber \\
    &= (-1)^{n-1} n! \sum_{k \ge 0} \bigg( \sum_{i=1}^n q^{k-i+n} C_{i,0,k}(y,q) -yq^{k-\frac12} C_{n,1,k} (y,q) \bigg) \\
    &= \frac{(-1)^{n-1}n!}{(q;q)_{\ii}^{n-1}} \bigg( \sum_{k \ge 0} \frac{(-y)^{nk}q^{nk^2/2+k}}{(q;q)_k}\sum_{i=1}^n q^{i-1} - \sum_{k > 0}yq^{k-\frac12} \frac{(-1)^{nk+1} y^{nk-1} q^{(nk^2+1)/2-k}}{(q;q)_{k-1}}  \bigg) \\
    &= \frac{(-1)^{n+1}n!}{(q;q)_{\ii}^{n-1}} \bigg( \sum_{k \ge 0} \frac{(-y)^{nk}q^{nk^2/2+k}}{(q;q)_k}\frac{1-q^n}{1-q} + \sum_{k \ge 0} \frac{(-y)^{nk+n} q^{n(k+1)^2/2}}{(q;q)_{k}}  \bigg).
\end{align*}
Here we replaced $k$ by $k+1$ for the second sum to get the last equality. This proves \eqref{intro-Wn-negative-n}  in view of \eqref{Y-C}.

By \eqref{rela-C-12} we have
\begin{align}
    C_{i,n+1,k}(y,q) &= q^{\frac12}y^{-1} \big( C_{i,n,k}(y,q) +(-1)^n q^{i-k}C_{i,0,k-1}(y,q) \big),\\
    C_{i,-n-1,k}(y,q) &= (-1)^{n-1}q^{k-i+n+2} \big( C_{i,-1,k+1}(y,q) - yq^{-\frac12}C_{i,0,k+1}(y,q) \big).
\end{align}
Thus, by \eqref{Cm and Cimk}, \eqref{C_i0k} and \eqref{C-imk+1-1} we have
\begin{align*}
    &C_{n+1}^{(1)}(y,0) = q^{\frac12}y^{-1} n! \sum_{k\in\bZ} \sum_{i=1}^n \big( C_{i,n,k}(y,q) +(-1)^n q^{i-k}C_{i,0,k-1}(y,q) \big) \\
    &=\frac{q^{\frac12}y^{-1} n!}{(q;q)_{\ii}^{n-1}} \sum_{k\ge0} \sum_{i=1}^n \bigg( \frac{(-1)^{nk+n+1}y^{nk}q^{nk^2/2+(k+1)(i-1)}}{(q;q)_k} +(-1)^nq^{i-k-1} \frac{(-y)^{nk}q^{nk^2/2}}{(q;q)_k} \bigg)\\
    &= \frac{(-1)^nn!}{(q;q)_{\ii}^{n-1}} \sum_{k\ge0} \frac{(-y)^{nk-1} q^{(nk^2+1)/2}}{(q;q)_k} \bigg( \frac{1-q^{n(k+1)}}{1-q^{k+1}} - q^{-k} \frac{1-q^n}{1-q} \bigg).
\end{align*}
This proves \eqref{Cm>0-2}.

Similarly, by \eqref{Cm and Cimk},  \eqref{C_i0k} and  \eqref{Cimk-m<0-1} we have
\begin{align*}
   & C_{-n-1}^{(1)}(y,0) = (-1)^{n-1} n! \sum_{k\in\bZ} \sum_{i=1}^nq^{k-i+n+2} \big( C_{i,-1,k+1}(y,q) - yq^{-\frac12}C_{i,0,k+1}(y,q) \big) \\
    &= (-1)^{n-1} n! \sum_{k\in\bZ} \big( q^{k+n+1} C_{1,-1,k+1}(y,q) -\sum_{i=1}^n yq^{k-i+n+\frac32} C_{i,0,k+1}(y,q) \big) \\
    &= \frac{(-1)^{n-1} n!}{(q;q)_{\ii}^{n-1}} \sum_{k \ge0} \bigg( q^{k+n} \frac{(-y)^{nk+1}q^{(nk^2+1)/2+k}}{(q;q)_k} + \frac{(-y)^{nk+1}q^{(nk^2+1)/2+k}}{(q;q)_k}\sum_{i=1}^nq^{i-1} \bigg)\\
    &= \frac{(-1)^{n-1} n!}{(q;q)_{\ii}^{n-1}} \sum_{k \ge0} \frac{(-y)^{nk+1}q^{(nk^2+1)/2+k}}{(q;q)_k} \bigg( q^{k+n} + \frac{1-q^n}{1-q} \bigg).
\end{align*}
This proves \eqref{Cm<0-2}.
\end{proof}

\subsection{The ${\rm SU}(N)_{-N-1}$ conjecture}
We first establish the following new constant term identity which generalizes Theorem \ref{thm-SUN-N-1}.
\begin{thm}\label{thm-CT-SU-N1}
Let
   \begin{align}\label{eq-thm-CT-SU-N1}
   C_m^{(2)}(y,z):=[(x_1x_2\cdots x_{n-1})^0] F_{m,n}^{(2)}(x_1,x_2,\dots,x_{n-1},(x_1x_2\dots x_{n-1})^{-1})
   \end{align}
where
\begin{align}\label{F2-defn}
    F_{m,n}^{(2)}(x):=\prod_{i\ne j}^{n} \frac{(x_ix_j^{-1};q)_\infty}{(x_ix_j^{-1}z;q)_\infty} \prod_{i=1}^{n} \frac{(q^{\frac12}x_iy,q^{\frac12}x_i^{-1}y^{-1};q)_{\infty}}{(q^{-\frac12}x_iyz,q^{\frac12}x_i^{-1}y^{-1}z;q)_{\infty}}\sum_{i=1}^n x_i^m.
\end{align}
   We have
   \begin{align}
   &C_0^{(2)}(y,z)=n!n  \frac{(z;q)_{\ii}^{n}}{(q;q)_{\ii}^{n-1} (qz^{n};q)_{\ii}(z;z)_{n}} \sum_{k\in \mathbb{Z}} (-1)^{nk} y^{nk} q^{nk^2/2} \prod_{i=1}^{n} \frac{(z^iq^{-k},z^iq^{k+1};q)_{\ii}}{(z^{i+1},z^{i-1}q;q)_{\ii}}, \label{eq-thm-CT-SU-N1-C0} \\
   &C_1^{(2)}(y,z)=n!y^{-1}q^{1/2}\frac{(z;q)_\infty^{n}(1-z^n)}{(q;q)_\infty^{n-1}(1-z)(qz^n;q)_\infty (z;z)_n} \nonumber \\
   &\qquad \qquad  \qquad  \times \sum_{k\in \mathbb{Z}} (-1)^{nk}y^{nk}q^{nk^2/2} \frac{1-zq^k}{z^n-q^k}\prod\limits_{i=1}^n \frac{(z^iq^{-k}, z^iq^{k+1};q)_\infty}{(z^{i+1},z^{i-1}q;q)_\infty}.
   \label{eq-thm-CT-SU-N1-C1}
    \end{align}
\end{thm}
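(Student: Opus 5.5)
The plan is to follow the route used for Theorem \ref{thm-CT-SU-N12}, replacing $h_k$ by the antisymmetric series
\[
H_k(z,q):=x^{-k}\prod_{i<j}^n(x_i-x_j)\prod_{i\ne j}^n\frac{(qx_ix_j^{-1};q)_\infty}{(zx_ix_j^{-1};q)_\infty}\prod_{i=1}^n\frac{(q^{\frac12}x_iy,q^{\frac12}x_i^{-1}y^{-1};q)_\infty}{(q^{-\frac12}x_iyz,q^{\frac12}x_i^{-1}y^{-1}z;q)_\infty}.
\]
As in \eqref{General-2-start} and \eqref{add-coeff}, I expect
\[
C_m^{(2)}(y,z)=n!\sum_{k\in\bZ}[x^\de]H_k\sum_{i=1}^n x_i^m .
\]
Now $k$ runs over all of $\bZ$, because the $y^{-1}$-product contributes negative powers. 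For $m=0$ this is $n!\,n\sum_k[x^\de]H_k$. For $m=1$ I would use the argument of \eqref{add-eq-C} to get $n!\sum_k[x^\de x_n^{-1}]H_k$.

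\textbf{Step 1: a shift lemma.} I will prove the analogue of Lemma \ref{qx for gk} by computing $H_k(x_\ell\to qx_\ell)/H_k$. Compared with the ratio for $h_k$, there is one more factor, coming from the $x_\ell^{-1}$-product. It is a ratio of linear terms of the form $(1-c\,y^{-1}x_\ell^{-1})/(1-c'\,y^{-1}x_\ell^{-1})$. Clearing denominators with \eqref{fact of qx}, I expect an identity of the shape
\[
q^{\be_\ell+k}\Big[x^\be\,L_1(x_\ell)\prod_{i\ne\ell}\big(1-zx_\ell/x_i\big)\Big]H_k=z^{n-1}\Big[x^\be\,L_2(x_\ell)\prod_{i\ne\ell}\big(1-z^{-1}x_\ell/x_i\big)\Big]H_k .
\]
Here $L_1,L_2$ are Laurent polynomials in $x_\ell$ of degree range $\{-1,0,1\}$. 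The new feature compared with before is the $x_\ell^{+1}$ term.

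\textbf{Step 2: recurrences.} I take $\ell=n$ and apply the shift lemma with $\be=\de$ and with $\be=\de-(e_r+\dots+e_{n-1})$. Each bracketed side is then evaluated by Corollary \ref{cor-k01}, Lemma \ref{F x ap(r)} and Corollary \ref{F x -ap(r)}. The extra $x_n^{+1}$ terms are handled by \eqref{add-F-cor} and \eqref{F x ap(r)+de and xn}, which produce coefficients of the form $[x^{\de+\ap(r)}]H_k$. This should give a chain of linear relations among the coefficients $[x^\de(x_r\cdots x_n)^{-1}]H_k$, $1\le r\le n+1$. Using $[x^\de(x_1\cdots x_n)^{-1}]H_k=[x^\de]H_{k-1}$, I then want to close the chain into a first-order recurrence
\[
[x^\de]H_k=R_k(y,z,q)\,[x^\de]H_{k-1},
\]
whose product telescopes to
\[
(-1)^{nk}y^{nk}q^{nk^2/2}\prod_{i=1}^n\frac{(z^iq^{-k},z^iq^{k+1};q)_\infty}{(z^iq,z^iq;q)_\infty}\cdot(\text{$k$-independent}).
\]
The danger here is that, with three terms, the relations are second order rather than first order. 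To avoid this I would first look for a combination that eliminates the $x_n^{+1}$ coefficient, for instance by pairing the relation at $\be$ with the one at $\be+e_n$. The $m=1$ case then comes from the single relation at $\be=\de$, the analogue of \eqref{add-eq-A}. That relation should express $[x^\de x_n^{-1}]H_k$ as
\[
y^{-1}q^{1/2}\,\frac{(1-z^n)(1-zq^k)}{(1-z)(z^n-q^k)}\,[x^\de]H_k,
\]
which matches the factor in \eqref{eq-thm-CT-SU-N1-C1}.

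\textbf{Step 3: normalisation (the main obstacle).} It remains to compute $[x^\de]H_0$, and this is no longer just $[x^\de]g_{0,n}$, since positive and negative powers of $x_i$ can now cancel. I would attack it in the manner of Theorem \ref{A1}. Specialise $z=q^{j}$ with $j\ge1$, so that every quotient of infinite products becomes a finite product and $H_0$ becomes a Laurent polynomial. I would then evaluate the resulting finite constant term by running the same recurrence in the $\be(r,s)$-direction, via Lemma \ref{F x be(r,s) and x1 k}, until only the coefficient of an extreme monomial survives. That coefficient should be computable directly, in the way \eqref{add-coeff-g} reduces to Stembridge's $[x^{\de(n-1)}]g_{0,n-1}$. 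Lemma \ref{lem-series} then extends the identity from $z=q^{j}$ to general $z$.

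As a fallback, the $k$-independent normalising factor can be fixed by comparing the coefficient of $y^0$ on both sides. The $y^0$-term on the right is an explicit constant term, since pairing the $y$- and $y^{-1}$-products reduces it to a computation with $g_{0,n}$-type series, and on the left only the $k=0$ summand contributes. I expect the ratio $(z;q)_\infty^n/\big((q;q)_\infty^{n-1}(qz^n;q)_\infty(z;z)_n\big)$ to emerge from this step.
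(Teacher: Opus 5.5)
Your Step 1 is fine: the shift lemma holds with $L_1=(1-q^{-1/2}yx_\ell^{-1})(1-q^{1/2}zy^{-1}x_\ell)$ and $L_2=(1-q^{-1/2}yzx_\ell^{-1})(1-q^{-1/2}y^{-1}x_\ell)$. Steps 2 and 3, however, carry all the content and do not go through as proposed.

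In Step 2, the $x_n^{+1}$ term at $\be=\de-(e_r+\dots+e_{n-1})$ is $x^\de x_n/(x_r\cdots x_{n-1})=x^{\de+\ap(r-1)+2e_n}x^{-1}$. This is not covered by \eqref{F x ap(r)+de and xn}: after $F_n\to xF_n$ that identity becomes \eqref{F x de-ap(r)}, which evaluates the $x^\be$ term itself. Computing directly, for example with $r=n-1$ and an indeterminate $u$,
\[
\Big[x^\de x_nx_{n-1}^{-1}\prod_{i=1}^{n-1}\Big(1-u\frac{x_n}{x_i}\Big)\Big]F_n=-\frac{1-u^{n-1}}{1-u}[x^\de]F_n+u^{n-1}[x^\de x_1x_n^{-1}]F_n .
\]
So this relation also reaches $[x^\de]H_k$ and introduces $[x^\de x_1x_n^{-1}]H_k$, which lies outside the family $[x^\de(x_r\cdots x_n)^{-1}]H_k$. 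You give no mechanism that closes the resulting system.

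The relation at $\be=\de$ has the same problem. Through \eqref{add-F-cor} it contains a third unknown, $[x^\de x_1]H_k=[x^\de(x_2\cdots x_n)^{-1}]H_{k+1}$. So it does not by itself give your two-term relation for $m=1$. Given that relation, your claim is equivalent to $[x^\de x_1]H_k=yq^{-1/2}\frac{(1-z^n)(z-q^{k+1})}{(1-z)(1-z^nq^{k+1})}[x^\de]H_k$, and nothing in your plan proves this.

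Step 3 is also open. The coefficient $[x^\de]H_0$ is a genuinely new constant term: at $z=0$ it equals $(q;q)_\infty^{-n}$, whereas $[x^\de]g_{0,n}(0,q)=(q;q)_\infty^{1-n}$. The $\be(r,s)$-reduction via Lemma \ref{F x be(r,s) and x1 k} was built for the two-term relation of Lemma \ref{qx for gm,n}, and here it meets the same extra $x_n^{\pm1}$ terms. Comparing $y^0$ coefficients is circular, because the $y^0$ coefficient of $C_0^{(2)}$ is exactly $n!\,n[x^\de]H_0$.

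The missing idea is to absorb the $y$-factors into the root system instead of working around them. Set $y=q^{1/2}x_{n+1}^{-1}$. Then $\prod_{i=1}^n\frac{(q^{1/2}x_iy,q^{1/2}x_i^{-1}y^{-1};q)_\infty}{(q^{-1/2}x_iyz,q^{1/2}x_i^{-1}y^{-1}z;q)_\infty}$ becomes exactly the missing factors $\prod_{i=1}^n\frac{(qx_i/x_{n+1},x_{n+1}/x_i;q)_\infty}{(zx_i/x_{n+1},zx_{n+1}/x_i;q)_\infty}$ of the $(n+1)$-variable product. Homogeneity then fixes the exponent of $x_{n+1}$, so each $[(x_1\cdots x_n)^k]F^{(2)}_{m,n}$ is an explicit multiple of a single coefficient of $g_{k,n+1}$; see \eqref{Cm-r-n+1}.

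After the cyclic relabelling in \eqref{add-eq-D}, this coefficient is $A_{1,k+1,n+1}$ when $m=0$. Theorem \ref{A1} evaluates it for all $k$ at once. That is where the specialisation $z=q^{j}$ and Lemma \ref{lem-series} enter, now in $n+1$ variables with no $y$-factors. For $m=1$, one further use of Lemma \ref{qx for gm,n} in $n+1$ variables, together with Lemma \ref{F x1 k}, produces the factor $\frac{(1-z^n)(1-zq^k)}{(1-z)(z^n-q^k)}$. No recurrence in $k$ and no separate normalisation are needed.
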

Note that
\begin{align}
\label{tran-2.4}
   C_m^{(2)}(y,z)= \sum_{k \in \bZ} [(x_1x_2\cdots x_n)^k] F_{m,n}^{(2)}(x_1,\dots,x_n).
\end{align}
Clearly, we have
\begin{align*}
    & [x^k] F_{m,n}^{(2)} (x) = [x^0] x^{-k} \prod_{i\ne j}^{n} \frac{(x_ix_j^{-1};q)_\infty}{(x_ix_j^{-1}z;q)_\infty} \prod_{i=1}^{n} \frac{(q^{\frac12}x_iy,q^{\frac12}x_i^{-1}y^{-1};q)_{\infty}}{(q^{-\frac12}x_iyz,q^{\frac12}x_i^{-1}y^{-1}z;q)_{\infty}} \sum_{i=1}^{n} x_i^m \\
    &= n! [x^\de] x^{-k} \prod_{i<j}^{n} (x_i-x_j)\prod_{i\ne j}^{n} \frac{(qx_ix_j^{-1};q)_\infty}{(x_ix_j^{-1}z;q)_\infty} \prod_{i=1}^{n} \frac{(q^{\frac12}x_iy,q^{\frac12}x_i^{-1}y^{-1};q)_{\infty}}{(q^{-\frac12}x_iyz,q^{\frac12}x_i^{-1}y^{-1}z;q)_{\infty}}\sum_{i=1}^n x_i^m \\
    &= n! [x^0] x^{-k} \prod_{i< j}^{n} \frac{(qx_ix_j^{-1},x_jx_i^{-1};q)_\infty}{(x_ix_j^{-1}z,x_jx_i^{-1}z;q)_\infty} \prod_{i=1}^{n} \frac{(q^{\frac12}x_iy,q^{\frac12}x_i^{-1}y^{-1};q)_{\infty}}{(q^{-\frac12}x_iyz,q^{\frac12}x_i^{-1}y^{-1}z;q)_{\infty}}\sum_{i=1}^n x_i^m .
\end{align*}
Set $y=x_{n+1}^{-1}q^{\frac12}$, then we can get
\begin{align*}
    & [x^k] F_{m,n}^{(2)}(x_1,\dots,x_{n}) \\
    &= n! [x^0] x^{-k} \prod_{i<j}^{n+1} \frac{(qx_ix_j^{-1};q)_\infty}{(x_ix_j^{-1}z;q)_\infty}  \frac{(x_jx_i^{-1};q)_{\infty}}{(x_jx_i^{-1}z;q)_{\infty}} \sum_{i=1}^n x_i^m \\
    &= n! \sum_{t\in\bZ} x_{n+1}^t [x^0 x_{n+1}^t] x^{-k} \prod_{i<j}^{n+1} \frac{(qx_ix_j^{-1};q)_\infty}{(x_ix_j^{-1}z;q)_\infty}  \frac{(x_jx_i^{-1};q)_{\infty}}{(x_jx_i^{-1}z;q)_{\infty}} \sum_{i=1}^n x_i^m  \\
   & = n! \sum_{r=1}^n \sum_{t\in\bZ} (y^{-1}q^{\frac12})^t [x^0 x_{n+1}^0] (x \cdot x_{n+1})^{-k} x_{n+1}^{k-t} x_r^m \prod_{i<j}^{n+1} \frac{(qx_ix_j^{-1};q)_\infty}{(x_ix_j^{-1}z;q)_\infty}  \frac{(x_jx_i^{-1};q)_{\infty}}{(x_jx_i^{-1}z;q)_{\infty}}.
\end{align*}
Since there are now $n+1$ variables, for convenience, from here till the end of this subsection, we reset the following notations:
\begin{itemize}
    \item $x := x_1 \cdots x_{n+1}$,
    \item $\de := (n,n-1,\dots,0) \in \bZ^{n+1}$,
    \item $x^{\be} := x_1^{\be_1} \cdots x_{n+1}^{\beta_{n+1}}$ if $\be=(\beta_1,\dots,\beta_{n+1}) \in \bZ^{n+1}$.
\end{itemize}
We have
\begin{align*}
    & [(x_1\cdots x_n)^k] F_{m,n}^{(2)}(x_1,\dots,x_{n}) \\
    &= n! \sum_{r=1}^n \sum_{t\in\bZ} (y^{-1}q^{\frac12})^t [x^0] x^{-k} x_{n+1}^{k-t} x_r^m \prod_{i<j}^{n+1} \frac{(qx_ix_j^{-1};q)_\infty}{(x_ix_j^{-1}z;q)_\infty}  \frac{(x_jx_i^{-1};q)_{\infty}}{(x_jx_i^{-1}z;q)_{\infty}}.
\end{align*}
Note  that $\prod_{i<j}^{n+1} (qx_ix_j^{-1};q)_\infty(x_jx_i^{-1};q)_{\infty}/((x_ix_j^{-1}z;q)_\infty (x_jx_i^{-1}z;q)_{\infty})$ only contains  homogeneous monomials  of degree $0$, and
\[
[x^0] x^{-k} x_{n+1}^{k-t} x_r^m \prod_{i<j}^{n+1} \frac{(qx_ix_j^{-1};q)_\infty}{(x_ix_j^{-1}z;q)_\infty}  \frac{(x_jx_i^{-1};q)_{\infty}}{(x_jx_i^{-1}z;q)_{\infty}}
\]
will vanish unless $-(n+1)k+k-t+m=0$, i.e., $t=m-nk$. Now we have
\begin{align}
    & [(x_1\cdots x_n)^k] F_{m,n}^{(2)}(x_1,\dots,x_{n}) \notag \\
    &= n! \sum_{r=1}^n (y^{-1}q^{\frac12})^{m-nk} [x^0] x^{-k} x_{n+1}^{(n+1)k-m} x_r^m \prod_{i<j}^{n+1} \frac{(qx_ix_j^{-1};q)_\infty}{(x_ix_j^{-1}z;q)_\infty}  \frac{(x_jx_i^{-1};q)_{\infty}}{(x_jx_i^{-1}z;q)_{\infty}} \notag \\
    &= n! (y^{-1}q^{\frac12})^{m-nk}  \sum_{r=1}^n [x^\de] x_{n+1}^{(n+1)k-m} x_r^m g_{k,n+1} (z,q) \nonumber \\
    &= n! (y^{-1}q^{\frac12})^{m-nk}  \sum_{r=1}^n [x^\de x_{n+1}^{m-(n+1)k} x_r^{-m} ] g_{k,n+1} (z,q) \label{Cm-r-n+1}  \\
    &= (-1)^nn! (y^{-1}q^{\frac12})^{m-nk}  \sum_{r=1}^n [x^{\de+1} x_1^{m-(n+1)(k+1)} x_{r+1}^{-m} ] g_{k,n+1} (z,q)  \nonumber \\
    &= (-1)^nn!  (y^{-1}q^{\frac12})^{m-nk}  \sum_{r=2}^{n+1} [x^\de] x_1^{(n+1)(k+1)-m} x_r^m g_{k+1,n+1}(z,q). \label{add-eq-D}
\end{align}
Here for the penultimate line, we used the fact that $g_{k,n+1}$ is antisymmetric and
\begin{align}
\omega_{n+1} (x^{\de+1} x_1^{m-(n+1)(k+1)} x_{r+1}^{-m}) = x^\de x_{n+1}^{m-(n+1)k} x_r^{-m}.
\end{align}

\begin{proof}[Proof of Theorem \ref{thm-CT-SU-N1}]
(1) When $m=0$ we have
\begin{align}\label{xk-0-coeff-result}
   &[(x_1x_2\cdots x_n)^k] F_{0,n}^{(2)}(x_1,\dots,x_n) \nonumber \\
   &= n!n (-1)^n  y^{nk} q^{-nk/2} [x^\de] x_1^{(n+1)(k+1)} g_{k+1,n+1}(z,q) \quad \text{(by \eqref{add-eq-D})}  \nonumber \\
   &=n!n (-1)^n y^{nk} q^{-nk/2} A_{1,k+1,n+1}(z,q).  \quad \text{(by \eqref{A-defn})}
\end{align}
Substituting Theorem \ref{A1} into it, and then substituting the result into \eqref{tran-2.4}, we obtain \eqref{eq-thm-CT-SU-N1-C0}.

(2) When $m=1$ we have
    \begin{align}\label{xk-coeff}
       & [(x_1\cdots x_n)^k] F_{1,n}^{(2)}(x_1,\dots,x_n) \nonumber \\
       &= n! (-1)^n  y^{nk-1} q^{\frac{1-nk}{2}} \sum_{r=2}^{n+1} [x^\de] x_1^{(n+1)(k+1)-1} x_r g_{k+1,n+1}(z,q)  \quad \text{(by \eqref{add-eq-D})}\nonumber \\
       & = n! (-1)^n  y^{nk-1} q^{\frac{1-nk}{2}} \sum_{r=2}^{n+1} [x^\de x_1^{1-(n+1)(k+1)} x_r^{-1}]  g_{k+1,n+1}(z,q)  \nonumber  \\
       & = n! (-1)^n  y^{nk-1} q^{\frac{1-nk}{2}}  [x^\de x_1^{1-(n+1)(k+1)}x_{n+1}^{-1}]  g_{k+1,n+1}(z,q).
    \end{align}
Here for the last equality we used the antisymmetry of $g_{k+1,n+1}(z,q)$ to deduce that
\begin{align}
[x^\de x_1^{1-(n+1)(k+1)}x_r^{-1}] g_{k+1,n+1}(z,q)=0, \quad 2\leq r \leq n.
\end{align}

We now consider Lemma \ref{qx for gm,n} with $n+1$ variables and $\be = \de +(1-(n+1)(k+1))e_1-e_{n+1}$. For the left side of \eqref{eq-lem-gmn}, from \eqref{F x de x1 k and xn -1} we have
    \begin{align}\label{N1-thm2-proof-final-1}
        & q^{k} \bigg[ x^\de x_1^{1-(n+1)(k+1)} x_{n+1}^{-1}  \prod_{i=1}^n \Big(1-z\frac{x_{n+1}}{x_i}\Big)  \bigg] g_{k+1,n+1}(z,q) \\
       & = q^{k} [ x^\de x_1^{1-(n+1)(k+1)} x_{n+1}^{-1}] g_{k+1,n+1}(z,q) - q^{k} \frac{z-z^{n+1}}{1-z}[ x^\de x_1^{-(n+1)(k+1)}] g_{k+1,n+1} (z,q). \nonumber
    \end{align}
For the right side of \eqref{eq-lem-gmn}, from \eqref{F x de x1 k and xn -1} we have
    \begin{align}\label{N1-thm2-proof-final-2}
        & z^n  \bigg[ x^\de x_1^{1-(n+1)(k+1)} x_{n+1}^{-1}  \prod_{i=1}^n \Big(1-z^{-1}\frac{x_{n+1}}{x_i}\Big) \bigg] g_{k+1,n+1}(z,q)   \\
       & = z^n [x^\de x_1^{1-(n+1)(k+1)} x_{n+1}^{-1}]g_{k+1,n+1}(z,q) - \frac{1-z^n}{1-z} [x^\de x_1^{-(n+1)(k+1)}] g_{k+1,n+1}(z,q). \nonumber
    \end{align}
Combining \eqref{N1-thm2-proof-final-1} and \eqref{N1-thm2-proof-final-2}, from Lemma \ref{qx for gm,n} we deduce that
\begin{align}\label{N1-thm2-proof-relation}
   & [x^\de x_1^{1-(n+1)(k+1)} x_{n+1}^{-1}] g_{k+1,n+1}(z,q) = \frac{(1-z^n)(1-zq^k)}{(1-z)(z^n-q^k)} [x^\de x_1^{-(n+1)(k+1)}] g_{k+1,n+1}(z,q) \nonumber \\
   &= \frac{(1-z^n)(1-zq^k)}{(1-z)(z^n-q^k)} A_{1,k+1,n+1}(z,q).
\end{align}
 Substituting \eqref{N1-thm2-proof-relation} into \eqref{xk-coeff}, we obtain
 \begin{align}\label{xk-coeff-result}
       & [(x_1\cdots x_n)^k] F_{1,n}^{(2)}(x_1,\dots,x_n) \nonumber \\
       &=  n! (-1)^n  y^{nk-1} q^{\frac{1-nk}{2}}\frac{(1-z^n)(1-zq^k)}{(1-z)(z^n-q^k)} A_{1,k+1,n+1}(z,q).
 \end{align}
Substituting Theorem \ref{A1} into it \eqref{xk-coeff-result}, and then substituting the result into  \eqref{tran-2.4},  we obtain \eqref{eq-thm-CT-SU-N1-C1}.
\end{proof}

\begin{proof}[Proof of Theorem \ref{thm-SUN-N-1}]
This follows from Theorem \ref{thm-CT-SU-N1} by setting $z=0$. For the later formula we also need to replace $k$ by $k+1$.
\end{proof}

We now turn to prove Theorem \ref{thm-SUN-N1-general}, which is equivalent to
\begin{thm}\label{thm-Cm2}
    Let $t=\lfloor m/(n+1) \rfloor$. We have
    \begin{align}
        C_m^{(2)}(y,0) = \begin{cases}
            \frac{n!}{(q;q)_{\ii}^n} y^{-m} q^{(m-n)t+m/2} \frac{1-q^{nt}}{1-q^t} \sum_{k\in\bZ} (-1)^{nk} y^{nk} q^{nk^2/2 -mk}, & m \equiv 0 \pmod{n+1}, \\
            -\frac{n!}{(q;q)_{\ii}^n} y^{-m} q^{mt+m/2} \sum_{k\in\bZ} (-1)^{nk} y^{nk} q^{nk^2/2 -mk}, & m \not\equiv 0 \pmod{n+1}.
        \end{cases}
    \end{align}
\end{thm}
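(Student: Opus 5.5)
We start from \eqref{tran-2.4} combined with \eqref{Cm-r-n+1} at $z=0$:
\[
C_m^{(2)}(y,0)=n!\sum_{k\in\bZ}(y^{-1}q^{\frac12})^{m-nk}\sum_{r=1}^n D_{r,m,k},
\qquad D_{r,m,k}:=[x^\de x_{n+1}^{m-(n+1)k}x_r^{-m}]\,g_{k,n+1}(0,q),
\]
where $g_{k,n+1}$ is the $(n+1)$-variable antisymmetric series. The plan is to reduce each $D_{r,m,k}$ to a single known coefficient of the form $A_{i,k',n+1}(0,q)$. Such a coefficient is explicit by Theorem \ref{A1} together with \eqref{add-Cor2.9-relation} at $z=0$. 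Taking $z=0$ in \eqref{add-A-relation} gives $A_{l+1}=q^mA_l$, so $A_{i,m,n}(0,q)=q^{(i-1)m}A_{1,m,n}(0,q)$. Theorem \ref{A1} at $z=0$ gives $A_{1,m,n}(0,q)=(-1)^{(n-1)m}q^{(n-1)m(m-1)/2}/(q;q)_\ii^{n-1}$.

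The main tool is the $z=0$ case of Lemma \ref{qx for gm,n} in $n+1$ variables, applied with a general index $\ell$; the proof of Lemma \ref{qx for gk} works for any $\ell$. At $z=0$ it reads
\[
q^{\be_\ell+k}[x^\be]g_{k,n+1}=(-1)^n[x^\be x_\ell^{n+1}]g_{k-1,n+1},
\]
since $x^{-1}$ shifts $k$ to $k-1$. In other words, adding $n+1$ to one exponent costs a power of $q$ and lowers $k$ by one. We first use antisymmetry to sort the exponent vector $\de-m e_r+(m-(n+1)k)e_{n+1}$. It vanishes unless all entries are distinct. Then we apply the shift relation repeatedly to the two perturbed coordinates, moving them by multiples of $n+1$ until the vector becomes a permutation of $\de$ plus $-(n+1)k'e_i$, or until it visibly has a repeated entry and so vanishes.

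Next we track the residues. Write $m=(n+1)t+s$ with $0\le s\le n$. Shifting reduces the entry $n-r+1-m$ modulo $n+1$, and together with the other perturbed entry this determines which $r$ survive. When $s=0$, I expect the survivors to be $r=1,\dots,n$, each giving a term with an extra factor $q^{jt}$; summing over $j$ produces $\frac{1-q^{nt}}{1-q^t}$. When $s\neq0$, I expect exactly one $r$ to survive, namely the one with $n-r+1\equiv s$, giving the single term with sign $-1$. After that we re-index the sum over $k$, shifting $k$ by $t$ or a similar amount, and complete the square. This should produce $\sum_k(-1)^{nk}y^{nk}q^{nk^2/2-mk}$ with the prefactor $y^{-m}q^{mt+m/2}$, or $q^{(m-n)t+m/2}$ in the case $s=0$. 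The overall $1/(q;q)_\ii^n$ comes from $A_{1,\cdot,n+1}(0,q)$.

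I expect the main obstacle to be the bookkeeping of the iterated shifts. The two moving coordinates can collide with each other or cross fixed entries of $\de$, and the signs from the permutations $\omega$ must combine correctly with the $(-1)^n$ factors. To control this, I will first handle $0\le m\le n$ directly, where at most one shift is needed. I will then prove a recurrence in $m\mapsto m\pm(n+1)$, analogous to \eqref{rela-C-12}, and use it to pass to general $m$ by induction on $t$. Finally, I will check that the result matches $m=0,1$ in Theorem \ref{thm-CT-SU-N1} at $z=0$.
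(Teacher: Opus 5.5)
Your proposal is correct. It uses the same ingredients as the paper: the $z=0$ shift relation, antisymmetry, and the closed form $A_{i,k,n+1}(0,q)=(-1)^{nk}q^{(i-1)k+nk(k-1)/2}(q;q)_\infty^{-n}$. The difference lies in how you organize the reduction.

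\textbf{The paper's route.} It fixes $m=(n+1)t+s$ and shifts only the $x_{n+1}$-coordinate. Iterating the $\ell=n+1$ relation turns $D_{r,m,k}$ into an explicit monomial times $D_{r,m,t}$ for every $k$; this is claim \eqref{gk,n+1-gt,n+1}. Antisymmetry then gives $D_{r,m,t}=A_{r,t,n+1}$ if $s=0$. If $s\neq0$, it gives $D_{r,m,t}=-A_{n+1,t,n+1}$ for $r=n+1-s$ and $0$ otherwise. The $x_r$-coordinate is never shifted, and the $k$-sum is never reindexed.

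\textbf{Your controlled version.} You compute the base range $0\le m\le n$ directly for all $k$:
\begin{itemize}
\item $D_{r,0,k}=A_{n+1,k,n+1}$;
\item $D_{n+1-m,m,k}=-A_{n+1-m,k,n+1}$;
\item every other $D_{r,m,k}$ vanishes, either immediately (for $r<n+1-m$ there is a repeated entry) or after one shift of the $x_r$-coordinate (for $r>n+1-m$ the new entry $2n+2-r-m$ collides with position $r+m-n-1$).
\end{itemize}
You then transport along $m$ using the $\ell=r$ relation $D_{r,m,k}=(-1)^nq^{k+1-r-m}D_{r,m+n+1,k+1}$.

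This recurrence is termwise in $r$. Its $r$-dependent factor accumulates to $q^{rt}$ after $t$ steps, which is where $\sum_{r=1}^nq^{(r-1)t}$ comes from. After substituting $k\mapsto k+t$ and completing the square, one recovers exactly the stated prefactors. In the $s\neq0$ case, the constant part of the exponent works out to $mt$ and the linear part to $-mk$.

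Your ``at most one shift'' claim for $0\le m\le n$ holds only in this form: you must shift $x_r$ and use the closed form of $A_{j,k,n+1}(0,q)$ for every $k$. Bringing the $x_{n+1}$-exponent $m-(n+1)k$ into range would take $|k|$ shifts. Your earlier description of shifting both perturbed coordinates is also more than needed, as the paper shows.

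\textbf{Comparison.} The paper's route is a single iteration in $k$ with no induction. Yours replaces it with an almost trivial base case plus an induction on $t$ and a reindexing of the $k$-sum. Both are valid.

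In either route, write the $s=0$ factor as $\sum_{j=0}^{n-1}q^{jt}$. This covers $m=0$, where $\frac{1-q^{nt}}{1-q^t}$ is $0/0$ and must be read as $n$.
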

\begin{proof}
    For any $\be \in \bZ^{n+1}$, by Lemma \ref{qx for gm,n} with $z=0$ we have
    \begin{align}
        [x^\be] g_{m,n+1}(0,q) = (-1)^n q^{-\be_{n+1}-m} [x^\be x_{n+1}^{n+1}] g_{m-1,n+1}(0,q). \label{qx-gm,n+1-1}
    \end{align}
Replacing $m$ by $m+1$ and $\beta$ by $\beta-(n+1)e_{n+1}$, this can be written as the equivalent form:
    \begin{align}
        [x^\be] g_{m,n+1}(0,q) = (-1)^n q^{\be_{n+1}-n+m} [x^{\be}x_{n+1}^{-n-1}] g_{m+1,n+1}(0,q). \label{qx-gm,n+1-2}
    \end{align}

Suppose $m=t(n+1)+r$ with $r \in [0,n]$, we claim that
\begin{align}
       & [x^\de x_i^{-m} x_{n+1}^{m-(n+1)k}] g_{k,n+1}(0,q) \nonumber \\
       &= (-1)^{(t-k)n} q^{(nk(k+1)+(n+2)t^2-nt)/2+rt-mk} [x^\de x_i^{-m} x_{n+1}^r] g_{t,n+1}(0,q). \label{gk,n+1-gt,n+1}
\end{align}
In fact, we prove this claim by discussing the cases $t<k$, $t=k$ and $t>k$ separately.
 When $t=k$, \eqref{gk,n+1-gt,n+1} is obvious.
 When $t<k$, by \eqref{qx-gm,n+1-1}, we have
    \begin{align*}
        & [x^\de x_i^{-m} x_{n+1}^{m-(n+1)k}] g_{k,n+1}(0,q) = [x^\de x_i^{-m} x_{n+1}^{(t-k)(n+1)+r}] g_{k,n+1}(0,q) \\
        &= (-1)^n q^{-(t-k)(n+1)-r-k} [x^\de x_i^{-m} x_{n+1}^{(t-k+1)(n+1)+r}] g_{k-1,n+1}(0,q) \\
        &= (-1)^n q^{-(t-k)(n+1)-r-k} \times (-1)^n q^{-(t-k+1)(n+1)-r-k+1} [x^\de x_i^{-m} x_{n+1}^{(t-k+2)(n+1)+r}] g_{k-2,n+1}(0,q) \\
        &= (-1)^{2n} q^{-(n+1)[(t-k)+(t-k+1)]-2r-2k+(0+1)} [x^\de x_i^{-m} x_{n+1}^{(t-k+2)(n+1)+r}] g_{k-2,n+1}(0,q) \\
        &= (-1)^{(k-t)n} q^{-(n+1)[(t-k)+\cdots+(-1)]-(k-t)(r+k)+[0+\cdots+(k-t-1)]} [x^\de x_i^{-m} x_{n+1}^r] g_{t,n+1}(0,q) \\
        &= (-1)^{(t-k)n} q^{(n+1)(k-t)(k-t+1)/2-(k-t)(r+k)+(k-t)(k-t-1)/2} [x^\de x_i^{-m} x_{n+1}^r] g_{t,n+1}(0,q) \\
       &= (-1)^{(t-k)n} q^{(nk(k+1)+(n+2)t^2-nt)/2+rt-mk} [x^\de x_i^{-m} x_{n+1}^r] g_{t,n+1}(0,q).
    \end{align*}
    Similarly, for $t>k$, by \eqref{qx-gm,n+1-2}, we have
    \begin{align*}
        & [x^\de x_i^{-m} x_{n+1}^{m-(n+1)k}] g_{k,n+1}(0,q) \\
        &= (-1)^n q^{(t-k)(n+1)+r-n+k} [x^\de x_i^{-m} x_{n+1}^{(t-k-1)(n+1)+r}] g_{k+1,n+1}(0,q) \\
        &= (-1)^{2n} q^{(n+1)[(t-k)+(t-k-1)]+2(r-n+k)+(0+1)} [x^\de x_i^{-m} x_{n+1}^{(t-k-2)(n+1)+r}] g_{k+2,n+1}(0,q) \\
        &= (-1)^{(t-k)n} q^{(n+1)[(t-k)+\cdots+1]+(t-k)(r-n+k)+[0+\cdots+(t-k-1)]} [x^\de x_i^{-m} x_{n+1}^r] g_{t,n+1}(0,q) \\
        &= (-1)^{(t-k)n} q^{[nk(k+1)+(n+2)t^2-nt]/2+rt-mk} [x^\de x_i^{-m} x_{n+1}^r] g_{t,n+1}(0,q).
    \end{align*}

Since $g_{t,n+1}$ is antisymmetric, the right side of \eqref{gk,n+1-gt,n+1} vanishes unless $r=0$ or $r=n+1-i$.

    Now we compute $[x^\de x_i^{-m} x_{n+1}^r] g_{t,n+1}(0,q)$ when $r=0$ or $n+1-i$. For $r=0$,
    \begin{align}
        [x^\de x_i^{-m} x_{n+1}^r] g_{t,n+1}(0,q) = [x^\de x_i^{-t(n+1)}] g_{t,n+1}(0,q) = A_{i,t,n+1}(0,q). \label{Cm-r=0}
    \end{align}
    For $r=n+1-i$, we have
    \begin{align}
        [x^\de x_i^{-(t+1)(n+1)+i}x_{n+1}^{n+1-i}] g_{t,n+1}(0,q) = -[x^\de x_{n+1}^{-t(n+1)}] g_{t,n+1}(0,q) = -A_{n+1,t,n+1}(0,q). \label{Cm-r=n+1-i}
    \end{align}
In fact, if we interchange $x_i$ and $x_{n+1}$, then $x^\delta x_i^{-(t+1)(n+1)+i}x_{n+1}^{n+1-i}$ becomes $x^{\delta}x_{n+1}^{-t(n+1)}$ and $g_{t,n+1}(0,q)$ becomes $-g_{t,n+1}(0,q)$.

From \eqref{add-Cor2.9-relation} and \eqref{eq-thmA1}, we have
    \begin{align}
        A_{l,t,n+1}(0,q) = q^{(l-1)t} A_{1,t,n+1}(0,q) = (-1)^{nt} q^{(l-1)t+nt(t-1)/2} (q;q)_{\ii}^{-n} \label{Cm-Al,n+1}
    \end{align}
    with $1 \le l \le n+1$. Thus, we have
    \begin{align*}
        C_m^{(2)}(y,0) &=  \sum_{k\in\bZ} [(x_1 \cdots x_n)^k] F_{m,n}^{(2)} (x_1,\dots,x_n) \quad \text{(by \eqref{tran-2.4})} \\
        &= n! (y^{-1}q^{\frac12})^m \sum_{k\in\bZ} (yq^{-\frac12})^{nk} \sum_{i=1}^n [x^\de x_{n+1}^{m-(n+1)k} x_i^{-m}] g_{k,n+1}(0,q) \quad \text{(by \eqref{Cm-r-n+1})} \\
        &= n! (y^{-1}q^{\frac12})^m \sum_{k\in\bZ} (yq^{-\frac12})^{nk} (-1)^{(t-k)n} q^{[nk(k+1)+(n+2)t^2-nt]/2+rt-mk} \\
        & \qquad \times \sum_{i=1}^n [x^\de x_i^{-m} x_{n+1}^r] g_{t,n+1}(0,q)  \quad\text{(by \eqref{gk,n+1-gt,n+1})}.
    \end{align*}

If $m \equiv 0 \pmod{n+1}$, then $r=0$. By \eqref{Cm-r=0} and \eqref{Cm-Al,n+1}, we have
    \begin{align*}
        C_m^{(2)}(y,0) &=  n! (y^{-1}q^{\frac12})^m \sum_{k\in\bZ} (yq^{-\frac12})^{nk} (-1)^{(t-k)n} q^{[nk(k+1)+(n+2)t^2-nt]/2+rt-mk} \\
        & \qquad \times \sum_{i=1}^n (-1)^{nt} q^{(i-1)t+nt(t-1)/2} (q;q)_{\ii}^{-n} \\
        &= \frac{n!}{(q;q)_{\ii}^n} y^{-m} q^{(m-n)t+m/2} \frac{1-q^{nt}}{1-q^t} \sum_{k\in\bZ} (-1)^{nk} y^{nk} q^{nk^2/2-mk}.
    \end{align*}
If $m \not\equiv 0 \pmod{n+1}$, then $1\leq r\leq n$. By \eqref{Cm-r=n+1-i}, \eqref{Cm-Al,n+1} and the aforementioned fact that $[x^\de x_i^{-m}x_{n+1}^r] g_{t,n+1}$ is zero unless $i=n+1-r$, we deduce that
    \begin{align*}
        C_m^{(2)}(y,0) &=  -n! (y^{-1}q^{\frac12})^m \sum_{k\in\bZ} (yq^{-\frac12})^{nk} (-1)^{(t-k)n} q^{[nk(k+1)+(n+2)t^2-nt]/2+rt-mk} \\
        & \qquad \times (-1)^{nt} q^{nt+nt(t-1)/2} (q;q)_{\ii}^{-n} \\
        &= -\frac{n!}{(q;q)_{\ii}^n} y^{-m} q^{m/2+mt} \sum_{k\in\bZ} (-1)^{nk} y^{nk} q^{nk^2/2-mk}. \qedhere
    \end{align*}
\end{proof}

\begin{proof}[Proof of Theorem \ref{thm-SUN-N1-general}]
This follows from Theorem \ref{thm-Cm2} upon replacing $n$ and $N$.
\end{proof}

\begin{proof}[Proof of Corollary \ref{cor-SU34-conj}]
This follows from Theorem \ref{thm-SUN-N1-general} with $N=3$. Note that we need to replace $k$ by $k+m$ in \eqref{eq-thm-SUN-N1-general} and use \eqref{jtpi1} as well.
\end{proof}

Alert readers might have noticed that the first assertion of Conjecture \ref{conj-SU34} depends on $m$ modulo 3. This is not easily seen (though not difficult to verify) from  \eqref{SU33-Wk}. This motivates us to give the following recurrence relation for $\langle W_m\rangle^{{\rm SU}(3)_{-4}}$ from which Conjecture \ref{conj-SU34} can be observed.
\begin{cor}\label{cor-SU34}
We have
\begin{align}
\langle W_{m+3}\rangle^{{\rm SU}(3)_{-4}}(x;q) &= \begin{cases}
\frac{q^{\frac{m}{4}}}{(1+q^{\frac{m}{4}}+q^{-\frac{m}{4}})}\langle W_{m}\rangle^{{\rm SU}(3)_{-4}}(x;q), & m=4\ap, \\
(1+q^{\frac{m+3}{4}}+q^{\frac{2m+6}{4}}) \langle W_m\rangle^{{\rm SU}(3)_{-4}}(x;q), & m=4\ap+1, \\
-q^{\frac{3m+6}{4}}\langle W_m\rangle^{{\rm SU}(3)_{-4}}(x;q), & m=4\ap+2, \\
-q^{\frac{3m+3}{4}}\langle W_m\rangle^{{\rm SU}(3)_{-4}}(x;q), & m=4\ap+3.
\end{cases}
\end{align}
\end{cor}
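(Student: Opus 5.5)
We derive the recurrence directly from the closed form in Corollary \ref{cor-SU34-conj} (equivalently Theorem \ref{thm-SUN-N1-general} with $N=3$). Write
\[
\langle W_m\rangle^{{\rm SU}(3)_{-4}}(x;q)=c_m(q)\,x^{2m}q^{\frac34 m^2}\,\Theta_m(x;q),\qquad
\Theta_m:=\frac{(q^{\frac32-2m}x^{-3},q^{\frac32+2m}x^3,q^3;q^3)_\infty}{(q;q)_\infty},
\]
where the factor $c_m(q)$ depends only on $m \bmod 4$. It equals $q^{-m/4}+1+q^{m/4}$ if $m\equiv 0$, equals $q^{\pm m/4}$ if $m\equiv\pm1$, and equals $-1$ if $m\equiv 2 \pmod 4$.

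\textbf{Step 1: theta quotient.} Compare $\Theta_{m+3}$ with $\Theta_m$. Shifting $m\mapsto m+3$ multiplies the argument $q^{3/2+2m}x^3$ by $q^6=(q^3)^2$. So we apply the quasi-periodicity of the Jacobi theta function twice: $(a q^3,q^3/(aq^3),q^3;q^3)_\infty=-a^{-1}(a,q^3/a,q^3;q^3)_\infty$. This gives
\[
\Theta_{m+3}=a^{-2}q^{-3}\Theta_m \quad\text{with } a=q^{\frac32+2m}x^3,
\]
that is, $\Theta_{m+3}=x^{-6}q^{-6-4m}\Theta_m$. This step is cleaner if done through the bilateral series form in Theorem \ref{thm-SUN-N1-general}: there the sum $\sum_k(-1)^{k}x^{3k}q^{3k^2/2-mk}$ with $m\to m+3$ reindexes under $k\to k+1$. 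Either route should give the same factor, and we will check the two against each other.

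\textbf{Step 2: ratio.} Since $x^{2(m+3)}=x^{2m}x^{6}$, the powers of $x$ cancel. The total ratio is
\[
\frac{\langle W_{m+3}\rangle}{\langle W_m\rangle}=\frac{c_{m+3}}{c_m}\,q^{\frac34((m+3)^2-m^2)-6-4m}=\frac{c_{m+3}}{c_m}\,q^{\frac{m}{2}+\frac34}.
\]

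\textbf{Step 3: case check.} We run through the four classes of $m \bmod 4$, using $m+3\equiv m-1 \pmod 4$.
For $m=4\alpha$: $m+3\equiv -1$, so $c_{m+3}=q^{-(m+3)/4}$ and the ratio is $q^{m/4}/(1+q^{m/4}+q^{-m/4})$, matching the first case.
For $m=4\alpha+1$: $c_m=q^{m/4}$ and $c_{m+3}=q^{-(m+3)/4}+1+q^{(m+3)/4}$, and the ratio is $(1+q^{(m+3)/4}+q^{(2m+6)/4})$, matching the second case.
For $m=4\alpha+2$ and $m=4\alpha+3$, the analogous direct arithmetic gives $-q^{(3m+6)/4}$ and $-q^{(3m+3)/4}$ respectively.

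The main obstacle is getting the exponent in Step 1 exactly right: the sign and the power of $q$ coming from the double quasi-periodicity shift. Once that is fixed, the rest is exponent bookkeeping. The case $m\equiv 0$ with $m=0$ also needs care: there $c_0=3$, and the formula reads $1/(1+1+1)$, which is consistent.
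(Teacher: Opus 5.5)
Your proposal is correct and is essentially the paper's argument. Both start from the closed form in Corollary~\ref{cor-SU34-conj} and compare $m$ with $m+3$, and your two applications of $(aq^3,q^{3}/(aq^3),q^3;q^3)_\infty=-a^{-1}(a,q^3/a,q^3;q^3)_\infty$ are the product-side version of the paper's reindexing $k\to k+2$ in the triple-product series. Your factor $\Theta_{m+3}=x^{-6}q^{-6-4m}\Theta_m$, the resulting ratio $q^{m/2+3/4}c_{m+3}/c_m$, and all four residue classes are right, including the two cases the paper leaves as ``similar''. The only quibble is that the \emph{form} of $c_m$, not its value, depends only on $m \bmod 4$.
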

\begin{proof}
If $m=4\ap$, then by \eqref{SU34-W} and \eqref{jtpi1} we have
\begin{align*}
\langle &W_{m+3}\rangle^{{\rm SU}(3)_{-4}}(x;q) = \frac{q^{\frac34(m+3)^2-\frac{1}{4}(m+3)}}{(q;q)_\infty}\sum_{k\in\bZ}(-1)^kx^{3k+2(m+3)}q^{\frac32k^2+2(m+3)k}\\
&= \frac{q^{\frac34m^2+\frac{1}{4}m}}{(q;q)_{\ii}}\sum_{k\in\bZ}(-1)^kx^{3(k+2)+2m}q^{\frac32(k+2)^2+2m(k+2)}\\
&= \frac{q^{\frac{m}{4}}}{1+q^{\frac{m}{4}}+q^{-\frac{m}{4}}}\langle W_m\rangle^{{\rm SU}(3)_{-4}}(x;q).
\end{align*}
Using similar arguments, we can obtain the remaining cases.
\end{proof}

\section{Some byproducts and remarks}\label{sec-rem}
From the results stated in Section \ref{sec-intro} and their proofs we can get some constant term identities as byproducts. To be consistent with the theorems there, we still state them in the form of integrals. As indicated in \cite[Section 2.4]{OS24}, the integrals in this section are connected with half indices arising from the ${\rm U}(N)$ CS theories.
\begin{cor}
\label{U(N) -N}
    We have
    \begin{align}
        \frac{(q;q)_{\ii}^N}{N!}\oint \bigg( \prod_{i=1}^N \frac{dt_i}{2\pi it_i} \bigg) \bigg( \prod_{i \ne j}^N (t_it_j^{-1};q)_{\ii} \bigg)\sum_{i=1}^N t_i^m  = \begin{cases} N(q;q)_{\ii}, & m=0, \\
         0, & m\neq 0.
         \end{cases}
    \end{align}
\end{cor}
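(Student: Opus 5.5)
The integral is a constant term in all $N$ variables $t_1,\dots,t_N$, with no constraint on $\prod t_i$. I would therefore write it as
\[
\frac{(q;q)_\infty^N}{N!}[t^0]\prod_{i\ne j}^N(t_it_j^{-1};q)_\infty\sum_{i=1}^N t_i^m .
\]
The plan is to reduce to the $g_{k,n}$ machinery of Section \ref{sec-N} with $z=0$. The product $\prod_{i\ne j}(t_it_j^{-1};q)_\infty$ contains only monomials of total degree $0$, while $t_i^m$ has degree $m$. Hence the constant term vanishes unless $m=0$, which disposes of the case $m\neq0$ immediately.

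For $m=0$ the integrand is $N\prod_{i\ne j}(t_it_j^{-1};q)_\infty$, so it suffices to compute its constant term. Following the manipulation before \eqref{add-F-start} with $m=0$, I would factor $(1-t_it_j^{-1})(1-t_jt_i^{-1})=\prod_{i<j}(t_i-t_j)(t_i^{-1}-t_j^{-1})$. I then apply \eqref{pnz} with $z=0$ to the antisymmetric series $g_{0,N}(0,q)$, giving
\[
[t^0]\prod_{i\ne j}(t_it_j^{-1};q)_\infty = N!\,[t^\de]g_{0,N}(0,q)=N!\,A_{1,0,N}(0,q).
\]
By \eqref{g0,n} at $z=0$ (or Theorem \ref{A1} with $m=0$, $z=0$), $A_{1,0,N}(0,q)=(q;q)_\infty^{-(N-1)}$. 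Multiplying through gives
\[
\frac{(q;q)_\infty^N}{N!}\cdot N\cdot N!\,(q;q)_\infty^{-(N-1)}=N(q;q)_\infty .
\]

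The only delicate point is the specialization at $z=0$. One must check that the factor $(z;z)_{n-1}$ and the products $(z^i;q)_\infty$ in \eqref{g0,n} all become $1$. One must also confirm that no term like $(z^{n};q)_\infty$ degenerates, which it does not, since it equals $1$ at $z=0$. A useful cross-check is that the answer is consistent with Theorem \ref{thm-SUN-N} at $\ell=0$: summing the ${\rm SU}(N)$ constant term over all total degrees $k$ leaves only $k=0$, so the ${\rm U}(N)$ value is $(q;q)_\infty$ times $\langle W_0\rangle^{{\rm SU}(N)_{-N}}=N$, which matches.
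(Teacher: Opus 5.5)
Your proof is correct, but it is organized differently from the paper's. The paper substitutes $t_i=t^{1/N}s_i$ with $s_1\cdots s_N=1$ and factors the torus integral into $\oint t^{m/N}\,\frac{dt}{2\pi i t}$ times the ${\rm SU}(N)_{-N}$ Wilson-line integral. It then simply quotes Theorem~\ref{thm-SUN-N} at $\ell=0$; this is essentially what you list as a cross-check in your last paragraph. You instead stay with formal constant terms on the full torus. Homogeneity disposes of every $m\neq 0$ at once, and for $m=0$ you recompute $[t^0]\prod_{i\ne j}(t_it_j^{-1};q)_\infty=N!\,[t^{\delta}]g_{0,N}(0,q)=N!/(q;q)_\infty^{N-1}$ directly from Stembridge's formula \eqref{g0,n}. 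Theorem~\ref{thm-SUN-N} at $\ell=0$ itself comes from Theorem~\ref{thm-CT-1} with $m=0$, where by \eqref{constant-f-F} the ${\rm SU}(N)$ constant term is literally the same full-torus coefficient and $A_1$ is given by \eqref{g0,n}. So your argument just inlines the one special case of the paper's machinery that is actually needed. What this buys is a self-contained, purely formal proof. It avoids the fractional-power change of variables, which has to be read as an $N$-to-one covering of the torus to make sense of $t^{m/N}$. It also avoids having to read $\frac{1-q^{\ell N}}{1-q^{\ell}}$ at $\ell=0$ as $N$. The paper's route is a one-line reduction once Theorem~\ref{thm-SUN-N} is available. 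It also makes visible the decoupled ${\rm U}(1)$ factor responsible for both the selection rule $m=0$ and the extra factor $(q;q)_\infty$.

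One citation slip: the factor $N!$ comes from \eqref{pnz} with \emph{its} parameter set to $1$ (so $P_N(1)=N!$), or directly from \eqref{id-x-prod} together with antisymmetry. Setting that parameter to $0$ turns \eqref{pnz} into the trivial identity $[x^{\delta}]F_n=[x^{\delta}]F_n$. The specialization $z=0$ belongs to $g_{0,N}(z,q)$, not to \eqref{pnz}.
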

\begin{proof}
We denote the integral in the left side by $L_m(q)$. Let $t=t_1t_2\cdots t_N$. We write $t_i = t^{1/N}s_i$ with $s_1s_2\cdots s_N=1$. We have
\begin{equation}
\label{ti and si}
    \prod_{i=1}^N \oint \frac{dt_i}{2\pi it_i} = \oint \frac{dt}{2\pi it} \prod_{i=1}^{N-1} \oint \frac{ds_i}{2\pi is_i}.
\end{equation}
    By \eqref{ti and si} we have
    \begin{align}
       L_m(q)= \oint \frac{dt}{2\pi it} t^{m/N}  \oint \bigg( \prod_{i=1}^{N-1} \frac{ds_i}{2\pi is_i} \bigg) \bigg( \prod_{i \ne j}^N (s_is_j^{-1};q)_{\ii} \bigg)\sum_{i=1}^Ns_i^m.
    \end{align}
It is then clear that $L_m(q)=0$ for $m\neq 0$, and using Theorem \ref{thm-SUN-N} we get the desired expression for $L_0(q)$.
\end{proof}

\begin{cor}
\label{U(N) -N-1/2}
    We have
    \begin{align}
        \frac{(q;q)_{\ii}^N}{N!} \oint \bigg( \prod_{i=1}^N \frac{dt_i}{2\pi it_i} \bigg) \prod_{i \ne j}^N (t_it_j^{-1};q)_{\ii} \prod_{i=1}^N (q^{\frac12}t_iy;q)_{\ii} \sum_{i=1}^N t_i^m = (q;q)_{\ii} \sum_{i=1}^N Y_{i,m,0}(y,q)
    \end{align}
where $Y_{i,m,k}(y,q)$ is defined in Theorem \ref{thm-SUN-N-12-general}.
\end{cor}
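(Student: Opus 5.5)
The integral on the left is, up to the prefactor, a constant term in $N$ free variables $t_1,\dots,t_N$ with no restriction $\prod t_i=1$. Unlike Corollary \ref{U(N) -N}, I would not change variables to $t_i=t^{1/N}s_i$. Instead I would use the already-established fact \eqref{add-coeff}: the $[t^0]$ coefficient is exactly the $k=0$ term of the decomposition behind \eqref{General-2-start}. With $z=0$, $n=N$ and $y$ in place of $x$, \eqref{add-coeff} gives
\begin{align*}
[(t_1\cdots t_N)^0]\prod_{i\ne j}^N (t_it_j^{-1};q)_\infty\prod_{i=1}^N(q^{\frac12}t_iy;q)_\infty\sum_{i=1}^N t_i^m = N!\,[t^\de]h_0(0,q)\sum_{i=1}^N t_i^m .
\end{align*}
The $U(N)$ integral sees only the single term $k=0$, whereas the $SU(N)$ integral sums over all $k$.

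Next I would unfold $[t^\de]h_0(0,q)\sum_i t_i^m=\sum_{i=1}^N [t^\de t_i^{-m}]h_0(0,q)$, which is $\sum_{i=1}^N C_{i,m,0}(y,q)$ by the definition \eqref{def-Cimk}. Multiplying by the prefactor $(q;q)_\infty^N/N!$ gives
\begin{align*}
(q;q)_\infty\sum_{i=1}^N (q;q)_\infty^{N-1}C_{i,m,0}(y,q)=(q;q)_\infty\sum_{i=1}^N Y_{i,m,0}(y,q),
\end{align*}
using \eqref{Y-C}. Theorem \ref{thm-Cimk} supplies $C_{i,m,k}$ for every $k$ via the recurrences \eqref{rela-C-12}, so $Y_{i,m,0}$ is meaningful for every integer $m$ and the identity holds as stated.

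The main point to check carefully is that \eqref{add-coeff} is valid for each individual $k$, including $k=0$, for arbitrary integer $m$. Its derivation used only the identity $\prod_{i\neq j}(1-t_it_j^{-1})=\prod_{i<j}(t_i-t_j)(t_i^{-1}-t_j^{-1})$ and \eqref{pnz} with $z=0$, which gives the factor $N!$ via \eqref{id-x-prod}. Neither step depends on $k$ or on the sign of $m$, so this check should be routine.

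A secondary check is the normalization: the $U(N)$ prefactor carries $(q;q)_\infty^N$ rather than $(q;q)_\infty^{N-1}$. This extra factor of $(q;q)_\infty$ is exactly what is left over after absorbing $(q;q)_\infty^{N-1}$ into $Y$ via \eqref{Y-C}.
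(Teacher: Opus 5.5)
Your proposal is correct and is essentially the paper's proof: apply \eqref{add-coeff} at the single index $k=0$ (with $z=0$, $n=N$), unfold $[t^\de]h_0(0,q)\sum_i t_i^m$ into $\sum_i C_{i,m,0}(y,q)$ via \eqref{def-Cimk}, and absorb $(q;q)_\infty^{N-1}$ into $Y_{i,m,0}$ via \eqref{Y-C}. One small slip in your side check: the factor $N!$ comes from \eqref{pnz} at $z=1$ (where $P_N(1)=N!$), or equivalently from \eqref{id-x-prod} together with antisymmetry, not from \eqref{pnz} at $z=0$, which only gives $P_N(0)=1$.
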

\begin{proof}
We denote the integral on the left side as $L_m(q)$.  Recall the function $F_{m,N}^{(1)}(x)$ defined in \eqref{def-Fmn-1}. We have
    \begin{align*}
       L_m(q) & = [(t_1 \cdots t_N)^0] F_{m,N}^{(1)} (t_1,\dots, t_N) \big|_{z=0}  \\
        &= N! [t^{\de(N)}] h_0(0,q) \sum_{i=1}^{N} t_i^m = N! \sum_{i=1}^N C_{i,m,0}(y,q).
\end{align*}
Here $h_k(z,q)$ is defined in \eqref{hk-defn} with $n=N$ and $x_i$ replaced by $t_i$, and for the second and last equalities we used \eqref{add-coeff} and \eqref{def-Cimk}, respectively. Then by \eqref{Y-C}, we obtain the desired formula.
\end{proof}

As a consequence, we have
\begin{align}\label{cor4.2-special}
        \frac{(q;q)_{\ii}^N}{N!} \oint \bigg( \prod_{i=1}^N \frac{dt_i}{2\pi it_i} \bigg) \prod_{i \ne j}^N (t_it_j^{-1};q)_{\ii} \prod_{i=1}^N (q^{\frac12}t_iy;q)_{\ii} \sum_{i=1}^N t_i^m =\begin{cases}
            N(q;q)_\infty, &m=0, \\
            0, & m\geq 1.
        \end{cases}
\end{align}
The case $m\geq 1$ is almost trivial and the case $m=0$ can also be proved directly from Corollary \ref{U(N) -N}.  Note that any term in the expansion of the first product $\prod_{i \ne j}^N (t_it_j^{-1};q)_{\ii}$ has degree 0, and any term in the expansion of the second product $\prod_{i=1}^N (q^{\frac12}t_iy;q)_{\ii}\sum_{i=1}^N t_i^m$ has degree $\geq m$. Since $m\geq 0$, the constant term can only occur when we choose the constant terms of both products. It follows that $L_m(q)=0$ when $m>0$ and
 \begin{align}\label{Lq-reduce}
        L_0(q) =N[(t_1t_2\cdots t_N)^0] \prod_{i \ne j}^N (t_it_j^{-1};q)_{\ii}=N\oint \bigg( \prod_{i=1}^N \frac{dt_i}{2\pi it_i} \bigg)\prod_{i \ne j}^N (t_it_j^{-1};q)_{\ii} .
 \end{align}
 Now using Corollary \ref{U(N) -N} we obtain \eqref{cor4.2-special}.

\begin{cor}
\label{U(N) -N-1}
Let $t = \lfloor m/(N+1) \rfloor$.    We have
    \begin{align}
        &\frac{(q;q)_{\ii}^N}{N!} \oint \bigg( \prod_{i=1}^N \frac{dt_i}{2\pi it_i} \bigg) \prod_{i \ne j}^N (t_it_j^{-1};q)_{\ii} \prod_{i=1}^N (q^{\frac12}t_i^{\pm}y^{\pm};q)_{\ii}\sum_{i=1}^N t_i^m \notag \\
        & = \begin{cases}  y^{-m}q^{(m-N)t+m/2} \frac{1-q^{Nt}}{1-q^t}, & m \equiv 0 \pmod{N+1}, \\
         -y^{-m}q^{mt+m/2}, & m \not\equiv 0 \pmod{N+1}.
         \end{cases}
    \end{align}
\end{cor}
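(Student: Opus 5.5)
The plan is to repeat the argument of Corollary \ref{U(N) -N}. Write $t_i=t^{1/N}s_i$ with $s_1\cdots s_N=1$, where $t=t_1\cdots t_N$, and denote the left side by $L_m(q)$. The difference from Corollary \ref{U(N) -N} is that the factor $\prod_i(q^{\frac12}t_i^{\pm}y^{\pm};q)_\infty$ now carries the overall degree $t$. The cleaner route is therefore to stay with constant terms rather than change variables.

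Since $\prod_{i\ne j}(t_it_j^{-1};q)_\infty$ is homogeneous of degree $0$, the constant term in $t_1,\dots,t_N$ equals the coefficient $[(t_1\cdots t_N)^0]F^{(2)}_{m,N}(t)\big|_{z=0}$. This is the $k=0$ term of the sum \eqref{tran-2.4}, and the $z=0$ specialization of the quantity computed in the proof of Theorem \ref{thm-Cm2}. One caveat: the denominator in \eqref{F2-defn} has $q^{-\frac12}$ in place of $q^{\frac12}$, but it is irrelevant at $z=0$. So
\[
L_m(q)=\frac{(q;q)_\infty^N}{N!}\,[(t_1\cdots t_N)^0]F^{(2)}_{m,N}(t)\Big|_{z=0}.
\]

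Next I apply \eqref{Cm-r-n+1} with $n=N$ and $k=0$. This gives
\[
[(t_1\cdots t_N)^0]F^{(2)}_{m,N}=N!\,(y^{-1}q^{\frac12})^{m}\sum_{i=1}^N[x^\de x_i^{-m}x_{N+1}^{m}]\,g_{0,N+1}(0,q).
\]
Then I use the claim \eqref{gk,n+1-gt,n+1} with $k=0$ and $m=t(N+1)+r$, $0\le r\le N$. It reduces each coefficient to $(-1)^{tN}q^{((N+2)t^2-Nt)/2+rt}\,[x^\de x_i^{-m}x_{N+1}^{r}]g_{t,N+1}(0,q)$.

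By antisymmetry, the only surviving terms are all $i$ when $r=0$, evaluated via \eqref{Cm-r=0}, and the single index $i=N+1-r$ when $r\neq0$, evaluated via \eqref{Cm-r=n+1-i}. In both cases \eqref{Cm-Al,n+1} gives the values of $A_{l,t,N+1}(0,q)=(-1)^{Nt}q^{(l-1)t+Nt(t-1)/2}(q;q)_\infty^{-N}$.

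Equivalently, these are exactly the $k=0$ summands in the last two displays of the proof of Theorem \ref{thm-Cm2}. Those summands are, respectively,
\[
\frac{N!}{(q;q)_\infty^N}y^{-m}q^{(m-N)t+m/2}\frac{1-q^{Nt}}{1-q^t}
\quad\text{and}\quad
-\frac{N!}{(q;q)_\infty^N}y^{-m}q^{mt+m/2},
\]
since the $k$-dependence there is only the factor $(-1)^{Nk}y^{Nk}q^{Nk^2/2-mk}$, which equals $1$ at $k=0$. Multiplying by $(q;q)_\infty^N/N!$ yields the statement.

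The main obstacle is bookkeeping: the exponent manipulation that the $k$-th term factorizes as $(-1)^{Nk}y^{Nk}q^{Nk^2/2-mk}$ times a $k$-independent quantity. That factorization is already carried out in the proof of Theorem \ref{thm-Cm2}, so I would simply verify that setting $k=0$ there is legitimate. This holds because \eqref{gk,n+1-gt,n+1} is valid for all integers $k$, including the cases $t>0$, $t<0$ and $t=0$. For $t=0$ the factor $\frac{1-q^{Nt}}{1-q^t}$ must be read as its limit $N$, consistent with Corollary \ref{U(N) -N}.
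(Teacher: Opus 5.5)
Your proposal is correct and follows the paper's proof essentially step for step. You identify $L_m(q)$ with the $k=0$ term of \eqref{Cm-r-n+1} at $z=0$, shift to $g_{t,N+1}(0,q)$ via \eqref{gk,n+1-gt,n+1}, and evaluate the surviving coefficients by antisymmetry together with \eqref{Cm-r=0}, \eqref{Cm-r=n+1-i} and \eqref{Cm-Al,n+1}. Your explicit handling of the prefactor $(q;q)_\infty^N/N!$ and of the $t=0$ (that is, $m=0$) limit is, if anything, more careful than the paper's write-up.
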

\begin{proof}
We denote the integral on the left side as $L_m(q)$. Recall the function $F_{m,N}^{(2)}(x)$ defined in \eqref{F2-defn}. By \eqref{Cm-r-n+1} we have
    \begin{align}\label{UN1-proof}
        L_m(q)= [t^0]F_{m,N}^{(2)}(t_1,t_2,\dots,t_N) \big|_{z=0} = N! y^{-m}q^{m/2} \sum_{i=1}^N [t^{\de(N)}t_{N+1}^m t_i^{-m}] g_{0,N+1}(0,q).
    \end{align}
As in the proof of Theorem \ref{thm-Cm2}, we write $m = t(N+1) +r$ with $0 \le r \le N$. Then by \eqref{gk,n+1-gt,n+1}, we have
    \begin{align}
        [t^{\de(N)}t_{N+1}^m t_i^{-m}] g_{0,N+1}(0,q) = (-1)^{tN} q^{[(N+2)t^2-Nt]/2+rt} [t^{\de(N)}t_i^{-m}t_{N+1}^r] g_{t,N+1}(0,q).
    \end{align}
    When $m \equiv 0 \pmod{N+1}$, i.e., $r=0$, by \eqref{Cm-r=0} and \eqref{Cm-Al,n+1}, we have
    \begin{align*}
        L_m(q) &= N! y^{-m}q^{m/2} \sum_{i=1}^N (-1)^{tN} q^{[(N+2)t^2-Nt]/2} (-1)^{Nt} q^{(i-1)t+Nt(t-1)/2} (q;q)_{\ii}^{-N} \\
        &= \frac{N!}{(q;q)_{\ii}^N} y^{-m} q^{(m-N)t+m/2} \frac{1-q^{Nt}}{1-q^t}.
    \end{align*}
    When $m \not\equiv 0 \pmod{N+1}$, by \eqref{Cm-r=n+1-i}, \eqref{Cm-Al,n+1} and the fact that
    $$[t^{\de(N)}t_i^{-m}t_{N+1}^r]g_{t,N+1}(0,q) = 0 \quad \text{if} \quad i \ne N+1-r,$$
we get
    \begin{align*}
        L_m(q) &= N! y^{-m}q^{m/2} (-1)^{tN} q^{[(N+2)t^2-Nt]/2+rt} [t^{\de(N)}t_{N+1-r}^{-m}t_{N+1}^r] g_{t,N+1}(0,q) \\
        &= N! y^{-m}q^{m/2} (-1)^{tN+1} q^{[(N+2)t^2-Nt]/2+rt} A_{N+1,t,N+1}(0,q) \\
        &= N! y^{-m}q^{m/2} (-1)^{tN+1} q^{[(N+2)t^2-Nt]/2+rt} (-1)^{Nt} q^{Nt+Nt(t-1)/2} (q;q)_{\ii}^{-N} \\
        &= -\frac{N!}{(q;q)_{\ii}^N} y^{-m} q^{mt+m/2}. \qedhere
    \end{align*}
\end{proof}

We close this paper with several remarks. So far we have completed the task of evaluating half-indices and Wilson lines for ${\rm SU}(N)_{-N-k}$ CS theories for the initial cases $k=0,1/2,1$ and confirmed some intriguing conjectures in \cite{OS-24JHEP}.  Okazaki and Smith \cite[Sections 5.4 and 5.5]{OS-24JHEP} also proposed some conjectural formulas for ${\rm SU}(N)_{-N-k}$ with $k=M$ and $M-1/2$. However, the corresponding formulas \cite[Eqs.~(5.35), (5.38) and (5.41)]{OS-24JHEP} are quite complicated and we are not sure whether they can be treated using the method here. Furthermore, in a subsequent work, they \cite{OS24} also considered line defect half-indices of 3d $\mathcal{N}=2$ supersymmetric CS theories with (special) unitary, symplectic, orthogonal and exceptional gauge groups and the companion theory with an adjoint chiral. A number of beautiful formulas expressing half-indices as eta products have been proposed there. We plan to discuss these formulas in forthcoming papers.


\subsection*{Acknowledgements}
This work was supported by the National Key R\&D Program of China (Grant No.\ 2024YFA1014500).

\end{document}